\definecolor{darkgreen}{rgb}{0.0,0,0.9}
\newcommand\reallywidehat[1]{%
\savestack{\tmpbox}{\stretchto{%
  \scaleto{%
    \scalerel*[\widthof{\ensuremath{#1}}]{\kern.1pt\mathchar"0362\kern.1pt}%
    {\rule{0ex}{\textheight}}
  }{\textheight}%
}{2.4ex}}%
\stackon[-6.9pt]{#1}{\tmpbox}%
}
\DeclareSymbolFont{rsfs}{U}{rsfs}{m}{n}
\DeclareSymbolFontAlphabet{\mathscrsfs}{rsfs}
\numberwithin{equation}{section}
\newtheoremstyle{myexample} 
    {\topsep}                    
    {\topsep}                    
    {\rm }                   
    {}                           
    {\bf }                   
    {.}                          
    {.5em}                       
    {}  
\newtheoremstyle{myremark} 
    {\topsep}                    
    {\topsep}                    
    {\rm}                        
    {}                           
    {\bf}                        
    {.}                          
    {.5em}                       
    {}  
\definecolor{darkgreen}{rgb}{0.0, 0.5, 0.0}
\theoremstyle{plain}
\newtheorem{theorem}{Theorem}[section]
\newtheorem{proposition}[theorem]{Proposition}
\newtheorem{lemma}[theorem]{Lemma}
\newtheorem{corollary}[theorem]{Corollary}
\theoremstyle{definition}
\newtheorem{assumption}[theorem]{Assumption}
\theoremstyle{myremark}
\newtheorem{remark}{Remark}[section]
\theoremstyle{myexample}
\newtheorem{example}[remark]{Example}
\newcommand{\bea}{\begin{eqnarray}}
\newcommand{\eea}{\end{eqnarray}}
\newcommand{\<}{\langle}
\renewcommand{\>}{\rangle}
\newcommand{\E}{{\mathbb E}}
\def\Acc{\, {\sf Err}}
\def\entro{{\sf h}}
\def\enc{{\sf Enc}}
\def\zip{{\sf Z}}
\def\row{{\mbox{\tiny\rm row}}}
\def\col{{\mbox{\tiny\rm col}}}
\def\slat{{\mbox{\tiny\rm lat}}}
\def\snats{{\mbox{\tiny\rm nats}}}
\def\uN{\underline{N}}
\def\oN{\overline{N}}
\def\bR{{\boldsymbol R}}
\def\bU{{\boldsymbol U}}
\def\bV{{\boldsymbol V}}
\def\bX{{\boldsymbol X}}
\def\hbX{\hat{\boldsymbol X}}
\def\hu{\hat{u}}
\def\hv{\hat{v}}
\def\opsi{\overline{\psi}}
\def\bPsi{{\boldsymbol \Psi}}
\def\bL{{\boldsymbol L}}
\def\sTV{\mbox{\tiny \rm TV}}
\def\sinit{\mbox{\tiny \rm init}}
\def\Unif{{\sf Unif}}
\def\eps{{\varepsilon}}
\def\rE{{\rm E}}
\def\cuP{\mathscrsfs{P}}
\def\cuQ{\mathscrsfs{Q}}
\def\plain{{\sf plain}}
\def\LZ{{\sf LZ}}
\def\ANS{{\sf ANS}}
\def\AC{{\sf AC}}
\def\concat{{\oplus}}
\def\oH{\overline{H}}
\def\avp{\overline{p}}
\def\tP{\tilde{P}}
\def\tpsi{\tilde{\psi}}
\def\tphi{\tilde{\phi}}
\def\btau{{\boldsymbol{\tau}}}
\def\bsigma{{\boldsymbol{\sigma}}}
\def\bxi{{\boldsymbol{\xi}}}
\def\bfe{{\boldsymbol{e}}}
\def\bZ{{\boldsymbol{Z}}}
\def\bSigma{{\boldsymbol{\Sigma}}}
\def\bQ{{\boldsymbol{Q}}}
\def\bM{{\boldsymbol{M}}}
\def\bV{{\boldsymbol{V}}}
\def\hq{{\widehat{q}}}
\def\hQ{{\widehat{Q}}}
\def\hbu{{\boldsymbol{\widehat{u}}}}
\def\hbv{{\boldsymbol{\widehat{v}}}}
\def\cF{{\mathcal F}}
\def\cG{{\mathcal G}}
\def\cT{{\mathcal T}}
\def\cX{{\mathcal X}}
\def\SBM{\mbox{\tiny\rm SBM}}
\def\op{\mbox{\tiny\rm op}}
\def\init{\mbox{\tiny\rm init}}
\def\naturals{{\mathbb N}}
\def\reals{{\mathbb R}}
\def\sT{{\sf T}}
\def\bq{{\boldsymbol{q}}}
\def\bv{{\boldsymbol{v}}}
\def\bz{{\boldsymbol{z}}}
\def\bx{{\boldsymbol{x}}}
\def\ba{{\boldsymbol{a}}}
\def\bb{{\boldsymbol{b}}}
\def\tbb{\tilde{\boldsymbol{b}}}
\def\bA{\boldsymbol{A}}
\def\bB{\boldsymbol{B}}
\def\obA{\boldsymbol{\overline{A}}}
\def\obB{\boldsymbol{\overline{B}}}
\def\tbQ{\boldsymbol{\tilde{Q}}}
\def\tbX{\tilde{\boldsymbol{X}}}
\def\tbY{\tilde{\boldsymbol{Y}}}
\def\bX{\boldsymbol{X}}
\def\bY{\boldsymbol{Y}}
\def\bW{\boldsymbol{W}}
\def\prob{{\mathbb P}}
\def\E{{\mathbb E}}
\def\<{\langle}
\def\>{\rangle}
\def\ed{\stackrel{{\rm d}}{=}}
\def\cL{{\cal L}}
\def\bw{{\boldsymbol{w}}}
\def\cE{{\mathcal E}}
\def\ro{q_{\mbox{\small\rm r}}}
\def\co{q_{\mbox{\small\rm c}}}
\def\hro{\hat{q}_{\mbox{\small\rm r}}}
\def\hco{\hat{q}_{\mbox{\small\rm c}}}
\def\Rate{{\sf R}}
\def\DRR{{\sf DRR}}
\def\hr{\hat{r}}
\def\hc{\hat{c}}
\def\hp{\hat{p}}
\def\hhp{\overline{p}}
\def\bD{{\boldsymbol{D}}}
\def\bu{{\boldsymbol{u}}}
\def\b0{{\boldsymbol{0}}}
\def\tba{{\boldsymbol{\tilde{a}}}}
\def\tbb{{\boldsymbol{\tilde{b}}}}
\def\Var{{\rm Var}}
\def\sinit{\mbox{\tiny\rm init}}
\def\bfone{{\boldsymbol 1}}
\def\rP{{\rm P}}
\def\cA{{\mathcal A}}
\def\cZ{{\mathcal Z}}
\def\cB{{\mathcal B}}
\def\ess{{\rm ess}}
\def\conc{\oplus}
\def\elias{{\sf elias}}
\def\len{{\sf len}}
\def\bentro{{\sf h}}
\def\Perm{{\mathfrak S}}
\def\vec{{\sf vec}}
\title{Compressing Tabular Data via Latent Variable Estimation}
\author{Andrea Montanari${}^*$ \and Eric Weiner\thanks{Project N}}
\begin{document}

\title{Compressing Tabular Data via Latent Variable Estimation}

\author{Andrea Montanari${}^*$ \and Eric Weiner\thanks{Project N}}

\maketitle

\begin{abstract}
Data used for analytics
and machine learning often take the form of tables with categorical  entries. 
We introduce a family of lossless compression algorithms for such data that
proceed in four steps: $(i)$~Estimate latent variables associated to rows and columns; 
$(ii)$~Partition the table in blocks according to the row/column latents;
$(iii)$~Apply a sequential (e.g. Lempel-Ziv) coder to each of the
 blocks; $(iv)$~Append a compressed encoding of the latents.

We evaluate it on several benchmark datasets,
and study optimal compression in a probabilistic model for that tabular data, whereby
latent values are independent and table entries are conditionally independent given the latent 
values. We prove that the model has a well defined entropy rate 
 and satisfies an asymptotic equipartition property.
We also prove that classical compression schemes such as Lempel-Ziv and finite-state encoders
do not achieve this rate. On the other hand, the latent estimation strategy outlined
above achieves the optimal rate.
\end{abstract}

\section{Introduction}
\label{sec:Introduction}

Classical theory of lossless compression \cite{cover2006elements,salomon2004data} assumes 
that data take the form
of a random vector $\bX^N =(X_1,X_2,\dots,X_N)$ of length $N$ with entries in a finite alphabet
$\cX$. Under suitable ergodicity assumptions, the entropy per letter converges to a limit
$h := \lim_{N\to\infty}H(\bX^N)/N$ (Shannon-McMillan-Breiman theorem). Universal coding schemes
(e.g. Lempel-Ziv coding) do not requite knowledge of the distribution of $\bX^N$,
and can encode such a sequence without information loss using (asymptotically) $h$ bits per symbol.

While this theory is mathematically satisfying, its modeling assumptions
(stationarity, ergodicity) are unlikely to be satisfied in many applications. 
This has long been recognized by practitioners. 
The main objective of this paper is to investigate this fact mathematically in the context 
of tabular data, characterize the gap to optimality of classical schemes, and 
describe an asymptotically optimal algorithm that overcomes their limitations.

We consider a data table with $m$ rows and $n$ columns and entries in $\cX$,   
$\bX^{m,n}\in\cX^{m\times n}$  $\bX^{m,n} :=(X_{ij})_{i\le m,j\le n}$. 
The standard approach to such data is: $(i)$~Serialize, e.g. in row-first order, 
to form a vector of length $N=mn$, $\bX^N= (X_{11},X_{12},\dots , X_{1n}, X_{21},\dots, X_{mn})$;
$(ii)$~Apply a standard compressor (e.g., Lempel-Ziv) to this vector.

We will show, both empirically and mathematically that this standard approach can be 
suboptimal in the sense of not achieving the optimal compression rate.
This happens even in the limit of large tables, as long as the number of columns and rows 
are polynomially related (i.e. $n^{\eps}\le m\le n^{M}$ for some small constant $\eps$
and large constant $M$). 

We advocate an alternative approach:
\begin{enumerate}
\item Estimate row/column latents $\bu^m = (u_1,\dots,u_m)\in \cL^m$,
$\bv^n= (v_1,\dots,v_n)\in\cL^n$, with $\cL$ a finite alphabet.
\item Partition the table in blocks according to the row/column latents,
Namely, for $u,v\in\cL$, define
\begin{align}
\bX(u,v) = \vec\big(X_{ij}:\; u_i=u,v_j=v\big)\, .
\end{align}
where $\vec(\bM)$ denote the serialization of matrix $\bM$ (either row-wise or column-wise).
\item Apply a base compressor (generically denoted by $\zip_{\cX}:\cX^*\to \{0,1\}^*$) to each block 
$\bX(u,v)$
\begin{align}
\bz(u,v) = \zip_{\cX}(\bX(u,v))\, ,\;\;\;\; \forall u,v\in\cL\, .
\end{align}
\item Encode the row latents and column latents using a possibly different 
compressor $\zip_{\cL}:\cX^*\to \{0,1\}^*$, to get $\bz_{\row}= \zip_{\cL}(\bu)$,
$\bz_{\col}= \zip_{\cL}(\bv)$. Finally output the concatenation (denoted by $\oplus$)
\begin{align}
\enc(\bX^{m,n}) = {\sf header} \conc  \bz_{\row} \conc \bz_{\col}\conc 
\bigoplus_{u,v\in \cL} \bz(u,v) \,.\label{eq:LatentGen}
\end{align}
Here ${\sf header}$ is a header that 
contains encodings of the lengths of subsequent segments.
\end{enumerate}
Note that encoding the latents can in general lead to a suboptimal 
compression rate. While this can be remedied with techniques such as bits-back
coding, we observed in our applications that this yielded limited improvement.
Our analysis shows that the rate improvement afforded by bits-back coding is only significant 
in certain special regimes.
We refer to Sections \ref{sec:Math} and \ref{sec:Discussion} for further discussion.

The above description leaves several design choices undefined, namely:
$(a)$~The latents estimation procedure at point 1;
$(b)$~The base compressor $\zip_{\cX}$ for the blocks $\bX(u,v)$;
$(c)$~The base compressor  $\zip_{\cL}$ for the latents.

We will provide details for a specific implementation in Section \ref{sec:Implementation},
alongside empirical evaluation in Section \ref{sec:Empirical}.
Section \ref{sec:Generative} introduces a probabilistic model for the data $\bX^{m,n}$,
and Section \ref{sec:Math} establishes our main theoretical results:
standard compression schemes are suboptimal on this model, while the above latents-based approach 
is asymptotically optimal.
Finally we discuss extensions in Section \ref{sec:Discussion}. 

\subsection{Related work}

The use of latent variables is quite prevalent in compression methods based on machine
learning and probabilistic modeling.
Hinton and Zemel \cite{hinton1993autoencoders} introduced the idea that stochastically generated
codewords (e.g., random latents) can lead to minimum description lengths via bits back coding.
This idea was explicitly applied to lossless compression 
using arithmetic coding in  \cite{frey1996free}, and ANS coding in
\cite{townsend2019practical,townsend2019hilloc}.

Compression via low-rank approximation is closely-related to our latents-based 
approach and has been studied in the past. An incomplete list of contributions includes
\cite{cheng2005compression}  (numerical analysis),
\cite{li2010tensor} (hyperspectral imaging),
\cite{yuan2005projective,hou2015sparse} (image processing),
\cite{taylor2013lossless} (quantum chemistry), \cite{phan2020stable} (compressing the gradient for distributed optimization),
 \cite{chen2021drone} (large language models compression).

The present paper contributes to this line of work, but departs from it in a number
of ways. $(i)$~We study lossless compression while earlier work is mainly centered
on lossy compression. $(ii)$~Most of the papers in this literature do not
precisely quantify compression rate: they do not `count bits.'  $(iii)$~We show empirically
an improvement in terms of lossless compression rate over state of the art.

Another related area is network compression:  simple graphs can be viewed as 
matrices with entries in $\{0,1\}$. In 
the case of graph compression, one is  interested only in such  matrices
up to graph isomorphisms. The idea of reordering the nodes of the network  and
exploiting similarity between nodes has been investigated in this context,
see e.g. \cite{boldi2004webgraph,chierichetti2009compressing,lim2014slashburn,besta2018survey}
However, we are not aware of results analogous to ours in this literature. 

To the best of our knowledge, our work is the first to prove that classical lossless compression
techniques do not achieve the ideal compression rate under a probabilistic model 
for tabular data. We characterize this ideal rate as well as the one
achieved by classical compressors, and prove that latents estimation can be used to close this gap.

\subsection{Notations}

We generally use boldface for vectors and uppercase boldface for matrices, 
without making any typographic distinction between numbers and random variables. 
When useful, we indicate by superscripts the dimensions of a matrix or a vector:
$\bu^m$ is a vector of length $m$, and $\bX^{m,n}$ is a 
matrix of dimensions $m\times n$.
For a string $\bv$ and  $a\le b$, we use $\bv_a^{b}=(v_a,\dots,v_b)$ to denote the substring of $\bv$.

If $X,Y$ are random variables on a common probability space $(\Omega,\cF,\prob)$,
we denote by $H(X)$, $H(Y)$ their entropies, $H(X,Y)$ their joint entropy,
$H(X|Y)$ the conditional entropy of $X$ given $Y$. We will overload this notation:
if $p$ is a discrete probability distribution, we denote by $H(p)$ its entropy.
Unless stated otherwise, all entropies will be measured in bits.  For $\eps\in [0,1]$,
$\entro(\eps):=-\eps\log_2\eps-(1-\eps)\log_2(1-\eps)$.

\section{Implementation}
\label{sec:Implementation}

\subsection{Base compressors}

We implemented the following two options for the base compressors $\zip_{\cX}$ 
(for data blocks) and $\zip_{\cL}$ (for latents).

\paragraph{Dictionary-based compression (Lempel-Ziv, LZ).}
For this we used Zstandard (ZSTD) Python bindings to the  C implementation using
 the library {\sf zstd}, with level 12. While ZSTD can
 use run-length encoding schemes or literal encoding schemes,  we verified that in
  in this case ZSTD always use its LZ algorithm. 
  
The LZ algorithm in ZSTD is somewhat more sophisticated than the 
plain LZ algorithm used in our proofs. In particular it includes \cite{collet2018zstandard}
Huffman coding of literals 0-255 and entropy coding of the LZ stream.
Experiments with other (simpler) LZ implementations yielded similar results. 
We focus on ZSTD because of its broad adoption in industry.

\paragraph{Frequency-based entropy coding (ANS).} For each data portion
(i.e each block $\bX(u,v)$ and each of the row latents
$\bu$ and column latents $\bv$) compute empirical frequencies of the corresponding symbols.
Namely for all $u,v\in\cL$, $x\in\cX$, we compute 
\begin{align*}
&\hQ(x|u,v) := \frac{1}{N(u,v)}\sum_{i:u_i=u}\sum_{j:v_j=v}\bfone_{x_{ij}=x}\,,\\
&\hro(u) := \frac{1}{m}\sum_{i=1}^m\bfone_{u_{i}=u}\, , 
\;\; \hco(v) := \frac{1}{n}\sum_{i=1}^n\bfone_{v_{i}=v}\, , 
\end{align*}
where $N(u,v)$ is the number of $i\le m$, $j\le n$ such that $u_i=u$, $v_j = v$.
We then apply ANS coding \cite{duda2009asymmetric} to each block $\bX(u,v)$ modeling its entries as independent with
distribution $\hQ(\,\cdot\, |u,v)$, and to the latents $\bu^m$, $\bv^n$ using
the distributions $\hro(\,\cdot\,)$, $\hco(\,\cdot\,)$. We separately encode these 
counts as long integers.

Since our main objective was to study the impact of learning latents, we did not
try to optimize these base compressors.

\subsection{Latent estimation}
\label{sec:LatentEst}

We implemented latents estimation using a spectral clustering 
algorithm outlined in the pseudo-code above.

\vspace{0.2cm}
\begin{algorithm}[tb]
\caption{Spectral latents estimation}
\label{alg:spectral_est}
\begin{algorithmic}
\STATE {\bfseries Input:} {Data matrix $\bX^{m,n}\in\cX^{m\times n}$\\ latents range $k=|\cL|$; map $\psi:\cX\to\reals$}
\STATE {\bfseries Output:} {Factors $\bu^m\in \cL^m$, $\bv^n\in \cL^n$}
\STATE
\STATE Compute top $(k-1)$ singular vectors of $\bM^{m,n} = \psi(\bX^{m,n})$, $(\tba_i)_{i\le k-1}$,
$(\tbb_i)_{i\le k-1}$\;
\STATE Stack singular vectors in matrices $\bA = [\tba_1|\dots |\tba_{k-1}]\in\reals^{m\times (k-1)}$, $\bB = [\tbb_1|\cdots |\tbb_{k-1}]\in\reals^{n\times (k-1)}$;
\STATE Let $(\ba_i)_{i\le m}$, $\ba_i\in\reals^{k-1}$ be the rows of $\bA$;  $(\bb_i)_{i\le n}$,
$\bb\in\reals^{k-1}$ the rows of $\bB$\;

\STATE Apply KMeans to $(\ba_i)_{i\le m}$; store the cluster labels as vector $\bu^m$\;
\STATE Apply KMeans to $(\bb_i)_{i\le n}$; store the cluster labels as vector $\bv^n$\;
\STATE {\bfseries return} $\bu^m$, $\bv^n$

\end{algorithmic}
\end{algorithm}

A few remarks are in order. 
The algorithm encodes the data matrix $\bX^{m,n}$ as an $m\times n$ real-valued matrix
$\bM^{m,n}\in\reals^{m\times n}$ using a map  $\psi:\cX\to\reals$. In our experiments
we did not optimize this map and encoded the elements of $\cX$ as $0,1,\dots,|\cX|-1$
arbitrarily, cf. also Section \ref{sec:LatentProof}
 
The singular vector calculation turns out to be the most time consuming
part of the algorithm. Computing approximate singular vectors via power iteration
requires in this case of the order of  $\log(m\wedge n)$ matrix vector multiplications for each of $k$
vectors\footnote{This complexity assumes that the leading $k-1$ singular values are separated by a 
gap from the others. This is the regime in which the spectral clustering algorithm is successful.}. 
This amounts to $mnk\log(m\wedge n)$ operations, which is  larger
than the time needed to compress the blocks or to run KMeans. 
A substantial speed-up is obtained
via row subsampling, cf. Section \ref{sec:Discussion}

For the clustering step we used KMeans with $k$ clusters, initialized randomly.
More specifically, we use the {\sf scikit-learn} implementation  via 
{\sf sklearn.cluster.KMeans}. 
The overall latent estimation approach is quite basic and, in particular,
it does not try to estimate or make use of the model $Q(\,\cdot\,|u,v)$. 
\section{Empirical evaluation}
\label{sec:Empirical}

We evaluated our approach on tabular datasets with different origins.
Our objective is to assess the impact of using latents in reordering
columns and rows, so we will not attempt to achieve the best possible 
data reduction rate (DRR) on each dataset, but rather to compare compression 
with latents and without in as-uniform-as-possible fashion.

Since our focus is on categorical variables, we preprocess the data to fit
in this setting as described in Section \ref{sec:Preprocessing}.
This preprocessing step might involve dropping some of the columns of the original
table. We denote the number of columns after preprocessing by  $n$.

We point out two simple improvements we introduce in the implementation:
$(i)$~We use different sizes for rows latent alphabet and column latent alphabet
 $|\cL_r|\neq |\cL_c|$;
 $(ii)$~We choose  $|\cL_r|$, $|\cL_c|$ by optimizing  the compressed size .

\subsection{Datasets}

More details on these data can be found in Appendix \ref{app:Data}:

\noindent\emph{Taxicab.} A table with $m=62,495$, $n=18$ \cite{nycTLC2022taxi}. 
LZ: $|\cL_r| = 9$, $|\cL_c|= 15$. ANS: $|\cL_r| =5$, $|\cL_c|=14$.

\noindent\emph{Network.} Four social networks from \cite{snapnets} with
 $m=n\in\{333, 747, 786, 1187\}$. LZ and ANS: $|\cL_r| = 5$, $|\cL_c|= 5$.

\noindent\emph{Card transactions.}  A table with $m=24,386,900$  and 
$n=12$ \cite{ibm2019paper}. LZ and ANS: $|\cL_r| = 3$, $|\cL_c|= n$.

\noindent\emph{Business price index.}  A  table with 
$m=72,750$ and $n=10$  \cite{nz2022bpi}. LZ: $|\cL_r| = 6$, $|\cL_c|= 7$.
 ANS:  $|\cL_r| = 2$, $|\cL_c|=6$.

\noindent\emph{Forest.} A table from the UCI data repository with
$m=581,011$,  $n=55$  \cite{Dua:2019}.   LZ and ANS: $|\cL_r| = 6$, $|\cL_c|= 17$.

\noindent\emph{US Census.} Another table from \cite{Dua:2019} 
with $m=2,458,285$ and $n=68$.  LZ and ANS: $|\cL_r| = 9$, $|\cL_c|= 68$.

\noindent\emph{Jokes.} A collaborative filtering dataset with  $m=23,983$ rows and $n=101$
 \cite{goldberg2001paper,goldberg2001data}. 
 LZ: $|\cL_r| = 2$, $|\cL_c|= 101$.
 ANS:  $|\cL_r| = 8$, $|\cL_c|=8$.

\subsection{Results}

Given a lossless encoder $\phi:\cX^{m\times n}\to\{0,1\}^*$, we define its compression rate 
and data reduction rate (DRR) as
\begin{align}
\label{eq:RateDef}
\Rate_{\phi}(\bX^{m,n}):= \frac{\len(\phi(\bX^{m,n}))}{mn\log_2|\cX|}, \nonumber \\  
\;\;\;\;\; \;\;\DRR_{\phi}(\bX^{m,n}) := 1-\Rate_{\phi}(\bX^{m,n})\, .
\end{align}
(Larger DRR means better compression.)

The DRR of each algorithm is reported in Table \ref{table:DRR}. 
For the table of results, {\bf LZ} refers to row-major order ZSTD, {\bf LZ (c)} 
refers to column-major order ZSTD. We run
 KMeans on the data 5 times, with random initializations finding the DRR each time and 
 reporting the average. 

\begin{table*}[t]
\centering
\caption{Data reduction rate (DRR) achieved by classical and latent-based 
compressors on real tabular data.}\label{table:DRR}
\begin{tabular}{|c|c|c|c|c|c|c|}
\hline
{\bf Data} &  {\bf Size} &  {\bf LZ} &  {\bf LZ (c)} & {\bf ANS} & 
{\bf Latent $+$ LZ} &  {\bf Latent $+$ ANS}\\
\hline
\hline
Taxicab & 380 KB & 0.41 & 0.44 & 0.43 & 0.48  & $\boldsymbol{0.54}$\\
\hline
FB Network 1 & 13.6 KB & 0.63  & 0.63 & 0.76 & 0.58  & $\boldsymbol{0.78}$\\
\hline
FB Network 2 &  68.1 KB & 0.44 & 0.44 & 0.57 & 0.64  & $\boldsymbol{0.75}$\\
\hline
FB Network 3 & 75.4 KB & 0.59 & 0.59 & 0.75 & 0.69  & $\boldsymbol{0.80}$\\
\hline
GP Network 1 & 172 KB & 0.46 & 0.46 & 0.65 & 0.58  & $\boldsymbol{0.70}$\\
\hline
Forest (s) & 6.10 MB & 0.29 & 0.38 & 0.47 & 0.41  & $\boldsymbol{0.49}$\\
\hline
Card Transactions (s) & 123 MB & 0.03 & 0.21 &  0.29 & 0.20  & $\boldsymbol{0.30}$\\
\hline
Business price index (s) & 153 KB & $-0.03$ & 0.20 & 0.28 & 0.25  & $\boldsymbol{0.32}$\\
\hline
US Census & 43.9 MB & 0.38 & 0.31 & 0.47 & 0.52  & $\boldsymbol{0.62}$\\
\hline
Jokes & 515 KB & $-0.21$ & $-0.15$ & 0.07 & $-0.03$ & $\boldsymbol{0.14}$\\
\hline
\end{tabular}
\end{table*}

We make the following observations on the empirical results of 
Table \ref{table:DRR}. First, Latent $+$ ANS encoder achieves systematically the best DRR. Second,
the use of latent in several cases yields a DRR improvement of $5\%$ (of the uncompressed size)
or more. Third, as intuitively natural, this improvement appears to be larger for 
data with a large number of columns (e.g. the network data).

The analysis of the next section provides further support for these findings. 

\section{A probabilistic model}
\label{sec:Generative}

\begin{figure*}[t]
\centering
\includegraphics[width=0.8\textwidth]{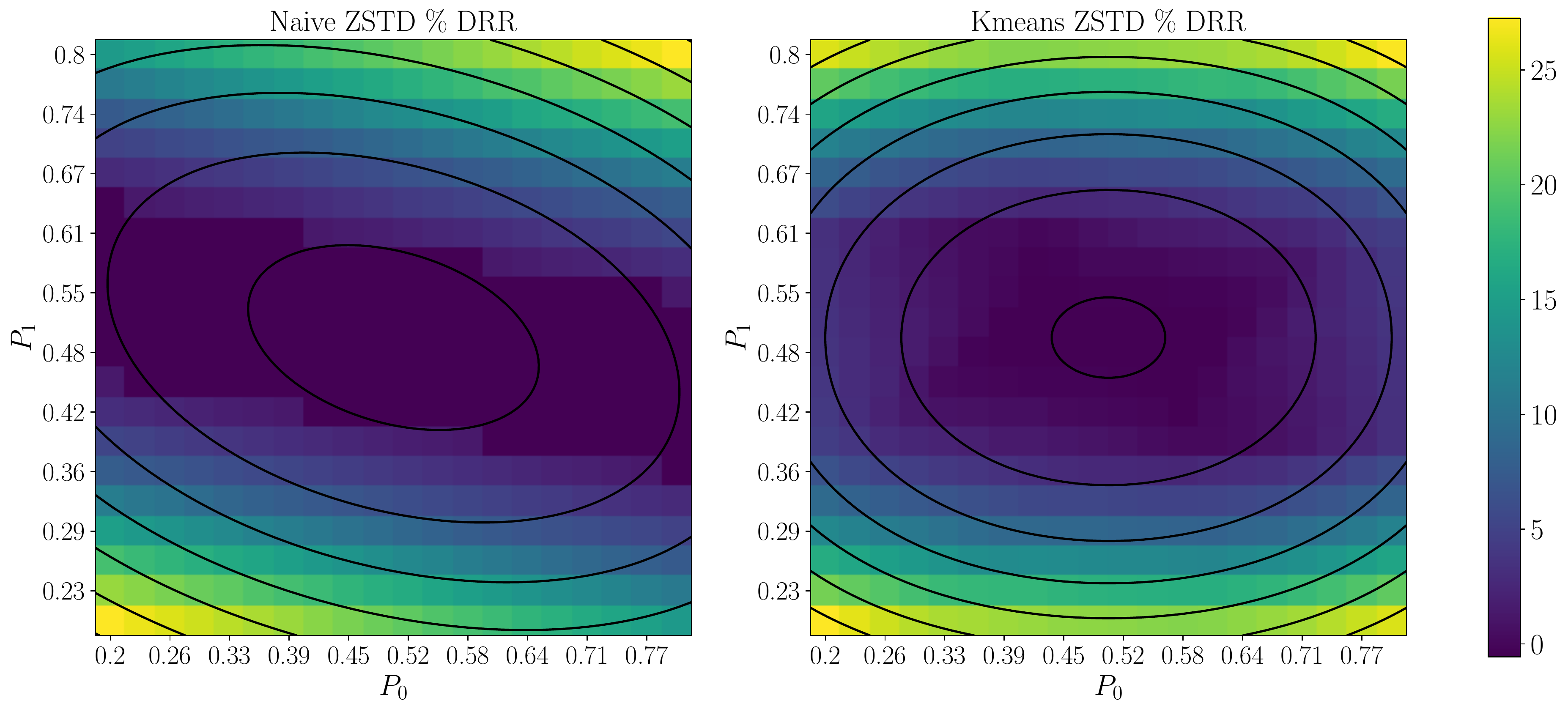} 
%
\includegraphics[width=0.8\textwidth]{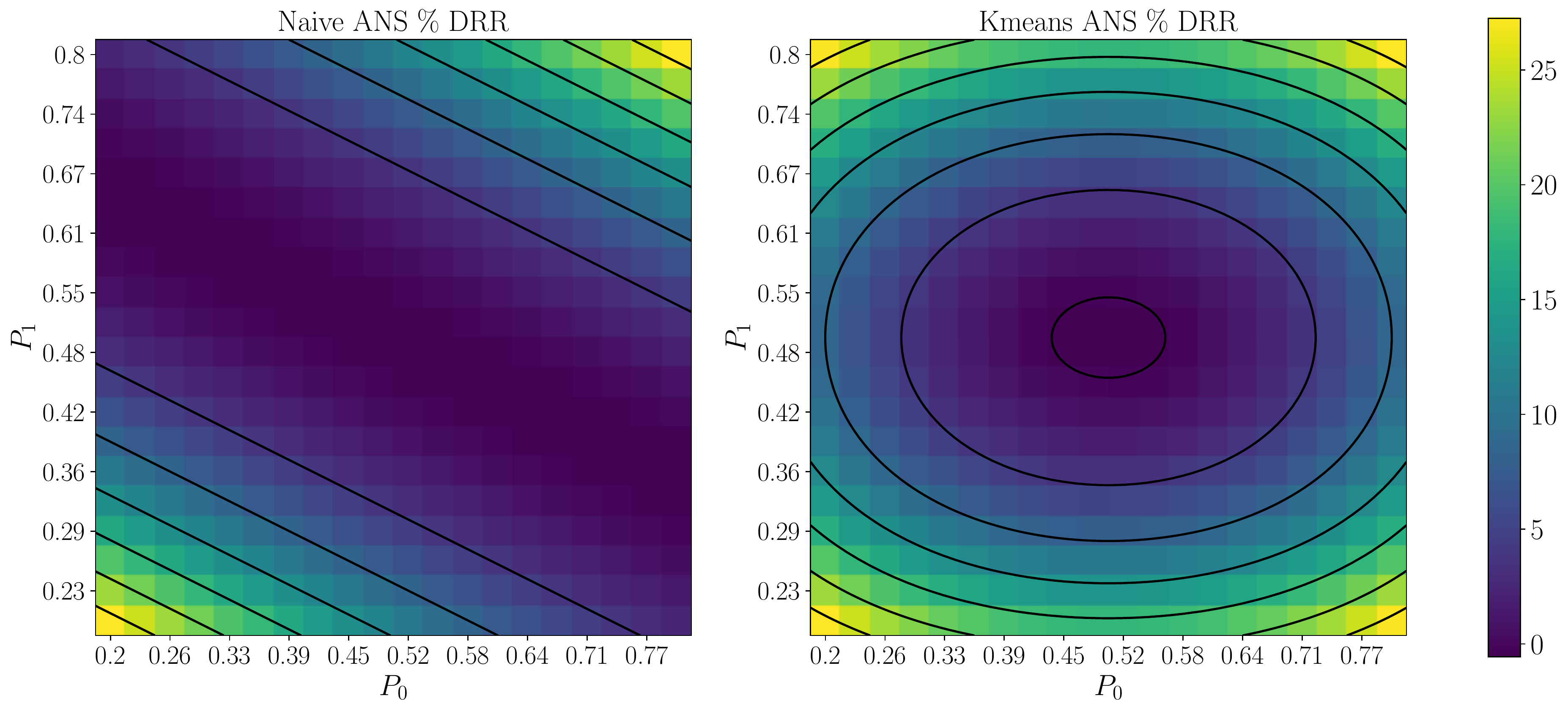} 
\caption{Comparing data reduction rate of naive  coding and latent-based 
coding for synthetically generated data. Top: ZSTD base compressor.
Bottom: ANS base compressor. Contour lines correspond to the 
compression rate predicted by the theorems of Section \ref{sec:Math} 
(coinciding with optimal rate for latent-based encoders).}\label{fig:Synthetic}
\end{figure*}
In order to better understand the limitations of classical approaches, and the
optimality of latent-based compression, we introduce a probabilistic model for
the table $\bX^{m,n}\in\cX^{m\times n}$.
We assume the true latents  $(u_i)_{i\le m}$, $(v_j)_{j\le n}$ 
to be independent random variables with 
\begin{align}
\prob(u_i = u)= \ro(u) \, ,\;\;\;\; \prob(v_i = v)= \co(v)\, .
\end{align}
We assume that the entries $(X_{ij})_{i\le m,j\le n}$ are conditionally independent given $\bu^m = (u_i)_{i\le m}$
$\bv^n = (v_j)_{j\le n}$, with 
\begin{align}
\prob\big(X_{ij}=x\big| \bu^m,  \bv^n\big) = Q(x|u_i, v_j)\, .
\end{align}
The distributions $\ro, \co$, and conditional distribution $Q$ are parameters of the model
(a total of $2(|\cL|-1)+|\cL|^2(|\cX|-1)$ real parameters). We will write
$(\bX^{m,n},\bu^m,\bv^n)\sim \cT(Q,\ro,\co;m,n)$ to indicate that the triple 
$(\bX^{m,n},\bu^m,\bv^n)$ is distributed
according to the model.
\begin{remark}
Some of our statements will be non-asymptotic, in which case  $m$, $n$, $\cX$, $\cL$, 
$Q$, $\ro$, $\co$ are fixed. Others will be of asymptotic. In the latter case, we have in mind
a sequence of problems indexed by $n$. In principle, we could write
$m_n$,  $\cX_n$, $\cL_n$, $Q_n$, $q_{\mbox{\small\rm r},n}$, $q_{\mbox{\small\rm c},n}$ to emphasize the fact that these quantities
depend on $n$.
However, we will typically omit these
subscripts.
\end{remark}

\begin{example}[Symmetric Binary Model]\label{ex:SBM}
As a toy example, we will use the following Symmetric Binary Model (SBM) 
which parallels the symmetric stochastic block model for community detection
\cite{holland1983stochastic}. 
We take $\cL = [k]:=\{1,\dots,k\}$, $\cX=\{0,1\}$, $\ro = \co= \Unif([k])$ (the uniform
distribution over $[k]$) and
\begin{align}
Q(1|u,v) = \begin{cases}
p_1 & \mbox{ if $u=v$,}\\
p_0  & \mbox{ if $u\neq v$.}\\
\end{cases}
\end{align}
We will write $(\bX^{m,n},\bu^m,\bv^n)\sim \cT_{\SBM}(p_0,p_1,k;m,n)$ when this distribution is
used. 

Figure \ref{fig:Synthetic} reports the results
of simulations within this model,  for ZSTD and ANS base compressors.
In this case $m=n=1000$, $k=3$, and we average DRR values over $4$ realizations.
Appendix \ref{app:Simulations} reports additional simulations under the same model for $k\in \{5,7\}$:
the results are very similar to the ones of Figure \ref{fig:Synthetic}.
As expected, the use of latents is irrelevant along the line $p_1\approx p_0$ (in this case,
the latents do not impact the distribution of $X_{ij}$). However, it becomes important when $p_1$ and $p_0$ 
are significantly different.

The figures also report contour lines of the theoretical predictions for the asymptotic
DRR of various compression algorithms (cf. Example \ref{example:SBM-rate}).
The agreement is excellent.
\end{example}

\section{Theoretical analysis}
\label{sec:Math}

In this section we present our theoretical results on compression rates
under the model $\cT(Q,\ro,\co,k;m,n)$ introduced above. 
We first characterize the optimal compression rate in Section \ref{sec:Ideal},
then prove that standard compression methods fail to attain this goal in 
Section \ref{sec:Failure}, and finally show that latent-based compression does in Section
 \ref{sec:LatentProof}. Proofs are deferred to Appendices
 \ref{sec:ProofIdeal}, \ref{sec:FiniteState}, \ref{sec:ProofLZ}, \ref{sec:ProofLatent}.

Throughout, we denote by $(X,U,V)$ a triple with joint 
 distribution $\prob(X=x,U=u,V=v) =Q(x|u,v)\ro(u)\co(v)$ (this is the same as the joint 
 distribution of $(X_{ij},u_i,v_j)$ for fixed $i,j$). 
 
 \subsection{Ideal compression}
 \label{sec:Ideal}
 
 Our first lemma provides upper and lower bounds on the entropy per symbol
 $H(\bX^{m,n})/mn$.
\begin{lemma}\label{lemma:EntropyTables}
Defining $H_{m,n}^+(X|U,V):= H(X|U,V) + 
 \frac{1}{n}H(U)+\frac{1}{m}H(V)$, 
we have 
\begin{equation}
 H(X|U,V) \le \frac{1}{mn}H(\bX^{m,n}) \le H^+_{m,n}(X|U,V)\, .
 \label{eq:SimpleEntroBD}
\end{equation}
Further, for any estimators $\hbu:\cX^{m\times n}\to\cL^m$, $\hbv:\cX^{m\times n}\to\cL^n$,
let  $\Acc_U := \min_{\pi\in \Perm_{\cL}}\sum_{i=1}^m \bfone_{\hu_i\neq \pi(u_i)}/m$,
 $\Acc_V := \min_{\pi\in \Perm_{\cL}}\sum_{i=1}^n \bfone_{\hv_i\neq \pi(v_i)}/n$
 ($\min$ over permutations of  $\cL$),
letting $\eps_U:=\E \Acc_U$, $\eps_V:=\E \Acc_V$,
we have
\begin{equation}
 H^+_{m,n}(X|U,V) - \delta_{m,n} \le  \frac{1}{mn}H(\bX^{m,n}) \le 
 H^+_{m,n}(X|U,V)\, . \label{eq:Fano}
\end{equation}
where $\delta_{m,n}:=\delta(\eps_U)/n+\delta(\eps_V)/m$ and
$\delta(\eps) :=\bentro(\eps)+\eps\log(|\cL|-1)$.
\end{lemma}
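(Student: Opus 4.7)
The plan is to compute the joint entropy $H(\bX^{m,n},\bu^m,\bv^n)$ exactly, and then extract both displayed inequalities from it. By the chain rule,
\[
H(\bX^{m,n},\bu^m,\bv^n) \;=\; H(\bu^m) + H(\bv^n) + H(\bX^{m,n}\,|\,\bu^m,\bv^n).
\]
Independence of the latents yields $H(\bu^m)=mH(U)$ and $H(\bv^n)=nH(V)$; conditional independence of entries with $X_{ij}\,|\,u_i,v_j\sim Q(\,\cdot\,|\,u_i,v_j)$ yields $H(\bX^{m,n}\,|\,\bu^m,\bv^n)=mn\,H(X|U,V)$. Combining gives $H(\bX^{m,n},\bu^m,\bv^n)=mn\,H^+_{m,n}(X|U,V)$, from which \eqref{eq:SimpleEntroBD} follows immediately: the upper bound by dropping $\bu^m,\bv^n$, and the lower bound by $H(\bX^{m,n})\ge H(\bX^{m,n}\,|\,\bu^m,\bv^n)=mn\,H(X|U,V)$.

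Turning to \eqref{eq:Fano}, I would rewrite the same identity as $H(\bX^{m,n})=mn\,H^+_{m,n}(X|U,V)-H(\bu^m,\bv^n\,|\,\bX^{m,n})$, which reduces the problem to the upper bound $H(\bu^m,\bv^n\,|\,\bX^{m,n})\le m\,\delta(\eps_U)+n\,\delta(\eps_V)$. Subadditivity splits this into $H(\bu^m\,|\,\bX^{m,n})+H(\bv^n\,|\,\bX^{m,n})$, which I would bound symmetrically, working out only the row term.

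The principal obstacle is that $\Acc_U$ is defined via a minimum over permutations of $\cL$, so the statistic $\hbu$ need not agree with $\bu^m$ coordinate-wise; only after an a posteriori relabeling is it close. To handle this I would introduce the (random) optimizer $\pi^*$ realizing $\Acc_U$ and expand
\[
H(\bu^m\,|\,\bX^{m,n}) \;\le\; H(\pi^*) + H(\bu^m\,|\,\bX^{m,n},\pi^*) \;\le\; \log(|\cL|!) + H\big(\bu^m\,\big|\,(\pi^*)^{-1}(\hbu)\big),
\]
the last step by data processing, since $(\pi^*)^{-1}(\hbu)$ is measurable with respect to $(\bX^{m,n},\pi^*)$. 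By construction the coordinate error probabilities $p_i:=\prob(u_i\neq((\pi^*)^{-1}(\hbu))_i)$ average to $\eps_U$, so coordinate-wise Fano combined with concavity of $\bentro$ and linearity of $p\mapsto p\log(|\cL|-1)$ gives $\sum_i\delta(p_i)\le m\,\delta(\eps_U)$. The permutation overhead $\log(|\cL|!)$ is $O_\cL(1)$ and, after dividing by $mn$, is a lower-order correction that can be absorbed into $\delta_{m,n}$; the symmetric argument on $\bv^n$ completes the proof.
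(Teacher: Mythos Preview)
Your argument is essentially identical to the paper's: both compute $H(\bX^{m,n}\mid\bu^m,\bv^n)=mn\,H(X\mid U,V)$, derive \eqref{eq:SimpleEntroBD} from the chain rule (the paper phrases it via $I(\bX^{m,n};\bu^m,\bv^n)$ rather than the joint entropy, which is equivalent), and then handle \eqref{eq:Fano} by introducing the optimal permutation, paying $\log(|\cL|!)$ for it, and applying coordinate-wise Fano plus Jensen. The $\log(|\cL|!)/(mn)$ overhead you flag is present in the paper's derivation as well and is likewise not literally contained in the stated $\delta_{m,n}$, so your treatment is no looser than the original.
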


\begin{corollary} \label{coro:Ideal}
There exists
a lossless compressor $\phi$ whose rate (cf.Eq.~\eqref{eq:RateDef}) is
\begin{align}
\E\,\Rate_{\phi}(\bX^{m,n})\le \frac{1}{\log_2|\cX|} \Big\{ H^+_{m,n}(X|U,V) +\frac{1}{mn}\Big\}\, .\label{eq:IdealRate}
\end{align}
Further, for any lossless compressor $\phi$, 
$\E\,\Rate_{\phi}(\bX^{m,n})\ge H^+_{m,n}(X|U,V)  - \delta_{m,n}-2\log_2(mn)/mn$.
\end{corollary}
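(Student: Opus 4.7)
The overall strategy is to treat the whole matrix $\bX^{m,n}$ as a single random object drawn from a distribution over $\cX^{m\times n}$ and apply standard one-shot source coding bounds to it, then convert the resulting bounds on entropy into rate bounds via Lemma \ref{lemma:EntropyTables}. The two halves of the corollary are essentially the operational counterparts of the information-theoretic bounds \eqref{eq:SimpleEntroBD} and \eqref{eq:Fano}, respectively, so the bulk of the work is already done; only the standard gap between entropy and expected description length needs to be handled.

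For the upper bound I would build $\phi$ as a Shannon--Fano (or Huffman) prefix code for the distribution of $\bX^{m,n}$ on $\cX^{m\times n}$, using codeword lengths $\ell(\bx) = \lceil -\log_2 \prob(\bX^{m,n}=\bx)\rceil$. This is clearly lossless (injective, indeed prefix-free) and Kraft's inequality together with the usual ceiling argument gives
\begin{equation*}
\E[\len(\phi(\bX^{m,n}))] \;\le\; H(\bX^{m,n}) + 1.
\end{equation*}
Dividing by $mn\log_2|\cX|$ and invoking the upper half of \eqref{eq:SimpleEntroBD}, i.e.\ $H(\bX^{m,n})/mn \le H^+_{m,n}(X|U,V)$, yields exactly \eqref{eq:IdealRate}.

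For the lower bound, let $\phi$ be an arbitrary injective (lossless) map $\cX^{m\times n}\to \{0,1\}^*$. Since $\phi$ need not be prefix-free, I would first convert it to a prefix code $\tilde\phi$ by pre-pending an Elias-gamma (or equivalent self-delimiting) encoding of $\len(\phi(\bx))$, which adds at most $2\log_2\len(\phi(\bx))+O(1)$ bits. Because $\len(\phi(\bx))\le mn\log_2|\cX| + O(1)$ for the support-maximum codeword (otherwise the tail of the distribution can be truncated harmlessly), the overhead is at most $2\log_2(mn) + O(1)$. Applying Kraft--McMillan to $\tilde\phi$ gives $\E[\len(\tilde\phi(\bX^{m,n}))]\ge H(\bX^{m,n})$, hence
\begin{equation*}
\E[\len(\phi(\bX^{m,n}))] \;\ge\; H(\bX^{m,n}) - 2\log_2(mn) - O(1).
\end{equation*}
Dividing by $mn\log_2|\cX|$ and using the lower half of \eqref{eq:Fano}, namely $H(\bX^{m,n})/mn \ge H^+_{m,n}(X|U,V) - \delta_{m,n}$, produces the claimed lower bound.

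The only delicate point is the lower bound: one must be careful that the length overhead stays at $2\log_2(mn)/mn$ uniformly in the distribution, which forces the Elias-style self-delimitation (a plain ``append a zero'' trick is not enough). An alternative that avoids the length-encoding step altogether is the classical inequality $\E[\len(\phi(\bX^{m,n}))] \ge H(\bX^{m,n}) - \log_2(H(\bX^{m,n})+1) - \log_2 e$ for injective $\phi$, which gives the same $O(\log(mn))$ slack since $H(\bX^{m,n})\le mn\log_2|\cX|$. Either route completes the proof; the rest is bookkeeping.
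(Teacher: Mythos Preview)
Your proposal is correct and mirrors the paper's (implicit) argument. The paper does not spell out a separate proof of the corollary, but the text immediately after it points to Huffman/arithmetic coding for the upper bound, and the lower bound is meant to follow from Lemma~\ref{lemma:SourceCodingConverse} (stated in Appendix~C precisely for this purpose) applied with $\cA=\cX^{m\times n}$, which gives $\E[\len(\phi(\bX^{m,n}))]\ge H(\bX^{m,n})-\log_2\log_2(|\cX|^{mn}+2)$, combined with the lower half of Eq.~\eqref{eq:Fano}. Your ``alternative'' inequality $\E[\len(\phi)]\ge H-\log_2(H+1)-\log_2 e$ for injective maps is a sibling of Lemma~\ref{lemma:SourceCodingConverse} and produces the same $O(\log(mn))$ slack, so your second route coincides with the paper's.

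One caveat on your first lower-bound route: the assertion $\len(\phi(\bx))\le mn\log_2|\cX|+O(1)$ is false for an arbitrary injective $\phi$, and ``truncating the tail harmlessly'' is not a proof. A clean repair is to control the Elias overhead in expectation via Jensen, $\E[\log_2\len(\phi)]\le\log_2\E[\len(\phi)]$, and split on whether $\E[\len(\phi)]\le mn\log_2|\cX|$ (if not, the rate already exceeds $1$ and the bound is immediate). Simpler still is to abandon that route in favor of your second one or the paper's Lemma~\ref{lemma:SourceCodingConverse}, neither of which needs any codeword-length assumption.
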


\begin{remark}
The simpler bound  \eqref{eq:SimpleEntroBD} implies that the entropy per entry is
$H(X|U,V)+O(1/(m\wedge n))$. The operational interpretation of this result is that we  
should be able to achieve the same compression rate per symbol \emph{as if} the latents were
given to us.

The additional terms  $\frac{1}{n}H(U)+\frac{1}{m}H(V)$ in Eq.~\eqref{eq:Fano}
account for the additional memory required for the latents. The lower bound in 
Eq.~\eqref{eq:Fano} implies that, if the latents can be accurately estimated from the data 
$\bX^{m,n}$ (that is if $\eps_U$, $\eps_V$ are small), then this overhead is essentially unavoidable.
\end{remark}

The nearly ideal compression rate in Eq.~\eqref{eq:IdealRate} can be achieved
by Huffmann or arithmetic coding, and requires knowledge of the probability 
distribution of $\bX^{m,n}$. Under the these schemes, the length of the codeword
associated to $\bX^{m,n}$ is within constant number of bits from $-\log_2\rP(\bX^{m,n})$,
where $\rP(\bX_0) := \prob(\bX^{m,n}=\bX_0)$ is the probability mass function of 
the random table $\bX^{m,n}$ \cite{cover2006elements,salomon2004data}.
The next lemma implies that the length concentrates tightly around the entropy. 
\begin{lemma}[Asymptotic Equipartition Property]\label{lemma:AEP}
For $\bX_0\in\cX^{m\times n}$, let  $\rP(\bX_0) = \rP_{Q,\ro,\co;m,n}(\bX_0)$ the probability 
of  $\bX^{m,n}=\bX_0$ under model  $\bX^{m,n}\sim\cT(Q,\ro,\co;m,n)$. Assume there exists a constant $c>0$
such that $\min_{x\in \cX}\min_{u,v\in\cL}Q(x|u,v)\ge c$. Then there exists a constant $C$ 
(depending on $c$) such that the 
following happens.

For $\bX^{m,n}\sim\cT(Q,\ro,\co;m,n)$ and any 
$t\ge 0$  with probability at least $1-2\, e^{-t}$:
\begin{align}
\big|-\log\rP(\bX^{m,n}) - H(\bX^{m,n})\big|\le  C\sqrt{mn(m+n)}\, t\, .
\end{align}
\end{lemma}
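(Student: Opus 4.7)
The plan is to apply McDiarmid's bounded-differences inequality to the random variable $F := -\log \rP(\bX^{m,n})$, whose expectation equals $H(\bX^{m,n})$ by definition of entropy. The entries of $\bX^{m,n}$ are not jointly independent (they share the latents), so the preliminary step is to re-express $F$ in terms of independent variables.

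First, I introduce auxiliary noises $\bxi = (\xi_{ij})_{i\le m,\, j\le n}$ i.i.d.\ $\Unif([0,1])$ and independent of $(\bu^m,\bv^n)$, and write $X_{ij} = g(u_i,v_j,\xi_{ij})$ via the quantile transform of $Q(\,\cdot\,|u_i,v_j)$. Then $\tilde F(\bu^m,\bv^n,\bxi) := -\log \rP(\bX^{m,n})$ is a deterministic function of the $m+n+mn$ independent coordinates $(u_i)_{i\le m}$, $(v_j)_{j\le n}$, $(\xi_{ij})$.

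Next I control the sensitivity of $F$ to a single table entry. Expanding
\begin{equation*}
\rP(\bX_0) = \sum_{\bu,\bv}\prod_i \ro(u_i)\prod_j \co(v_j)\prod_{ij} Q(X_{0,ij}|u_i,v_j),
\end{equation*}
and using the lower bound $Q(x|u,v)\ge c$, flipping a single entry multiplies each summand by a factor in $[c,1/c]$; this range is inherited by the positive weighted sum, giving $|F(\bX') - F(\bX)| \le \log(1/c)=:L$ whenever $\bX'$ and $\bX$ differ in exactly one position. From this I derive bounded differences for $\tilde F$ along each independent coordinate: modifying $\xi_{ij}$ affects only $X_{ij}$ so the $\tilde F$-variation is at most $L$; modifying $u_i$ can alter the entire $i$-th row, hence at most $nL$; and modifying $v_j$ can alter the entire $j$-th column, hence at most $mL$. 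The corresponding sum of squared bounds is $m(nL)^2 + n(mL)^2 + mn\, L^2 = L^2\, mn(m+n+1)$.

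McDiarmid's inequality then yields $\prob(|\tilde F - \E\tilde F|\ge s) \le 2\exp(-2s^2/(L^2 mn(m+n+1)))$. Setting the right-hand side equal to $2e^{-t}$ produces $s \le C_0\sqrt{mn(m+n)\,t}$ for a constant $C_0 = C_0(c)$. Since $\sqrt{t}\le t$ for $t\ge 1$ and the statement ``probability $\ge 1-2e^{-t}$'' is vacuous for $t\le \ln 2$, the intermediate range $\ln 2 \le t \le 1$ is absorbed into a single constant, yielding the stated linear form $|F - H(\bX^{m,n})| \le C\sqrt{mn(m+n)}\,t$ with $C = C(c)$. The main obstacle is the second step: one must avoid the (incorrect) direct application of McDiarmid to the $mn$ coordinates of $\bX$, and correctly track the wide influence of each latent $u_i$ (resp.\ $v_j$) on an entire row (resp.\ column) — this is exactly what produces the $(m+n)$ factor. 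The assumption $Q\ge c$ is essential; without a pointwise lower bound, $\log\rP$ could exhibit unbounded single-entry fluctuations.
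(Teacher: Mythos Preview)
Your proof is correct. Both you and the paper begin the same way---representing $\bX^{m,n}$ as a deterministic function of the independent family $(\bu^m,\bv^n,\bxi)$ and bounding the single-entry oscillation of $-\log\rP$ by $L=\log(1/c)$ via the ratio argument---but you then diverge from the paper in how you pass to concentration.

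You apply McDiarmid directly to all $m+n+mn$ independent coordinates, with bounded differences $L$, $nL$, $mL$ for the $\xi$-, $u$-, and $v$-coordinates respectively. The paper instead proves a general concentration lemma for functions of the form $F(\bx(\bxi,\bsigma,\btau))$ by a three-stage martingale: it first concentrates $F$ around $\E_{\bxi}F$, then $\E_{\bxi}F$ around $\E_{\bxi,\bsigma}F$, then the latter around $\E F$, applying Freedman's inequality at each stage with separate variance proxies $V_*,V_1,V_2$ and increment bounds $B_*,B_1,B_2$. Under the hypothesis $Q\ge c$ all six quantities are $O_c(1)$, $O_c(n)$, $O_c(m)$, etc., and the paper's bound collapses to the same $C\sqrt{mn(m+n)}\,t$ you obtain.

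What the paper's longer route buys is the ``more general statement'' alluded to after Lemma~\ref{lemma:AEP}: by tracking variances rather than worst-case increments, their bound remains useful when $Q$ is not uniformly bounded below (e.g.\ sparse regimes), where your $L=\log(1/c)$ would blow up. Your argument is the cleanest proof of the lemma as stated, while theirs is designed to feed the $n$-dependent extensions mentioned in the text.
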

For the sake of simplicity, in the last statement we assume a uniform lower bound on 
 $Q(x|u,v)$. While such a lower bound holds without loss of generality
 when $Q$ is independent of $m,n$ (symbols with zero probability can be dropped), it might
  not hold in the $n$-dependent case. Appendix \ref{sec:ProofIdeal}
gives a more general  statement.

\subsection{Failure of classical compression schemes}
\label{sec:Failure}

We analyze two types of codes: finite-state encoders and Lempel-Ziv codes. 
Both operate on the 
serialized data $\bX^N=\vec(\bX^{m,n})$, $N=mn$, obtained by scanning the table in
row-first order (obviously column-first yields symmetric results). 

\subsubsection{Finite state encoders}

A finite state (FS) encoder takes the form of a triple $(\Sigma,f,g)$ with $\Sigma$ a finite set
of cardinality $M = |\Sigma|$ and 
$f:\cX\times \Sigma\to \{0,1\}^*$, $g:\cX\times \Sigma\to \Sigma$.

We assume that $\Sigma$ contains a special `initialization' symbol $s_{\sinit}$.
Starting from state $s_0=s_{\init}$, the encoder scans the input $\bX^N$ sequentially.
Assume after the first $\ell$ input symbols it is in state $s_\ell$, and produced 
encoding $\bz_1^{k(\ell)}$. 
Given input symbol $X_{\ell+1}$, it appends $f(X_{\ell+1},s_\ell)$ to the codeword, and
updates its state to $s_{\ell+1} = g(X_{\ell+1},s_\ell)$.

With an abuse of notation, denote by $f_{\ell}(\bX^{\ell},s_{\init})\in\{0,1\}^*$ the binary sequence
obtained by applying the finite state encoder to 
$\bX^\ell=(X_1,\dots,X_{\ell})$ 
We say that the FS encoder is information lossless  if for any $\ell\in\naturals$,
$\bX^\ell\mapsto f_\ell(\bX^{\ell},s_{\sinit})$ is injective.
\begin{theorem}\label{thm:FS}
Let $\bX=\bX^{m,n}\sim\cT(Q,\ro,\co;m,n)$ and $\phi:=(\Sigma,f,g)$ be an information lossless
 finite state  encoder. 
Define the corresponding compression rate $\Rate_{\phi}(\bX)$,
as per Eq.~\eqref{eq:RateDef}.
Assuming $m>10$, $|\Sigma|\ge |\cX|$, and $\log_2|\Sigma|\le n\log_2|\cX|/9$,
\begin{align}
\E\, \Rate_{\phi}(\bX)&\ge \frac{H(X|U)}{\log_2|\cX|}-
10 \sqrt{\frac{\log|\Sigma|}{n\log|\cX|}} \cdot\log(n\log|\Sigma|)\, .
\end{align}
\end{theorem}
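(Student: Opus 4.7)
The plan is to decompose $f_N(\bX^N,s_\sinit)$ into $m$ row-blocks and apply a Kraft/Shannon-type argument row by row. For $i\in\{1,\dots,m\}$, let $s_i^{\enter},s_i^{\exit}\in\Sigma$ denote the encoder states just before and just after processing row $X_{i\cdot}$, and let $\bz_i\in\{0,1\}^*$ be the portion of the output produced while reading $X_{i\cdot}$, so $\len(f_N)=\sum_i\len(\bz_i)$. Two ingredients will drive the argument: (a) a per-row injectivity statement that lets us convert $\bz_i$ into a genuine prefix-free codeword for $X_{i\cdot}$ at an additive cost of $O(\log|\Sigma|+\log n)$ bits, and (b) a lower bound $H(X_{i\cdot}\mid s_i^{\enter})\ge nH(X|U)-\log_2|\Sigma|$, where the finiteness of the state space enters in a crucial way.

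First I would verify that for every reachable $s\in\Sigma$, the map $X_{i\cdot}\mapsto(\bz_i,s_i^{\exit})$ is injective whenever $s_i^{\enter}=s$: two rows yielding the same $(\bz_i,s_i^{\exit})$ could be embedded in two full sequences agreeing outside row $i$, whose overall encodings would coincide, contradicting injectivity of $f_N(\cdot,s_\sinit)$. Using this, define the prefix-free codeword
\begin{align*}
\tilde\bz_i \;=\; \mathrm{bin}_{\lceil\log_2|\Sigma|\rceil}(s_i^{\exit})\conc\elias(\len(\bz_i))\conc\bz_i,
\end{align*}
which, conditional on $s_i^{\enter}$, is an injective, prefix-free map from $\cX^n$ into $\{0,1\}^*$. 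Shannon's source-coding inequality, applied conditionally on $s_i^{\enter}$ and then averaged, yields
\begin{align*}
\E[\len(\bz_i)] \;\ge\; H\bigl(X_{i\cdot}\,\big|\,s_i^{\enter}\bigr)-\lceil\log_2|\Sigma|\rceil-O\bigl(\log(n\log|\cX|)\bigr),
\end{align*}
where the last term bounds $\E[\len(\elias(\len(\bz_i)))]$ via Jensen (a trivial bootstrap handles the case in which $\E[\len(\bz_i)]$ is not a priori bounded, since the desired conclusion is then immediate).

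The main obstacle is the lower bound $H(X_{i\cdot}\mid s_i^{\enter})\ge nH(X|U)-\log_2|\Sigma|$. The naive chain-rule approach combined with the bound $H(\bX^{m,n})\ge mnH(X|U,V)$ from Lemma~\ref{lemma:EntropyTables} would only yield the ideal rate $H(X|U,V)$ and is therefore too weak. The right route exploits the finite state space twice. First, since $s_i^{\enter}$ is a deterministic function of the past taking at most $|\Sigma|$ values, $I(X_{i\cdot};s_i^{\enter})\le H(s_i^{\enter})\le \log_2|\Sigma|$, so $H(X_{i\cdot}\mid s_i^{\enter})\ge H(X_{i\cdot})-\log_2|\Sigma|$. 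Second, under the generative model, conditional on $u_i$ the entries $(X_{ij})_{j\le n}$ are i.i.d.\ with distribution $\sum_v\co(v)Q(\cdot|u_i,v)$, hence $H(X_{i\cdot}\mid u_i)=nH(X|U)$ and $H(X_{i\cdot})\ge nH(X|U)$. Intuitively: the $\log_2|\Sigma|$ bits carried across row boundaries cannot record the column latents $\bv^n$ (whose entropy is $\Theta(n)$), so conditioning on $s_i^{\enter}$ strips at most $\log_2|\Sigma|$ bits from the row entropy, which is already at least $nH(X|U)$. This is precisely the step that separates finite-state encoders from the ideal rate and produces $H(X|U)$ rather than $H(X|U,V)$.

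Combining the two displayed inequalities and summing over $i=1,\dots,m$ gives
\begin{align*}
\E\,\len(f_N(\bX^N,s_\sinit))\;\ge\; mnH(X|U)-m\bigl(2\lceil\log_2|\Sigma|\rceil+O(\log(n\log|\cX|))\bigr),
\end{align*}
and dividing by $mn\log_2|\cX|$ yields a rate bound of the form $H(X|U)/\log_2|\cX|-O\bigl((\log|\Sigma|+\log(n\log|\cX|))/(n\log_2|\cX|)\bigr)$. This is in fact tighter than the $\sqrt{\log|\Sigma|/(n\log|\cX|)}\log(n\log|\Sigma|)$ error appearing in the theorem; matching the precise form requires only coarser slack in the elementary inequalities above, or alternatively replacing the single-shot prefix-free reduction by a Ziv-style bound on finite-state compressibility applied row by row, which is presumably what the authors do in order to re-use machinery for the Lempel–Ziv analysis. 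A secondary subtlety is verifying per-row injectivity when a state $s$ is unreachable from $s_\sinit$ in exactly $(i-1)n$ steps, but this is vacuous because such states have zero probability under $s_i^{\enter}$.
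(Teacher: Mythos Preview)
Your argument is correct and in fact yields a sharper remainder (of order $(\log|\Sigma|+\log n)/(n\log|\cX|)$) than the stated theorem. The paper takes a different route: it first proves a per-sequence bound (Proposition~\ref{propo:FiniteState}) of Ziv--Lempel type, lower-bounding the rate by the empirical entropy of sliding $\ell$-blocks minus $(\log|\Sigma|)/\ell$, and then specializes to the tabular model by taking expectations and using concavity to replace the empirical block distribution by its mean. Because sliding blocks cross row boundaries, an edge term of order $\ell/n$ appears; balancing it against $(\log|\Sigma|)/\ell$ forces $\ell\asymp\sqrt{n\log|\Sigma|}$ and produces the $\sqrt{\log|\Sigma|/(n\log|\cX|)}$ error. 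By aligning blocks exactly with rows you eliminate the edge effect entirely and avoid the optimization, which is why your bound is tighter. The paper's approach has the advantage that Proposition~\ref{propo:FiniteState} is a general finite-state compressibility lemma (valid for arbitrary sequences, not just tabular data) and parallels the machinery reused in the Lempel--Ziv analysis; your approach is more direct and exploits the row structure of the model from the outset. One minor simplification: since the paper's Remark after the definition of IL encoders shows that $X_{i\cdot}\mapsto\bz_i$ alone is injective given any reachable $s_i^{\enter}$, you can drop $s_i^{\exit}$ from $\tilde\bz_i$ and instead invoke Lemma~\ref{lemma:SourceCodingConverse} directly, sparing the Elias length prefix as well.
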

\begin{remark}
The leading term of the above lower bound is $H(X|U)/\log_2|\cX|$. 
Since conditioning reduces entropy, this is strictly larger than the ideal rate
which is roughly $H(X|U,V)/\log_2|\cX|$, cf. Eq.~\eqref{eq:IdealRate}.

The next term is negligible provided  $\log|\Sigma| \ll n\log|\cX|$. 
This condition is easy to interpret: it amounts to say that the finite state machine 
does not have enough states to memorize a row of the table $\bX^{m,n}$.
\end{remark}

\subsubsection{Lempel-Ziv}
\label{sec:LZ}

The pseudocode of the Lempel-Ziv algorithm that we will analyze is given in Appendix
\ref{sec:ProofLZ}.

In words, after the first $k$ characters of the input have been parsed,
the encoder finds the longest string $\bX_{k}^{k+\ell-1}$ which appears in the past.
It then encodes a pointer to the position of the earlier appearance of
the string $T_k$, and its length $L_k$. If a simbol $X_k$ never appeared in the past, 
we use a special encoding, cf.  Appendix
\ref{sec:ProofLZ}.

We encode the pointer $T_k$ in plain binary using $\lceil\log_2(N+|\cX|)\rceil$
bits (note that $T_k\in \{-|\cX|+1,\dots, 1,\dots,N\}$), and $L_k$ using an instantaneous 
prefix-free code, e.g. Elias $\delta$-coding, taking $2\lfloor\log_2 L_k\rfloor+1$ bits.
%
\begin{assumption}\label{ass:NonDet}
There exist a constant $c_0>0$ such that 
\begin{equation*}
\max_{x\in\cX}\max_{u,v\in\cL}Q(x|u,v)\le 1-c_0\, .
\end{equation*}
Further  $Q,\ro,co, \cX, \cL$ are fixed and $m, n\to\infty$  with $m=n^{\alpha+o(1)}$, i.e. 
\begin{equation}
\lim_{n\to\infty}\frac{\log m}{\log n} = \alpha\in (0,\infty)\, .
\label{eq:AssPolynomial}
\end{equation}
\end{assumption}
As mentioned above, we consider sequences of instances with $m, n\to\infty$.
If convenient, the reader can think this sequence to be indexed by $n$, and let $m=m_n$
depend on $n$ such that Eq.~\eqref{eq:AssPolynomial} holds.

\begin{theorem}\label{thm:LempelZiv}
Under Assumption \ref{ass:NonDet},  the asymptotic Lempel-Ziv rate is
\begin{align}
\lim_{m,n\to\infty}&\E\, \Rate_{\LZ}(\bX^{m,n}) =
\Rate_{\LZ}^{\infty}:=
\sum_{u\in\cL}\frac{\ro(u) \Rate_{\LZ}^{\infty}(u)}{\log_2|\cX|}\, ,\label{eq:Asymp-LZ}\\
\Rate_{\LZ}^{\infty}(u)&:=
H(X|U=u) \wedge \Big(\frac{1+\alpha}{\alpha}\Big)H(X|U=u,V)\, .\nonumber
\end{align}
\end{theorem}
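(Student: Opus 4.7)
The argument rests on an asymptotic analysis of the match lengths produced by the LZ parse of $\bX^N=\vec(\bX^{m,n})$, $N=mn$. The parse produces $K$ phrases of lengths $L_1,\dots,L_K$ summing to $N$, each encoded in $\lceil\log_2(N+|\cX|)\rceil+2\lfloor\log_2 L_k\rfloor+1$ bits, hence
\begin{align}
\len(\LZ(\bX^N))=K\log_2 N+O\Big(\sum_{k=1}^K\log_2 L_k\Big).
\end{align}
Since the second summand is dominated by $K\log_2 N$, the rate is controlled by the typical inverse match length $\log_2 N/L_k$.

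I show that for a phrase starting at a position $k_*=(i-1)n+j$ with $\log n\ll j\le n-\log n$ in row $i$ with latent $u_i=u$, the match length $L_{k_*}$ is, with high probability, the larger of two competing scales. The first is the column-\emph{aligned} match against one of the $\sim m\ro(u)=n^{\alpha+o(1)}$ previous rows sharing latent $u$: conditional on $\bv^n$ and on the future block $(X_{i,j+s})_{s=0}^{L-1}$, the event that a specific such row matches at aligned columns has probability $\prod_{s=0}^{L-1}Q(X_{i,j+s}|u,v_{j+s})$, which by the AEP for conditionally independent sequences with varying parameters concentrates around $2^{-LH(X|U=u,V)}$; equating the expected number of matches $\sim m\ro(u)\cdot 2^{-LH(X|U=u,V)}$ to $\Theta(1)$ yields $L^{\text{al}}_u=\alpha\log n/H(X|U=u,V)$. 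The second is the \emph{unaligned} match against arbitrary past positions: the future block is i.i.d.\ with marginal $p(x|u)=\sum_v\co(v)Q(x|u,v)$ (entropy $H(X|U=u)$), and with $N=n^{1+\alpha+o(1)}$ candidate past positions one gets $L^{\text{un}}_u=(1+\alpha)\log n/H(X|U=u)$. Assumption \ref{ass:NonDet} ensures both entropies are strictly positive. Combining a union-bound upper bound with a second-moment lower bound on match counts gives
\begin{align}
\frac{\log_2 N}{L_{k_*}}\toP \min\!\Big(H(X|U=u),\;\tfrac{1+\alpha}{\alpha}H(X|U=u,V)\Big)=\Rate_{\LZ}^{\infty}(u).
\end{align}

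To pass to the global rate, let $K_i$ denote the number of phrases contained in row $i$; the previous step gives $K_i(\log_2 N)/n\toP \Rate_{\LZ}^{\infty}(u_i)$, with boundary phrases contributing only $O(\log n)$ symbols per row end. Summing,
\begin{align}
\frac{\len(\LZ(\bX^N))}{N\log_2|\cX|}=\frac{\log_2 N}{N\log_2|\cX|}\sum_{i=1}^m K_i\,(1+o(1))\toP\frac{1}{\log_2|\cX|}\sum_{u\in\cL}\ro(u)\,\Rate_{\LZ}^{\infty}(u),
\end{align}
where the last limit is the SLLN for the i.i.d.\ row latents $u_i\sim\ro$. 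Since the universal bound $\Rate_{\LZ}(\bX^{m,n})\le 1+o(1)$ holds almost surely, convergence in expectation follows by bounded convergence.

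The principal technical obstacle is the match-length lower bound in the aligned regime: for distinct past rows $i'$ with $u_{i'}=u$, the events that row $i'$ matches row $i$ at aligned columns are strongly dependent through the shared column latents $\bv^n$, so a naive independence argument fails. I resolve this by conditioning on $\bv^n$, under which the row-level match events become independent Bernoulli trials with common parameter $\prod_{s=0}^{L-1}Q(X_{i,j+s}|u,v_{j+s})$; a second-moment argument then yields existence of at least one match with high probability, while Bernstein-type concentration controls $\frac{1}{L}\sum_s\log_2 Q(X_{i,j+s}|u,v_{j+s})$ around $-H(X|U=u,V)$. A secondary subtlety is that matches may truncate at row boundaries (where the row latent changes); this shifts match lengths by only $O(1)$ and does not affect the leading-order asymptotics.
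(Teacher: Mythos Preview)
Your proposal is correct and follows essentially the same route as the paper: you decompose match lengths into the aligned (same-column) and unaligned regimes, show each concentrates at $\alpha\log n/H(X|U=u,V)$ and $(1+\alpha)\log n/H(X|U=u)$ respectively via union bound plus a lower-bound argument after conditioning on the latents, and then aggregate over rows weighted by $\ro$. The paper packages the lower bound as a geometric hitting-time estimate (Lemmas \ref{lemma:HittingTime}--\ref{lemma:HittingTime-UB}) rather than your second-moment argument, and is somewhat more explicit that the unaligned lower bound must also restrict to past rows with $u_{i'}=u$ (so that the past blocks are genuine i.i.d.\ copies of the future block); it also handles row-boundary effects by excluding the last $O(\log N)$ columns of each row rather than by your ``$O(1)$ shift'' remark---but these are cosmetic differences, not gaps.
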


\begin{remark}
The asymptotics of the Lempel-Ziv rate is given by the minimum of two expressions,
which correspond to different behaviors of the encoder.
For $u\in\cL$, define $\alpha_*(u):= H(X|U=u,V)/(H(X|U=u)-H(X|U=u,V))$ (with $\alpha_*(u)=\infty$ if
$H(X|U=u)=H(X|U=u,V)$). Then:

 If $\alpha<\alpha_*(u)$, then we are a `skinny table' regime.
The algorithm mostly deduplicates segments in rows with latent $u$
by using strings in different rows but aligned in the same columns.
 If $\alpha>\alpha_*(u)$, then we are a `fat table' regime. The algorithm
 mostly deduplicates  segments on rows with latent $u$ by using 
 rows and columns that are not the same as the current segment.
\end{remark}

\begin{example}[Symmetric Binary Model, dense regime]\label{example:SBM-rate}
Under the Symmetric Binary Model  $\cT_{\SBM}(p_0,p_1,k;m,n)$ of Example \ref{ex:SBM},
we can compute the optimal compression rate of Corollary \ref{coro:Ideal},
the finite state compression rate of Theorem \ref{thm:FS}, the Lempel-Ziv
rate of Theorem \ref{thm:LempelZiv}. 

If $p_0$, $p_1$ are of order one, and $m=n^{\alpha+o_n(1)}$ as $m,n\to\infty$,
letting $\avp := ((k-1)/k)p_0+(1/k)\, p_1$,
$\overline{\entro}(p_0,p_1) := ((k-1)/k)\entro(p_0)+(1/k)\, \entro(p_1)$,
we obtain:
\begin{align*}
\E\, \Rate_{\mbox{\tiny\rm opt}}(\bX) &= 
\Big(1-\frac{1}{k}\Big)\, \entro(p_0)+\frac{1}{k}\, \entro(p_1) +o_n(1)\, ,\\
\E\, \Rate_{\mbox{\tiny\rm fin. st.}}(\bX) &\ge \entro(\avp) +o_n(1)\, ,\\
\E\, \Rate_{\LZ}(\bX)  &=  \entro(\avp)\wedge \Big(\frac{1+\alpha}{\alpha}\Big)\overline{\entro}(p_0,p_1)
+o_n(1)\, .
\end{align*}
These theoretical predictions are used to trace the contour lines in Figure \ref{fig:Synthetic}.
(ANS coding is implemented as a finite state code here.)
\end{example}

\subsection{Practical latent-based compression}
\label{sec:LatentProof}

Achieving the ideal rate of  Corollary \ref{coro:Ideal} via arithmetic or Huffmann coding
requires  to compute the probability $\rP(\bX^{m,n})$, which is intractable. 
We will next show that we can achieve a compression rate that is close to the
ideal rate via latents estimation.

We begin by considering general latents estimators  $\hbu: \cX^{m\times n}\to \cL^m$,  
$\hbv: \cX^{m\times n}\to \cL^n$. We measure their accuracy by the 
error  (cf. Lemma \ref{lemma:EntropyTables})
\begin{equation*}
\Acc_U(\bX;\hbu):= \frac{1}{m}
\min_{\pi\in \Perm_{\cL}}\sum_{i=1}^m \Big\{\bfone_{\hu_i(\bX)\neq \pi(u_i)}\Big\}\nonumber
\end{equation*}
and the analogous $\Acc_V(\bX;\hbv)$.
Here the minimization is over the set $\Perm_{\cL}$ of permutations of the latents alphabet $\cL$.

We can use any estimators $\hbu$, $\hbv$ to reorder rows and columns and compress
the table $\bX^{m,n}$ according to the algorithm described in the introduction.
We denote by  $\Rate_{\slat}(\bX)$ the compression rate achieved by
such a procedure.

Our first result implies that, if the latent estimators are 
consistent (namely, they recover the true latents with high probability, up to permutations),
then the resulting rate is close to the ideal one.
\begin{lemma}\label{lemma:Consistency}
Assume data distributed according to model $\bX^{m,n}\sim \cT(Q,\ro,\co;m,n)$,
with $m,n\ge \log_2|\cL|$.
Further assume there exists $c_0>0$ such that $\ro(u),\co(v)\ge c_0$ for all $u,v\in\cL$.
Let $\Rate_{\slat}(\bX)$ be the rate achieved by the latent-based scheme
with latents estimators $\hbu$, $\hbv$, and base encoders $\zip_{\cX}=
\zip_{\cL}=\zip$. Then 
\begin{align}
\E\,\Rate_{\slat}&(\bX)\le \frac{H(\bX^{m,n})}{mn\log_{2}|\cX|}+2P_{\mbox{\tiny\rm err}}(m,n)
+\frac{4\log(mn)}{mn}\nonumber \\
&+
 |\cL|^2\Delta_{\zip}(c\cdot mn;\cuQ)+2\Delta_{\zip}(m\wedge n;\{\ro,\co\})\, .
 \label{eq:GeneralUB-Consistency}
\end{align}
Here $P_{\mbox{\tiny\rm err}}(m,n):=\prob( \Acc_U(\bX^{m,n};\hbu)>0) +
\prob( \Acc_V(\bX^{m,n};\hbv)>0)$, $\Delta_{\zip}(N;\cuP_*)$ is the worst-case redundancy of encoder $\zip$ over i.i.d. sources with
distributions in $\cuP_*$ (see comments below),
$\cuQ:=\{Q(\,\cdot\,|u,v)\}_{u,v\in\cL}$.

The redundancies of Lempel-Ziv, frequency-based arithmetic coding and 
ANS coding can be upper bounded  as (in the last bound 
$Q,\ro,\co$ need to be be independent of $N$)
\begin{align}
\Delta_{\LZ}(N;\cuP_*) & \le 40 c_*(\cuP_*)\Big(\frac{\log \log N}{\log N}
\Big)^{1/2}\, , \label{eq:LZ-overhead}\\ 
\Delta_{\AC }(N;\cuP_*) & \le  \frac{2|\cX|}{\log|\cX|}
\cdot \frac{\log N}{N}\, ,\label{eq:AC-overhead}\\
\Delta_{\ANS}(N;\cuP_*) & \le   \frac{2|\cX|\log N+C_{|\cX|}}{N}\, . \label{eq:ANS-overhead}
\end{align}
Here Eq.~\eqref{eq:LZ-overhead} holds for 
$N\ge \exp\{\sup_{q\in \cuP_*}(4\log (2/H(q)))^2\}$, and 
$c_*(\cuP_*) := \sup_{q\in \cuP_*}\sum_{x\in\cX}(\log q(x))^2/|\cX|$. 
\end{lemma}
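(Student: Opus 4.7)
The proof splits into two essentially independent pieces: Part A derives the rate bound~\eqref{eq:GeneralUB-Consistency} from the redundancy hypothesis, and Part B verifies the specific redundancies~\eqref{eq:LZ-overhead}--\eqref{eq:ANS-overhead}. I would begin Part A by additively decomposing the codeword length,
\begin{align*}
\len(\enc(\bX^{m,n})) \le \len(\text{header}) + \len(\zip_{\cL}(\hbu)) + \len(\zip_{\cL}(\hbv)) + \sum_{u,v\in\cL}\len(\zip_{\cX}(\bX(u,v))),
\end{align*}
observing that the header stores $|\cL|^2+2$ segment lengths, each $\le mn\log_2|\cX|$, and therefore contributes at most $4\log(mn)/(mn)$ to the rate. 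The remaining analysis splits on the \emph{good event}
\begin{align*}
E := \big\{\Acc_U(\bX^{m,n};\hbu)=0\big\}\cap \big\{\Acc_V(\bX^{m,n};\hbv)=0\big\},\qquad \prob(E^c)\le P_{\rm err}(m,n).
\end{align*}
On $E^c$ a literal fallback bounds the rate by a universal constant, contributing at most $2P_{\rm err}(m,n)$ to $\E\,\Rate_{\slat}$.

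On $E$ there exist permutations $\pi_U,\pi_V\in\Perm_{\cL}$ with $\hbu=\pi_U(\bu)$ and $\hbv=\pi_V(\bv)$, so the blocks extracted from the estimated latents coincide (after relabeling) with those from the true latents: each $\bX(u,v)$ is, conditionally on $(\bu,\bv)$, an i.i.d.\ sequence of length $N(u,v)$ with marginal $Q(\cdot|\pi_U^{-1}(u),\pi_V^{-1}(v))\in\cuQ$, while $\hbu,\hbv$ are i.i.d.\ with per-symbol entropies $H(\ro),H(\co)$. Applying the redundancy hypothesis block-by-block and symbol-by-symbol, and restricting to the (high-probability, by a Chernoff bound using $\ro(u)\co(v)\ge c_0^2$) subevent on which $N(u,v)\ge c\,mn$ for all $u,v$, produces the $|\cL|^2\Delta_{\zip}(c\,mn;\cuQ)$ term from the blocks and the $2\Delta_{\zip}(m\wedge n;\{\ro,\co\})$ term from the latents. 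The leading contributions aggregate to $mH(\ro)+nH(\co)+mn\,H(X|U,V)=mn\,H^+_{m,n}(X|U,V)$, which by Lemma~\ref{lemma:EntropyTables} is bounded above by $H(\bX^{m,n})/(mn)+\delta_{m,n}$; since $\delta_{m,n}=O(\log|\cL|/(m\wedge n))$ unconditionally, this correction is trivially absorbed into the logarithmic header term.

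For Part B, the AC bound~\eqref{eq:AC-overhead} is the classical Krichevsky--Trofimov / Laplace minimax i.i.d.\ redundancy $\tfrac{|\cX|-1}{2}\log_2 N+O(1)$ bits, normalized by $N\log_2|\cX|$; the ANS bound~\eqref{eq:ANS-overhead} uses the same empirical distribution together with the constant initialization overhead intrinsic to ANS streams. The LZ bound~\eqref{eq:LZ-overhead} is the principal technical obstacle: I would follow the standard route via Ziv's inequality (controlling the number of distinct phrases in the LZ parsing of $\bX^N$) combined with a Bernstein-type concentration of $-\log q(\bX^N)$ around $NH(q)$ with variance proxy $c_*(\cuP_*)$, in order to convert the phrase-count bound into a codelength bound with per-symbol excess $O(\sqrt{\log\log N/\log N})$. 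Beyond this LZ estimate, the main bookkeeping challenge is making the various error probabilities and the $\delta_{m,n}$ gap collapse neatly into the stated form; everything else reduces to standard concentration arguments and the entropy identity $mn\,H^+_{m,n}=mH(\ro)+nH(\co)+mn\,H(X|U,V)$.
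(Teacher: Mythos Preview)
Your overall plan matches the paper's proof almost exactly: decompose the codeword length, split on the good event $E$, bound each block via the worst-case redundancy, aggregate to $H^+_{m,n}(X|U,V)$, and then invoke the Fano-type lower bound from Lemma~\ref{lemma:EntropyTables}. The Part~B sketches are also in the right spirit (the paper handles AC by direct plug-in of the empirical distribution rather than Krichevsky--Trofimov, and handles LZ by a recursive bound on the expected number of phrases $\E M_N$ via a Chernoff estimate on match lengths, but these are close cousins of what you describe).

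There is, however, one genuine bookkeeping error in Part~A. You write that $\delta_{m,n}=O(\log|\cL|/(m\wedge n))$ and claim this is ``trivially absorbed into the logarithmic header term.'' But the header contributes $O(\log(mn)/(mn))$ to the rate, which is an order of magnitude \emph{smaller}: for $m\asymp n$ the former is $\Theta(1/n)$ while the latter is $\Theta((\log n)/n^2)$. So $\delta_{m,n}$ cannot be absorbed there. This is in fact precisely why the final bound carries $2P_{\mbox{\tiny\rm err}}$ rather than a single $P_{\mbox{\tiny\rm err}}$: the paper caps the rate at $1$ on $E^c$ (not at some constant~$2$), which contributes one factor of $P_{\mbox{\tiny\rm err}}$; the second factor comes from absorbing $\delta_{m,n}$ itself, via the observation that $\delta(\eps_U)/n$ is controlled by $\eps_U=\E\,\Acc_U\le \prob(\Acc_U>0)$ (since $\Acc_U\in[0,1]$), and similarly for $V$. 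In other words, the ``$2$'' in $2P_{\mbox{\tiny\rm err}}$ is not the rate of a literal fallback encoding---it is one $P_{\mbox{\tiny\rm err}}$ from the bad event plus one $P_{\mbox{\tiny\rm err}}$ standing in for the Fano gap $\delta_{m,n}$.
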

The proof of this lemma is given in Appendix \ref{sec:ProofConsistency}.
The main content of the lemma is in the general bound \eqref{eq:GeneralUB-Consistency}
which is proven in Appendix \ref{sec:ProofGeneralUBConsistency}.
\begin{remark} 
We define the worst case redundancy $\Delta_{\zip}(N_0;\cuP_*):=
\max_{N\ge N_0}\widehat{\Delta}_{\zip}(N;\cuP_*)$, where
\begin{align}
\widehat{\Delta}_{\zip}(N;\cuP_*):= \max_{q\in \cuP_*} \frac{\E_q\len(\zip(\bY^N))- H(\bY^N)}{N\log_2 k}
\,,\label{eq:DefOverhead}
\end{align}
where $\cuP_*\subseteq \cuP([k]):=\{(p_i)_{i\le k}\in\reals^k: p_i\ge 0\,\forall i\, \mbox{ and }\sum_{i\le k}p_i=1\}$ 
is a set of
probability distributions over $[k]$ and $\bY^N$ is a vector with i.i.d.
entries $Y_i\sim q$. 

While Eqs.~\eqref{eq:LZ-overhead}---\eqref{eq:ANS-overhead} are 
closely related to well known facts, there are nevertheless differences with respect to 
statements in the literature. We address them in Section \ref{sec:Redundancy}.
Perhaps the most noteworthy difference is in the bound \eqref{eq:LZ-overhead} for the 
LZ algorithm. Existing results, e.g. Theorem 2 in \cite{savari1998redundancy},
assume  a single, $N$-independent, distribution $q$ and are asymptotic in nature. 
Equation \eqref{eq:LZ-overhead} is a non-asymptotic statement and applies to 
a collection of distributions $\cuP_*$ that could depend on $N$.  
\end{remark}

Lemma \ref{lemma:Consistency} can be used in conjunction with any latent estimation
algorithm, as we next demonstrate by considering the spectral algorithm
of Section \ref{sec:LatentEst}. Recall that the algorithm makes use of a map $\psi:\cX\to\reals$.
For $(X,U,V)\sim Q(\,\cdot\,|\,\cdot\,,\,\cdot\,)\ro(\,\cdot\,)\co(\,\cdot\,)$,
we define   $\opsi(u,v) := \E[\psi(X)|U=u,V=v]$,
$\bPsi:=\big(\opsi(u,v)\big)_{u,v\in\cL}$ and the parameters:
\begin{align}
&\mu_n:= \sigma_{\min}(\bPsi)\, ,\;\; 
\nu_n: = \max_{u,v\in \cL}|\opsi(u,v)|\, ,\\
&\sigma_n^2 := \max_{u,v\in \cL} {\rm Var}\big(\psi(X)|U=u,V=v\big)\, .
\end{align}
We further will assume, without loss of generality $\max_{x\in\cX}|\psi(x)|\le 1$.

Finally, we need to formally specify the version of the KMeans primitive in the spectral 
clustering algorithm. In fact, we establish correctness for a simpler thresholding procedure.
Considering to be definite the row latents, and for a given threshold $\theta>0$,
we construct a graph $G_{\theta}=([m],E_{\theta})$ by letting (for distinct $i,j\in [m]$)
\begin{align}
\frac{\|\ba_i-\ba_j\|_2}{(\|\ba_i\|_2+\|\ba_j\|_2)/2}\le \theta \;\;
\Leftrightarrow \;\; (i,j)\in E_{\theta}\, .\label{eq:Thresholding}
\end{align}
The algorithm then output the connected components of $G_{\theta}$.
\begin{theorem}\label{thm:MainLatent}
Assume data  $\bX^{m,n}\sim \cT(Q,\ro,\co;m,n)$, with $m,n\ge \log_2|\cL|$
and $\min_{u\in \cL}(\ro(u)\wedge\co(u))\ge c_0$ for a constant $c_0>0$.
Let $\Rate_{\slat}(\bX)$ be the rate achieved by the latent-based scheme
with spectral latents estimators $\hbu$, $\hbv$,  base compressors $\zip_{\cX}=
\zip_{\cL}=\zip$, and thresholding algorithm as described above. Then, assuming $\sigma_n\ge c\sqrt{(\log n)/n}$, $\nu_n/\sigma_n\le c\sqrt{\log n}$ 
$\mu_n \ge C(\sigma_n\sqrt{(\log n)/m}\vee (\log n)/\sqrt{mn})$, $\theta\le \sqrt{c_0}/100$,
we have 
\begin{align*}
\E\,\Rate_{\slat}&(\bX)\le \frac{H(\bX^{m,n})}{mn\log_{2}|\cX|}+\frac{10\log(mn)}{mn} \\
&+
 |\cL|^2\Delta_{\zip}(c\cdot mn;\cuQ)+2\Delta_{\zip}(m\wedge n;\{\ro,\co\})\, .
\end{align*}
\end{theorem}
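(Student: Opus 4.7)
The plan is to apply Lemma \ref{lemma:Consistency} and show that, under the stated assumptions, the spectral thresholding estimator recovers the latents exactly with probability $1 - O(1/(mn))$, so that the term $2 P_{\mbox{\tiny\rm err}}(m,n)$ in \eqref{eq:GeneralUB-Consistency} is absorbed into the $10\log(mn)/(mn)$ remainder. The redundancy terms $|\cL|^2\Delta_{\zip}(c\cdot mn;\cuQ)$ and $2\Delta_{\zip}(m\wedge n;\{\ro,\co\})$ then carry over verbatim from Lemma \ref{lemma:Consistency}. By the symmetry between rows and columns it suffices to analyze the row estimator $\hbu$.

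I would decompose $\bM := \psi(\bX) = \bZ_u \bPsi \bZ_v^\top + \bN$, where $\bZ_u\in\{0,1\}^{m\times|\cL|}$, $\bZ_v\in\{0,1\}^{n\times|\cL|}$ are cluster-indicator matrices and $\bN$ has independent, mean-zero entries of magnitude at most $2$ and variance at most $\sigma_n^2$. The balance condition $\min_u\ro(u)\wedge\co(u)\ge c_0$ combined with a Chernoff bound gives $\sigma_{\min}(\bZ_u)\ge \sqrt{c_0 m/2}$ and $\sigma_{\min}(\bZ_v)\ge \sqrt{c_0 n/2}$ on an event of probability at least $1-e^{-c(m\wedge n)}$; hence the $(k{-}1)$-st singular value of the signal $\bZ_u\bPsi\bZ_v^\top$ is bounded below by $c\,\mu_n\sqrt{mn}$.

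I would control $\|\bN\|_{\mathrm{op}}$ via a matrix Bernstein (or Bandeira--van Handel) bound for matrices with independent bounded entries, obtaining $\|\bN\|_{\mathrm{op}} \le C(\sigma_n\sqrt{m+n} + \log(mn))$ with probability at least $1-(mn)^{-2}$. Under the hypothesis $\mu_n \ge C(\sigma_n\sqrt{(\log n)/m}\vee(\log n)/\sqrt{mn})$ this gives $\|\bN\|_{\mathrm{op}}/\sigma_{k-1}(\E[\bM\mid\bu,\bv]) = o(1/\sqrt{\log n})$. I would then apply a two-to-infinity-norm variant of the Davis--Kahan $\sin\Theta$ theorem to deduce that the rows of the top-$(k{-}1)$ left singular vector matrix $\bA$ satisfy $\|\ba_i - \bR\,\ba_i^\star\|_2 = o(1/\sqrt{m\log n})$ uniformly in $i\le m$, for an idealized $\bA^\star$ whose rows depend only on $u_i$ and some orthogonal matrix $\bR$.

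Finally, the ideal rows $\ba_i^\star$ corresponding to distinct latent values are separated by at least $c/\sqrt{m}$ in Euclidean norm (again by balance and by the lower bound on $\mu_n$). Combined with the uniform row perturbation bound, the normalized distance in \eqref{eq:Thresholding} is $\ll\theta$ when $u_i=u_j$ and $\gg\theta$ when $u_i\neq u_j$; hence the connected components of $G_\theta$ coincide with the true latent partition up to relabelling, and likewise for $\hbv$. A union bound over the three events above (and their symmetric versions on the column side) gives $P_{\mbox{\tiny\rm err}}(m,n) = O(1/(mn))$, which together with Lemma \ref{lemma:Consistency} yields the claim. The main obstacle is the row-wise perturbation step: a Frobenius-norm Davis--Kahan bound only controls the \emph{average} squared row error, whereas exact recovery of every latent requires the \emph{worst-case} row error; the leave-one-out / two-to-infinity machinery is what produces the precise $\log n$ factors in the hypotheses on $\mu_n$ and $\sigma_n$ and must be deployed carefully in the present heterogeneous-variance setting.
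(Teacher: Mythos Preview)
Your proposal is correct and follows essentially the same route as the paper's proof: apply Lemma~\ref{lemma:Consistency}, condition on balanced latents via Chernoff, bound $\sigma_{\min}$ of the signal part, control $\|\bM-\E\bM\|_{\op}$ by matrix Bernstein, and then invoke a two-to-infinity eigenvector perturbation result to get uniform row-wise control so that the thresholding graph recovers the partition exactly. The paper makes the last step concrete by citing Corollary~2.1 of \cite{abbe2020entrywise}, which is precisely the leave-one-out machinery you describe; your diagnosis that a Frobenius Davis--Kahan bound is insufficient and that the $\log n$ factors in the hypotheses arise from this entrywise analysis is exactly right.
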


We focus on the simpler thresholding algorithm of Eq.~\eqref{eq:Thresholding}  
instead of KMeans in order to avoid technical complications that are not the main focus of 
this paper. We expect it to be relatively easy to generalize this result, e.g. using
the results of  \cite{makarychev2020improved} for KMeans++.

\begin{example}
Consider the Symmetric Binary Model  $\cT_{\SBM}(p_0,p_1,k;m,n)$ of Example \ref{ex:SBM},
with $p_0=p_{0,n}$, $p_1=p_{1,n}$ potentially dependent on $n$. Since in this case $\cX=\{0,1\}$
the choice of the map $\psi$ has little impact and we set $\psi(x)=x$. 
We assume, to simplify formulas, $|p_{1,n}-p_{0,n}|\le kp_{0,n}$, $p_{1,n}\vee p_{0,n}\le 9/10$. It is easy to compute 
$\mu_n=|p_{1,n}-p_{0,n}|$, $\nu_n= p_{0,n}\vee p_{1,n}$, $\sigma^2_n\asymp p_{1,n}\vee p_{0,n}$:

Theorem \ref{thm:MainLatent} implies nearly optimal compression rate under the 
following conditions on the model parameters:
\begin{align*}
&p_{1,n}\vee p_{0,n} \gtrsim \sqrt{\frac{\log n}{n}}\, ,\;\;\;\;\;|p_{1,n}-p_{0,n}|\gtrsim
\frac{\log n}{\sqrt{mn}}\, ,\\
&\frac{|p_{1,n}-p_{0,n}|}{p_{1,n}\vee p_{0,n}}\gtrsim \sqrt{\frac{\log n}{n}}\, .
\end{align*}
Here $\gtrsim$ hides factors depending on $k,c_0$. The last of these condition amounts to 
requiring that the signal-to-noise ratio is large enough to consistently reconstruct the latents.
In the special case of square symmetric matrices ($m=n$), 
sharp constants in these bounds can be derived from \cite{abbe2017community}.
\end{example}

\section{Discussion and extensions}
\label{sec:Discussion}

We proved that classical lossless compression schemes, that serialize the data and then 
apply a finite state encoder or a Lempel-Ziv encoder to the resulting sequence are 
sub-optimal when applied to tabular data. Namely, we introduced a simple model for tabular data, and
  made the following 
novel contributions:

\noindent{\emph 1.} We characterized the optimal compression rate under this model.

\noindent{\emph 2.} We rigorously quantified the gap in compression rate suffered by
classical compressors.

\noindent{\emph 3.}  We showed that a compression scheme that estimates the latents performs well in practice
and provably achieves optimal rate on our model.

The present work naturally suggests several extensions.

\noindent \emph{Faster spectral clustering via randomized linear algebra.} We implemented
row subsampling singular value decomposition \cite{drineas2006fast}, and observed hardly no loss 
in DRR by using $10\%$ of the rows. 

\noindent \emph{Bits back coding.}
As mentioned several times, encoding the latents is sub-optimal, unless these
can be estimated accurately in the sense of $\E\Acc_U(\bX;\hbu)\to 0$, 
$\E\Acc_V(\bX;\hbv)\to 0$, cf. Lemma \ref{lemma:EntropyTables}. 
If this is not the case, then optimal rates can be achieved using bits-back coding.

\noindent \emph{Continuous latents.} Of course, using discrete latents is somewhat un-natural, 
and it would be interesting to consider continuous ones, in which case 
bits-back coding is required.

\section*{Acknowledgements}
We are grateful to  Joseph Gardi, Evan Gunter, Marc Laugharn, Eren Sasoglu
for several conversations about this work.
This work was carried out while Andrea Montanari
was on leave from Stanford and a Chief Scientist at Ndata Inc dba Project N. The present
research is unrelated to AM’s Stanford activity.

\newpage

\bibliographystyle{amsalpha}
\newcommand{\etalchar}[1]{$^{#1}$}
\providecommand{\bysame}{\leavevmode\hbox to3em{\hrulefill}\thinspace}
\providecommand{\MR}{\relax\ifhmode\unskip\space\fi MR }
\providecommand{\MRhref}[2]{%
  \href{http://www.ams.org/mathscinet-getitem?mr=#1}{#2}
}
\providecommand{\href}[2]{#2}

\newpage
\appendix
\onecolumn

\section{Details on the empirical evaluation}

\subsection{Datasets}
\label{app:Data}

We used the following datasets:
\begin{itemize}
\item \emph{Taxicab.} A table with $m=62,495$
rows, $n_0=20$ columns comprising data for taxi rides in NYC during January 2022
\cite{nycTLC2022taxi}. After preprocessing this table has $n=18$ columns. For the LZ (ZSTD) 
compressor we 
used $|\cL_r| = 9$ row latents and $15$ column latents, for the ANS compressor we used $|\cL_r| =5$ row latents and 14 column latents.
\item \emph{Network.} Four social networks from SNAP Datasets, representing either friends 
as undirected edges for Facebook or directed following relationships on Google Plus \cite{snapnets}. 
We regard these as four distinct tables with $0-1$ entries, with dimensions, respectively 
 $m=n\in\{333, 747, 786, 1187\}$. For each table we used $5$ row latents and $5$ column latents.
\item \emph{Card transactions.}  A table of simulated credit card transactions containing 
information like card ID, merchant city, zip, etc. This table has $m=24,386,900$ rows and 
$n_0=15$ columns and was generated as described in \cite{ibm2019paper} and downloaded 
from \cite{ibm2021cardtransactions}. After preprocessing the table has $n=12$ columns. 
For this table we used $3$ row latents and $n$ column latents.
\item \emph{Business price index.}  A table of the values of the consumer price index 
of various goods in New Zealand between 1996 and 2022. This is a table with 
$m=72,750$ rows and $n_0=12$ columns from the Business price indexes: 
March 2022 quarter - CSV file from \cite{nz2022bpi}. After preprocessing this table has 
$n=10$ columns. Due to the highly correlated nature of consecutive rows, 
we first shuffle them before compressing. For the LZ method we used $6$ 
row latents and $7$ column latents, for the ANS method we used $2$ row latents and $6$ column latents.
\item \emph{Forest.} A table from the UCI data repository comprising 
$m=581,011$  cartographic measurements with $n_0=55$ attributes, to predict forest cover
 type based on information gathered from US Geological Survey \cite{Dua:2019}. It contains binary 
 qualitative variables, and some continuous values like elevation and slope. 
 After preprocessing this data has $n=55$ columns. For the LZ method we used $6$
  row latents and $17$ column latents, for the ANS method we used $6$ row latents and 17 column latents.
\item \emph{US Census.} Another table from the UCI Machine Learning Repository \cite{Dua:2019} 
with $m=2,458,285$ and $n_0=68$ categorical attributes related to demographic information, 
income, and occupation information. After preprocessing this data has 
$n=68$ columns. For this data we used $9$ row latents and $n$ column latents.
\item \emph{Jokes.} A table containing ratings of a series of jokes by 24,983 users 
collected between April 1999 and May 2003  \cite{goldberg2001paper,goldberg2001data}. These ratings are real numbers on a scale from $-10$ to $10$, and a value 
of 99 is given to jokes that were not rated. There are $m=23,983$ rows and $n_0=101$. 
The first column identifies how many jokes were rated by a user, and the rest of the 
columns contain the ratings. After preprocessing this data has $n=101$ columns, all quantized. 
For the LZ method we used $2$ row latents and $n$ column latents, for the ANS method we used  $8$ row latents and $8$ column latents. 
\end{itemize}

\subsection{Preprocessing}
\label{sec:Preprocessing}

We preprocessed different columns as follows:
\begin{itemize}
\item If a column comprises $K\le 256$ unique values, then we map the values
to $\{0,\dots,K-1\}$.
\item If a column is numerical and comprises more than $256$ unique values, 
we calculate the quartiles for the data and map each entry to its quartile
 membership (0 for the lowest quartile, 1 for the next largest, 2 for the next and 3 for the largest).
\item If a column does not meet either of the above criteria, we discard it.
\end{itemize}
Finally, in some experiments we randomly permuted before compression.
The rationale is that some of the above datasets have rows already ordered in a 
way that makes nearby rows highly correlated. In these cases, row reordering is --obviously--
of limited use.

\section{Further simulations}
\label{app:Simulations}

\begin{figure}[t]
\centering
\includegraphics[width=0.8\textwidth]{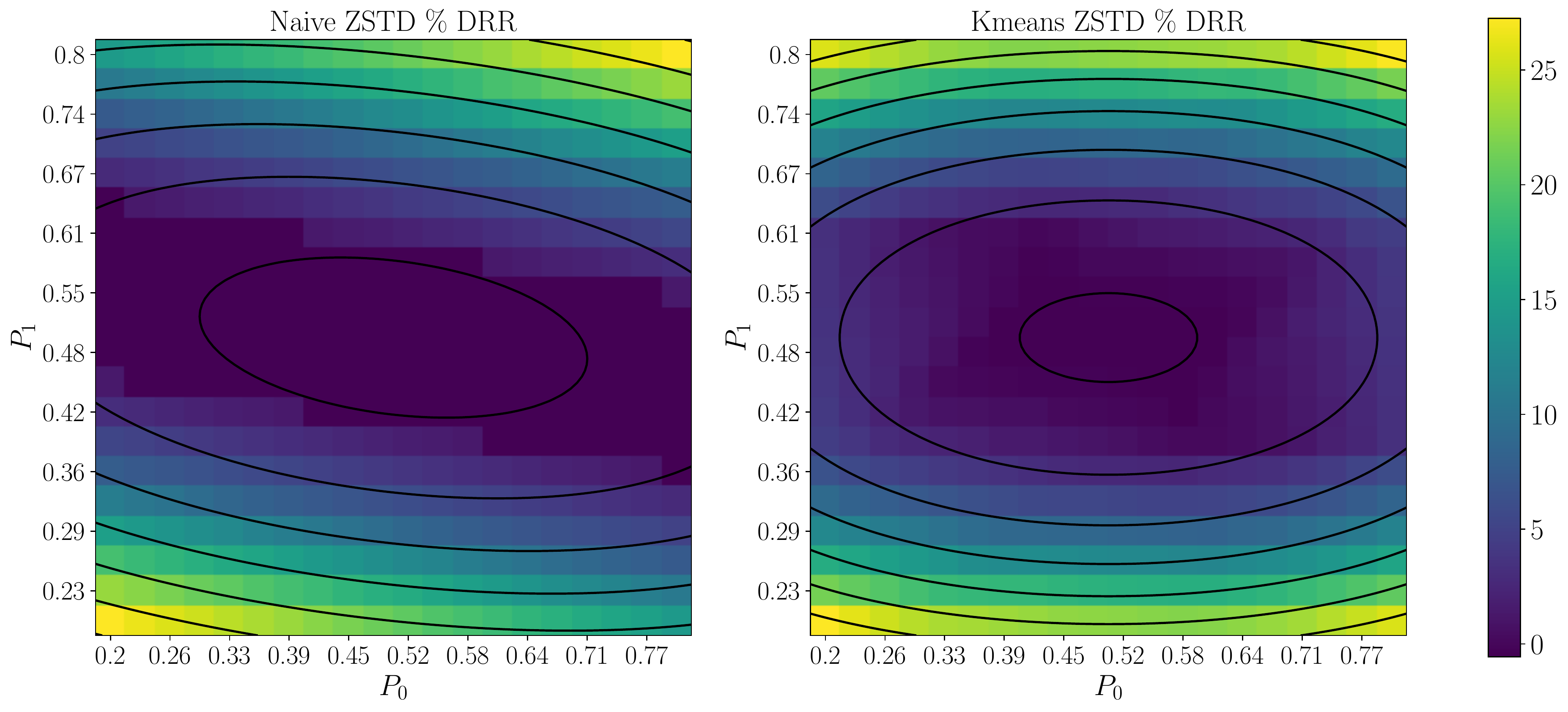} 
\includegraphics[width=0.8\textwidth]{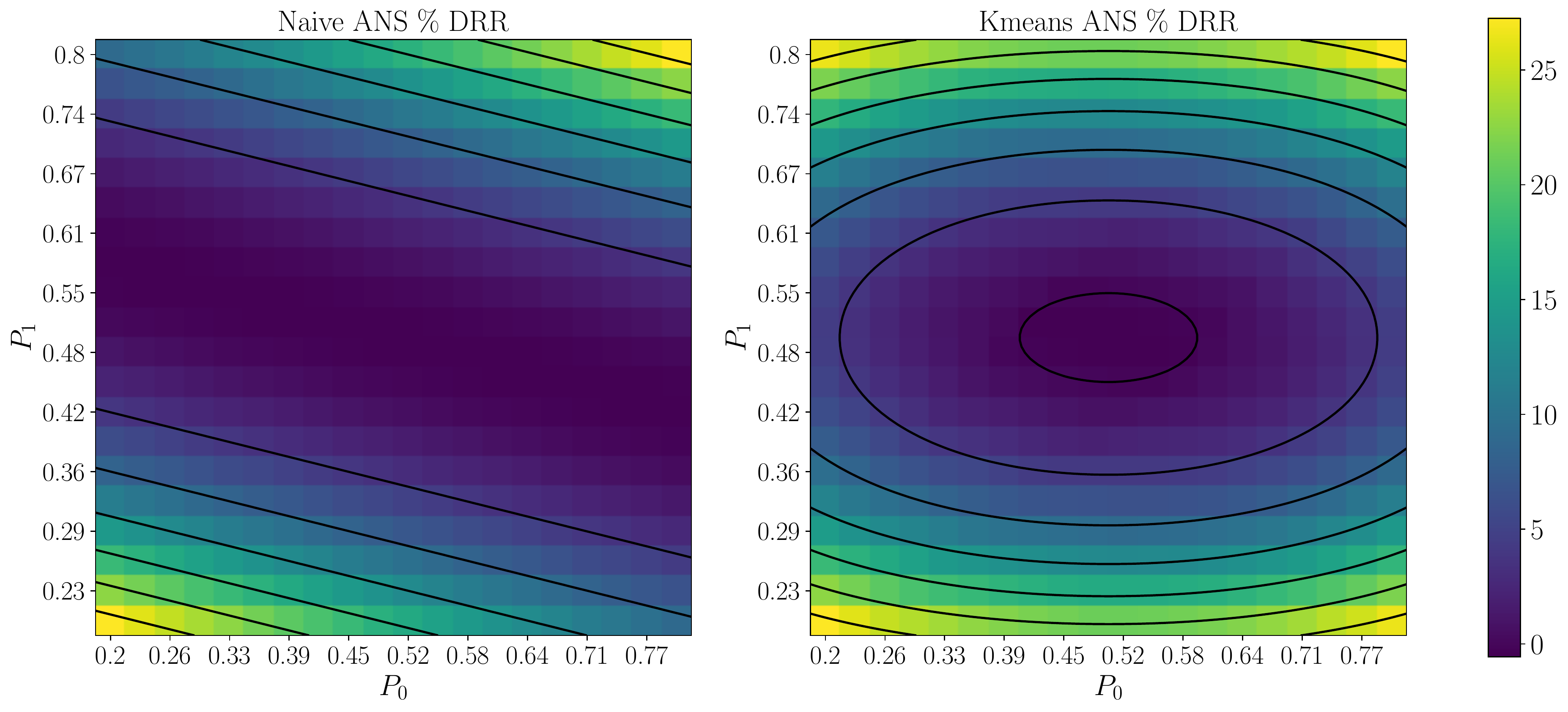} 
\caption{Comparing data reduction rate of naive coding and latent-based 
coding for data from SBM with $k=5$ latents. Top row: ZSTD base compressor.
Bottom row: ANDS base compressor. Contour lines correspond to the 
theoretical predictions for various compression algorithms (cf. Example~\ref{example:SBM-rate}).}\label{fig:Synthetic-5}
\end{figure}

\begin{figure}[t]
\centering
\includegraphics[width=0.8\textwidth]{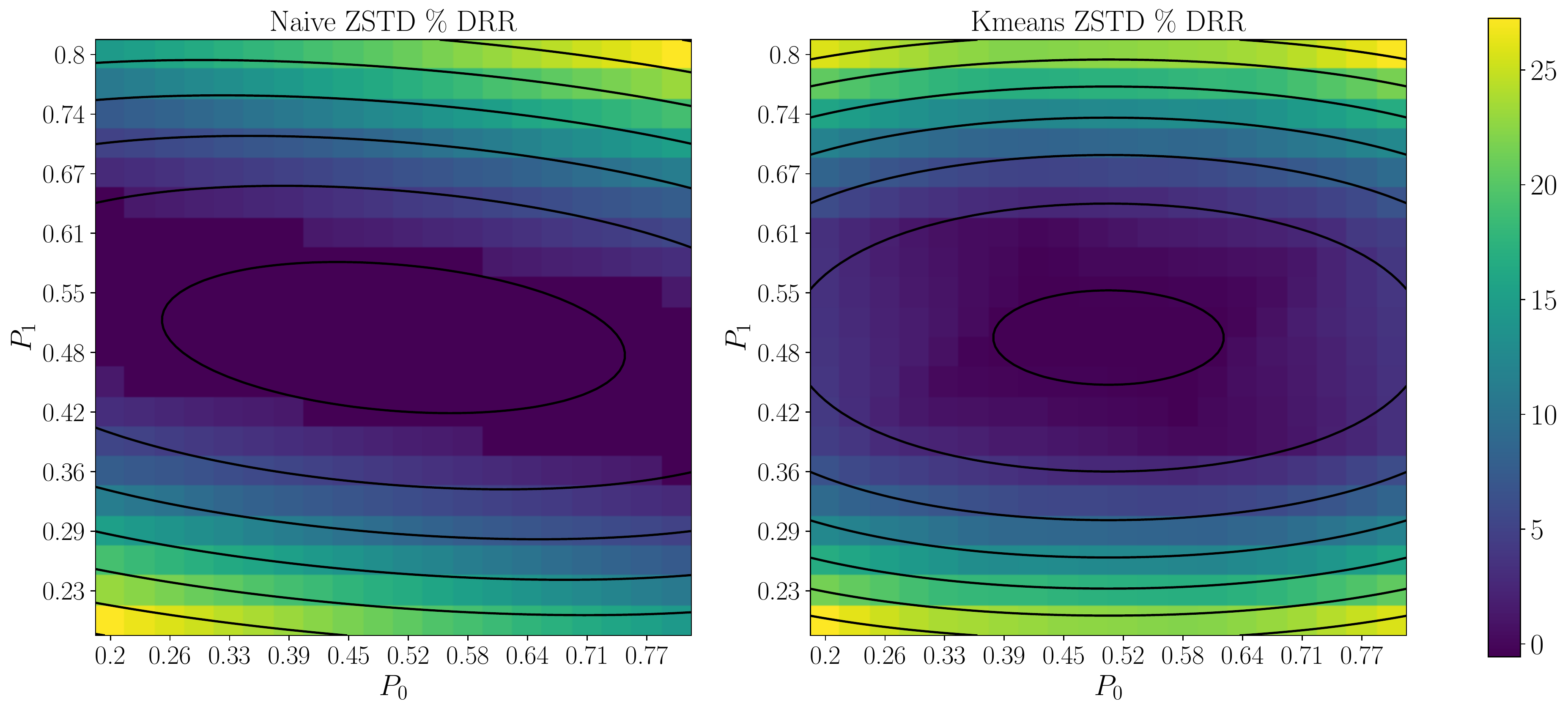} 
\includegraphics[width=0.8\textwidth]{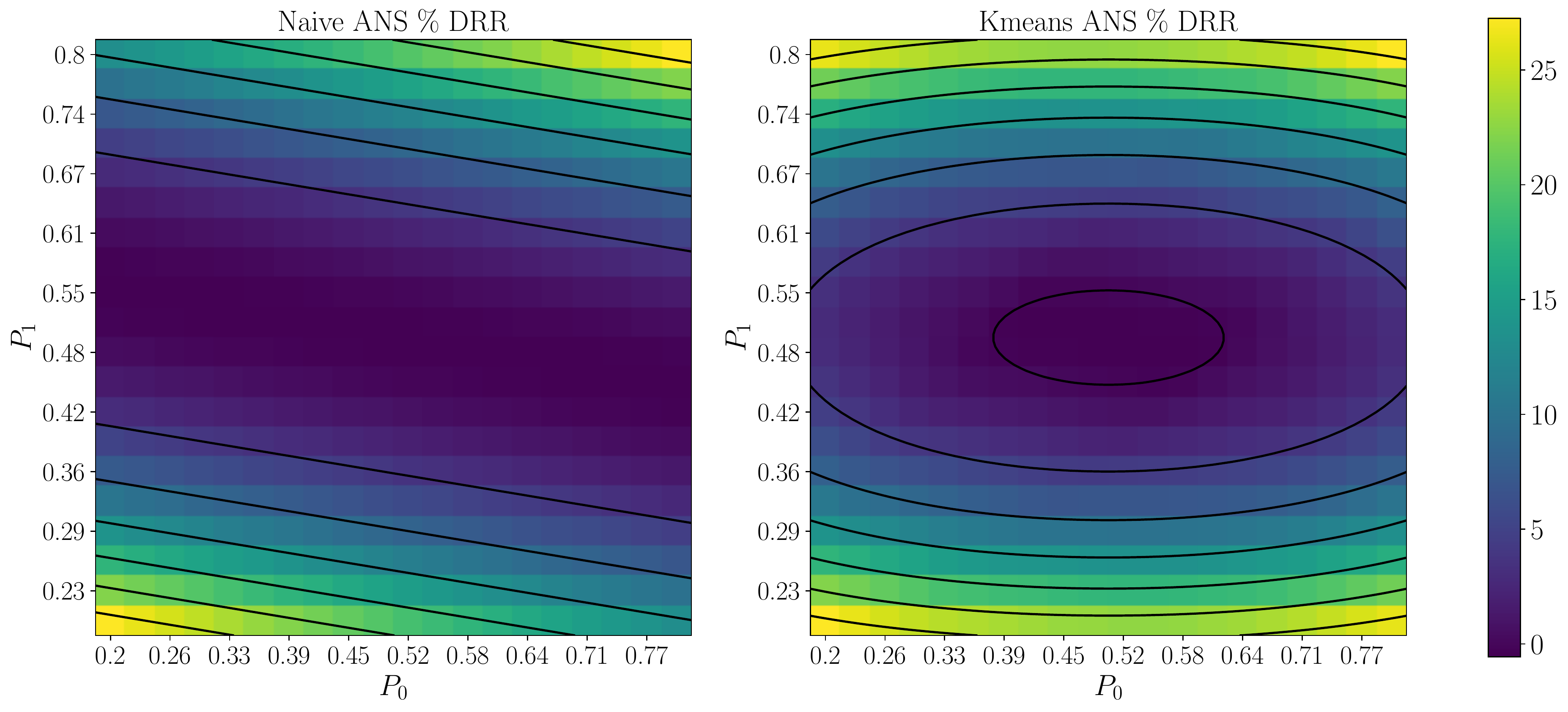} 
\caption{Same as Figure \ref{fig:Synthetic-5} with $k=7$.}\label{fig:Synthetic-7}
\end{figure}

Figures \ref{fig:Synthetic-5} and \ref{fig:Synthetic-7} report empirical
DRR values for ZSTD and ANS coding, for data generated according to the symmetric
model $\cT_{\SBM}(p_0,p_1,k;m,n)$ of Section \ref{sec:Generative}.
We use $m=n=1000$ as before, but now $k\in\{5,7\}$. Results confirm the conclusions of 
 Section \ref{sec:Generative}.

\section{A basic fact}

\begin{lemma}\label{lemma:SourceCodingConverse}
Let $\cA$ be a finite set and $F:\cA\to \{0,1\}^*$ be an injective map.
Then, for any probability distribution $p$ over $\cA$,
\begin{align}
\sum_{a\in \cA}p(a)\len(F(a))\ge H(p) -\log_2\log_2(|\cA|+2)\, .\label{eq:LB-Entropy}
\end{align}
\end{lemma}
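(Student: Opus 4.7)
The plan is to use a modified Kraft-style argument combined with Gibbs' inequality. The key observation is that even though $F$ is only assumed injective (not prefix-free), injectivity alone constrains how many codewords of each length can appear: at most $2^\ell$ distinct strings of length $\ell$ exist, so $n_\ell := |\{a \in \cA : \len(F(a)) = \ell\}| \le 2^\ell$ for every $\ell \ge 0$.

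First I would bound $Z := \sum_{a \in \cA} 2^{-\len(F(a))} = \sum_{\ell \ge 0} n_\ell \, 2^{-\ell}$. Setting $N := |\cA|$, maximizing this sum subject to $n_\ell \le 2^\ell$ and $\sum_\ell n_\ell = N$ is solved greedily by filling the shortest lengths first: take $n_\ell = 2^\ell$ for $\ell = 0, 1, \dots, L-1$ and $n_L = N - (2^L - 1)$ where $L = \lfloor \log_2(N+1) \rfloor$. A direct calculation then gives the clean bound
\begin{equation*}
Z \;\le\; \log_2(N+1) \;\le\; \log_2(|\cA|+2).
\end{equation*}
This elementary combinatorial step is the only real content of the argument.

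Next I would apply Gibbs' inequality. Define the sub-probability weights $q(a) := 2^{-\len(F(a))}/Z$, which form a bona fide probability distribution on $\cA$. Non-negativity of the Kullback--Leibler divergence $D(p \| q) \ge 0$ yields $-\sum_a p(a) \log_2 q(a) \ge H(p)$, i.e.
\begin{equation*}
\sum_{a \in \cA} p(a)\, \len(F(a)) \;+\; \log_2 Z \;\ge\; H(p).
\end{equation*}
Rearranging and substituting the bound on $Z$ gives exactly \eqref{eq:LB-Entropy}.

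There is no significant obstacle: the injectivity constraint yields the Kraft-type inequality $Z \le \log_2(|\cA|+2)$ by a greedy counting argument, and then Gibbs does the rest. The only point requiring minor care is the arithmetic verification that the greedy maximum of $\sum_\ell n_\ell 2^{-\ell}$ is indeed bounded by $\log_2(|\cA|+2)$ (rather than a looser constant), which can be checked by splitting according to whether $|\cA|+1$ is a power of two.
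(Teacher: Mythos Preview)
Your proof is correct but takes a different route from the paper's. The paper argues via the chain rule: after replacing $F$ by the length-minimizing injection (shortest strings to most probable symbols), it writes $H(A)\le H(L)+H(A\mid L)$ where $L=\len(F(A))$, bounds $H(A\mid L=\ell)\le \ell$ by injectivity, and bounds $H(L)\le \log_2\ell_K\le \log_2\log_2(|\cA|+2)$ since $L$ ranges over at most $\ell_K$ values. Your argument instead bounds the Kraft sum $Z=\sum_a 2^{-\len(F(a))}$ by a greedy packing computation, obtaining $Z\le \log_2(|\cA|+1)$, and then applies Gibbs' inequality with the normalized distribution $q(a)=2^{-\len(F(a))}/Z$. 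Your approach is the classical ``relaxed Kraft'' route and even yields a marginally sharper constant; the paper's chain-rule decomposition makes explicit that the overhead is exactly the entropy of the codeword length. Both are short and self-contained; the only care point in yours---that the greedy maximum of $\sum_\ell n_\ell 2^{-\ell}$ is at most $\log_2(N+1)$---follows from the concavity bound $\log_2 x\ge x-1$ on $[1,2]$, as you indicate.
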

\begin{proof}
Assume without loss of generality that $\cA = \{1,\dots, M\}$, with $|\cA|=K$, and that the elements of $\cA$
have all non-vanishing probability and are ordered by decreasing probability $p_1\ge p_2\ge \cdots\ge p_K>0$.
Let $N_\ell := 2+4+\dots+2^\ell=2{\ell+1}-2$. 
Then the expected length is minimized any map $F$ such that $\len(F(a))= \ell$ for $N_{\ell-1}\le a\le N_{\ell}$
with the maximum length $\ell_K$ being defined by $N_{\ell_K-1}< K \le N_{\ell_K}$. For $A\sim p$, 
$L:=\len(F(A))$, we have
\begin{align*}
H(p) &:= H(A) \stackrel{(a)}{\le} H(L) +H(A|L) \\
& \le \log_2 \ell_K + \sum_{\ell=1}^{\ell_K}\prob(L=\ell) H(A|L=\ell)\\
&\stackrel{(b)}{\le} \log_2 \ell_K + \sum_{\ell=1}^{\ell_M}\prob(L=\ell) \, \ell\\
&\le \log_2 \log_2(K+2)+  \sum_{a\in \cA}p(a)\len(F(a))\, ,
\end{align*}
where $(a)$ is the chain rule of entropy and $(b)$ follows because by injectivity,
given $\len(F(A))=\ell$, $A$ can  take at most $2^{\ell}$ values.
\end{proof}

\section{Proofs of results on ideal compression}
\label{sec:ProofIdeal}

\subsection{Proof of Lemma \ref{lemma:EntropyTables}}

We begin by claiming that 
\begin{align}
\frac{1}{mn}H(\bX^{m,n}) = H(X|U,V) +\frac{1}{mn}I(\bX^{m,n};\bU^m,\bV^n)\, .
\label{eq:Entro_MI}
\end{align}
Indeed, by the definition of mutual information, we have 
$H(\bX_{m,n}) = H(\bX_{m,n}|\bU_m,\bV_n) +I(\bX_{m,n};\bU_m,\bV_n)$. Equation 
\eqref{eq:Entro_MI} follows by noting that
\begin{align*}
 H(\bX^{m,n}|\bU^m,\bV^n) &= \sum_{\bu\in\cL^m} \sum_{\bv\in\cL^n}
\prob(\bU^m=\bu,\bV^n=\bv)\, H(\bX^{m,n}|\bU^m= \bu,\bV^n=\bv)\\
&\stackrel{(a)}= \sum_{i=1}^m\sum_{j=1}^n\sum_{\bu\in\cL^m} \sum_{\bv\in\cL^n}
\prob(\bU^m=\bu,\bV^n=\bv)\, H(X_{i,j}|U_i=u,V_j=v)\\
&= \sum_{i=1}^m\sum_{j=1}^n\, 
 \sum_{\bu\in\cL} \sum_{\bv\in\cL} \prob(U_i=u_i,V_j=v_j)H(X_{i,j}|U_i=u_i,V_j=v_j)\\
 &=\sum_{i=1}^m\sum_{j=1}^n H(X_{i,j}|U_i,V_j)\stackrel{(b)}{=}mn H(X_{1,1}|U_1,V_1)\, ,
 \end{align*}
 where $(a)$ follows from the fact that the $(X_{i,j})$ are conditionally independent given 
 $\bU^m$, $\bV^n$, ad since the conditional distribution of $X_{i,j}$
 only depends on $\bU^m$, $\bV^n$  via $U_i$, $V_j$; $(b)$ holds because 
 the triples $(X_{i,j},U_i,V_j)$ are identically distributed.
 
The lower bound in Eq.~\eqref{eq:SimpleEntroBD} holds because mutual information is non-negative,
and the upper bound because $I(\bX^{m,n};\bU^m,\bV^n)\le H(\bU^m,\bV^n)= m H(U_1)+n H(V_1)$.

Finally, to prove Eq.~\eqref{eq:Fano}, define
\begin{align}
\pi_{\bU,\bX}:= \arg\min_{\pi\in \Perm_{\cL}}\frac{1}{m}\sum_{i=1}^m \bfone_{\hu_i\neq \pi(u_i)}\, ,
\;\;\;\; \pi_{\bV,\bX}:= \arg\min_{\pi\in \Perm_{\cL}}\frac{1}{n}\sum_{i=1}^n \bfone_{\hv_i\neq \pi(v_i)}\, ,
\end{align}
If the minimizer is not unique, one can be chosen arbitrarily. 
We then have holds because
\begin{align}
I(\bX^{m,n};\bU^m,\bV^n) &= H(\bU^m,\bV^n)- H(\bU^m,\bV^n|\bX^{m,n})\nonumber\\
& \ge  mH(U_1)+nH(V_1) -H(\bU^m|\bX^{m,n})-H(\bV^n|\bX^{m,n})\nonumber\\
&\ge  mH(U_1)+nH(V_1) -\big[H(\bU^m|\bX^{m,n},\pi_{\bU,\bX})+I(\bU^m;\pi_{\bU,\bX}|\bX^{m,n})\big]\\
&\phantom{AAAA}-\big[H(\bV|\bX^{m,n},\pi_{\bV,\bX})+I(\bV^n;\pi_{\bV,\bX}|\bX^{m,n})\big]\nonumber\\
& \ge mH(U_1)+nH(V_1) -H(\bU^m|\bX^{m,n},\pi_{\bU,\bX})-H(\bV|\bX^{m,n},\pi_{\bV,\bX})-2|\cL|\log_2(|\cL|)
\, ,\label{eq:InfoClaim}
\end{align}
where in the last inequality we used the fact that
$I(\bU^m;\pi_{\bU,\bX}|\bX^{m,n})\le H(\pi_{\bU,\bX})\le \log_2(|\cL|!)$.

Now, consider the term $H(\bU^m|\bX^{m,n},\pi_{\bU,\bX})$. Letting $\bY:=(\bX^{m,n},\pi_{\bU,\bX})$
the stated accuracy assumption implies that there exists an estimators
$\hbu^+ = \hbu^+(\bY)$ such that 
\begin{align*}
\eps_U=\frac{1}{m}\sum_{i=1}^m \eps_{U,i}\, ,\;\;\;\;\;\eps_{U,i}:= \prob\big(\hu^+_i(\bY)\neq U_i\big)
\, .
\end{align*}
By Fano's inequality
\begin{align*}
H(\bU^m|\bX^{m,n},\pi_{\bU,\bX})&\le \sum_{i=1}^m H(U_i|\bX^{m,n},\pi_{\bU,\bX})\\
&\le \sum_{i=1}^m \big[\entro(\eps_{U,i})+\eps_{U,i}\log(|\cL|-1)\big]\\
&\le  m\big[\entro(\eps_{U})+\eps_{U}\log(|\cL|-1)\big]\, ,
\end{align*}
where the last step follows by Jensen's inequality.
The claim \eqref{eq:Fano} follows by substituting this bound in Eq.~\eqref{eq:InfoClaim}
and using a similar bound for $H(\bV^n|\bX^{m,n},\pi_{\bV,\bX})$.

\subsection{Proof of Lemma \ref{lemma:AEP}}

We begin with a technical fact.
\begin{lemma}\label{lemma:TableConcentration}
Let $\bxi = (\xi_{ij})_{i\le m,j\le n}$, $\bsigma = (\sigma_{i})_{i\le m}$,
$\btau = (\tau_{j})_{\j\le n}$ be collections of mutually independent random variables
taking values in a measurable space $\cZ$.
$x:\cZ^3\to\cZ$, $F:\cZ^{m\times n}\to\reals$. Define $\bx(\bxi,\bsigma,\btau)\in\reals^{m\times n}$
via $\bx(\bxi,\bsigma,\btau)_{ij} = x(\xi_{ij},\sigma_i,\tau_j)$.

Given a vector of independent random variables $\bz$, we let
$\Var_{z_i}(f(\bz)):=\E_{z_i}[(f(\bz)-\E_{z_i}f(\bz))^2]$.
 Define the quantities
\begin{align}
B_* &:=\max_{\substack{\bx,\bx'\in\cZ^{m\times n}\\
d(\bx,\bx')\le 1}}\big|F(\bx)-F(\bx'')\big|\, ,\\
B_1 &:= 
\max_{\substack{\bsigma,\bsigma'\in\cZ^{m}\\
d(\bsigma,\bsigma')\le 1}} \max_{\btau\in \cZ^n}\big|\E_{\bxi}F(\bx(\bxi,\bsigma,\btau))-\E_{\bxi}F(\bx(\bxi,\bsigma',\btau))\big|\, ,
\label{eq:B1Def}\\
B_2&:= \max_{\bsigma\in \cZ^m} \max_{\substack{\btau,\btau'\in\cZ^{n}\\
d(\btau,\btau')\le 1}} 
 \big|\E_{\bxi}F(\bx(\bxi,\bsigma,\btau))-\E_{\bxi}F(\bx(\bxi,\bsigma,\btau'))\big|\, ,\label{eq:B2Def}\\
V_*&:= \sup_{\bxi,\btau,\bsigma}\sum_{i\le m,j\le n}\Var_{\xi_{ij}}\big\{F(\bx(\bxi,\bsigma,\btau))\big\}\, ,
\label{eq:VstarDef}\\
V_1 &:= \sup_{\btau,\bsigma} \sum_{i\le m} \Var_{\sigma_i}\big\{\E_{\bxi}F(\bx(\bxi,\bsigma,\btau)\big\}\, ,
\label{eq:V1Def}\\
V_2 &:= \sup_{\btau,\bsigma} \sum_{j\le n} \Var_{\sigma_j}\big\{\E_{\bxi}F(\bx(\bxi,\bsigma,\btau))\big\}\, .
\label{eq:V2Def}
\end{align}
Then, for any $t\ge 0$,  the following holds with probability at least $1-8e^{-t}$:
\begin{align}
\big|F(\bx(\bxi,\bsigma,\btau))-\E F(\bx(\bxi,\bsigma,\btau))\|\le 
2\max(\sqrt{2V_*t}+\sqrt{2V_1t}+\sqrt{2V_2t};(B_*+B_1+B_2)t)\, .
\end{align}
\end{lemma}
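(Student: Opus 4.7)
The plan is a three-stage Doob-type decomposition followed by a Bernstein-type concentration inequality at each stage. Writing
\[
g(\bsigma,\btau) := \E_{\bxi} F(\bx(\bxi,\bsigma,\btau)), \qquad h(\btau) := \E_{\bsigma} g(\bsigma,\btau),
\]
I would decompose the total deviation as
\[
F(\bx) - \E F(\bx) = T_* + T_1 + T_2,
\]
with $T_* := F(\bx) - g(\bsigma,\btau)$, $T_1 := g(\bsigma,\btau) - h(\btau)$, $T_2 := h(\btau) - \E h(\btau)$. By construction each $T_i$ is a centered function of a single one of the three independent collections once one conditions on the others: $T_*$ is a function of $\bxi$ (with $\bsigma,\btau$ fixed), $T_1$ of $\bsigma$ (with $\btau$ fixed, $\bxi$ integrated out), and $T_2$ of $\btau$ alone.

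To each $T_i$ I would then apply a Bernstein-type inequality for functions of independent random variables with bounded-differences constant $B$ and Efron-Stein sum of conditional variances $V$, yielding a two-sided tail bound of the shape $\prob(|T_i| \ge \max(\sqrt{2 V t},\, B t)) \le 2 e^{-t}$ (e.g.\ the Boucheron-Lugosi-Massart / Bousquet inequality). The parameters match exactly by construction: $(B_*, V_*)$ control the bounded differences and Efron-Stein variance sum of $F$ as a function of $\bxi$ with $\bsigma,\btau$ frozen, by the definitions \eqref{eq:B1Def}--\eqref{eq:VstarDef}; analogously $(B_1,V_1)$ and $(B_2,V_2)$ control the same quantities for $g$ and $h$ via \eqref{eq:V1Def}--\eqref{eq:V2Def}. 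The bound on $T_*$ is conditional on $(\bsigma,\btau)$, but since it holds uniformly in the conditioning values it integrates to an unconditional statement; analogously for $T_1$ conditioned on $\btau$.

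Combining the three conditional bounds via the triangle inequality and the elementary fact that $\max(a_1,b_1)+\max(a_2,b_2)+\max(a_3,b_3) \le 2\max(a_1+a_2+a_3,\, b_1+b_2+b_3)$ for nonnegative reals, together with a union bound over the six one-sided tail events (two-sided $\times$ three stages), delivers the claimed deviation bound. The probability accounting then gives $6 e^{-t}$ from the union bound; the $8 e^{-t}$ in the statement absorbs minor looseness in Bernstein constants and in the peeling off of conditional events.

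The main potential obstacle is the precise choice of concentration inequality: one needs a two-sided, non-self-bounded, Bernstein-type bound whose rate is exactly controlled by the sum of conditional variances in the sense of \eqref{eq:VstarDef}--\eqref{eq:V2Def}. Once this is in hand, verifying that $g$ and $h$ inherit the bounded-differences and Efron-Stein parameters is essentially tautological from the definitions, so the only real work is to combine cleanly and to handle the conditioning so that the three concentration statements multiply into one unconditional bound.
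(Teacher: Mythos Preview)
Your proposal is correct and matches the paper's proof essentially line for line: the paper defines $E(\bsigma,\btau)=\E_{\bxi}F$ and $L(\btau)=\E_{\bsigma,\bxi}F$ (your $g$ and $h$), uses the same three-term telescoping, and combines by triangle inequality and union bound. The only point worth noting is that the ``main potential obstacle'' you flag is resolved in the paper by applying Freedman's inequality to the Doob martingale $X_k=\E[f(\bz)\mid\cF_k]$, which directly yields $|f(\bz)-\E f(\bz)|\le\max(\sqrt{2V_0t},\,2B_0t/3)$ with $V_0=\sup_{\bz}\sum_k\Var_{z_k}(f(\bz))$ and $B_0$ the bounded-differences constant---so you need not invoke the Bousquet/BLM machinery.
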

\begin{proof}
Let $\bz\in\cZ^N$ be a vector of independent random variables and $f:\cZ^N\to \reals$. 
Define the martingale $X_k :=\E[f(\bz)|\cF_k]$ (where $\cF_k:=\sigma(z_1,\dots,z_k)$). Then we have
\begin{align}
 \ess\sup|X_{k}-X_{k-1}|&\le B_0:= \sup_{d(\bz,\bz')\le 1}|f(\bz)-f(\bz')|\, ,\\
 \sum_{k=1}^N\E[(X_{k}-X_{k-1})^2|\cF_{k-1}]& = 
  \sum_{k=1}^N\E\big[(\E[f|\bz_{< k}, z_k]- \E_{z'_k}\E[f|\bz_{< k},z'_k])^2\big|\bz_{<k}\big] \\
 & \le V_0:=
 \sup_{\bz\in\cZ^N}\sum_{k=1}^N\Var_{z_k}\big(f(\bz)\big)\, .
 \end{align}
 By Freedman's inequality, with probability at least $1-2e^{-t}$, we have
 \begin{align}
\big| f(\bz)-\E f(\bz) \big|\le\max\Big(\sqrt{2V_0t}:\, \frac{2B_0t}{3}\Big)\, .
\end{align}

Define $E(\bsigma,\btau) := \E_{\bxi}F(\bx(\bxi,\bsigma,\btau))$, 
$L(\btau):=\E_{\bsigma,bxi}F(\bx(\bxi,\bsigma,\btau))$. Applying the above inequality, 
each of the following holds with probability at least $1-2e^{-t}$
\begin{align}
|F(\bx(\bxi,\bsigma,\btau))-E(\bsigma,\btau)|\le \max\Big(\sqrt{2V_*t}:\, \frac{2B_*t}{3}\Big)\, ,\\
|E(\bsigma,\btau)-L(\btau)|\le \max\Big(\sqrt{2V_1t}:\, \frac{2B_1t}{3}\Big)\, ,\\
|L(\btau)-\E F(\bx)|\le \max\Big(\sqrt{2V_2t}:\, \frac{2B_2t}{3}\Big)\, ,
\end{align}
and the claim follows by union bound.
\end{proof}

We next state and prove a more stronger version of Lemma \ref{lemma:AEP}.
\begin{lemma}
For $\bX\in\cX^{m\times n}$, let  $\rP(\bX) = \rP_{Q,\ro,\co;m,n}(\bX)$ the probability 
of table $\bX$ under  the  model $\cT(Q,\ro,\co;m,n)$, i.e.
\begin{align}
\rP(\bX) = \sum_{\bu\in\cL^m}\sum_{\bv\in\cL^n}\prod_{(i,j)\in[m]\times[n]}
Q(X_{ij}|u_i,v_i)\prod_{i\in[m]} 
\ro(u_i)\prod_{j\in[n]} \co(v_j)\, .
\end{align}
Define the following quantities:
\begin{align}
M_*&:= \max_{x,x'\in\cX}\max_{u,v\in\cL}\Big|\log\frac{Q(x|u,v)}{Q(x'|u,v)}\Big|\, ,\\
M_1&:= \max_{\tau,\sigma,\sigma'}\|Q(\,\cdot\,|\sigma,\tau)-Q(\,\cdot\,|\sigma',\tau)\|_{\sTV}
\max_{u,v,x,x''}\Big|\log\frac{Q(x|u,v)}{Q(x'|u,v)}\Big|\, ,\\
M_2&:= \max_{\tau,\tau',\sigma}\|Q(\,\cdot\,|\sigma,\tau)-Q(\,\cdot\,|\sigma,\tau'')\|_{\sTV}
\max_{u,v,x,x''}\Big|\log\frac{Q(x|u,v)}{Q(x'|u,v)}\Big|\, ,\\
s_* &:= \frac{1}{2}\max_{u_0,v_0\in\cL}\sum_{x,x'\in\cX}Q(x|u_0,v_0)Q(x'|u_0,v_0)
\max_{u,v\in\cL}\Big(\log\frac{Q(x|u,v)}{Q(x'|u,v)}\Big)^2\, ,\\
s_1 &:=  \frac{1}{2}\max_{u_0,u_0',v_0\in\cL}\|Q(\,\cdot\,|u_0,v_0)-Q(\,\cdot\,|u'_0,v_0)\|_{\sTV}
\max_{x,x'\in\cL}\max_{u,v\in\cL}\Big(\log\frac{Q(x|u,v)}{Q(x'|u,v)}\Big)^2\, ,\\
s_2 &:=  \frac{1}{2}\max_{u_0,v_0,v_0'\in\cL}\|Q(\,\cdot\,|u_0,v_0)-Q(\,\cdot\,|u_0,v'_0)\|_{\sTV}
\max_{x,x'\in\cL}\max_{u,v\in\cL}\Big(\log\frac{Q(x|u,v)}{Q(x'|u,v)}\Big)^2\, .
\end{align}
Then, for $\bX\sim\cT(Q,\ro,\co;m,n)$ and any 
$t\ge 0$ the following bound holds with probability at least $-2\, e^{-t}$:
\begin{align}
\big|-\log\rP(\bX) - H(\bX)\big|\le  3\max\Big(\sqrt{s_*mnt}+\sqrt{s_1mn^2t}+
\sqrt{s_2m^2nt},M_*+M_1n+M_2m\Big)\, .
\end{align}
\end{lemma}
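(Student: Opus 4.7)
The plan is to reduce the statement to Lemma~\ref{lemma:TableConcentration} applied to $F(\bX):=-\log\rP(\bX)$. Since $\E[F(\bX)] = H(\bX^{m,n})$, concentration of $F$ around its mean is exactly the desired AEP bound. Write each entry as $X_{ij} = x(\xi_{ij}, u_i, v_j)$ for independent random variables $\{\xi_{ij}\} \cup \{u_i\} \cup \{v_j\}$, so that $\bX = \bX(\bxi,\bu,\bv)$ with $\bsigma = \bu$, $\btau = \bv$. The task then reduces to bounding the six moduli $B_*, B_1, B_2, V_*, V_1, V_2$ of Lemma~\ref{lemma:TableConcentration} by $M_*$, $nM_1$, $mM_2$, $mn\,s_*$, $mn^2 s_1$, $m^2 n\, s_2$ respectively, and plugging in.

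The key structural observation is that $\rP$ is a mixture over the latents: $\rP(\bX) = \E_{\bu,\bv}\bigl[\prod_{ij} Q(X_{ij}\mid u_i,v_j)\bigr]$. If $\bX, \bX'$ differ only at entry $(i,j)$, say at values $x$ and $x'$, the integrands share a nonnegative common factor $\prod_{(k,l)\neq(i,j)} Q(X_{kl}\mid u_k,v_l)$, and the remaining factor $Q(x\mid u_i,v_j)/Q(x'\mid u_i,v_j)$ is bounded pointwise in $(u_i,v_j)$ by $e^{M_*}$; integration preserves this bound, so $B_* \le M_*$. Feeding the same pointwise log-ratio bound into the identity $\Var_{\xi_{ij}}(F) = \tfrac12 \E_{X,X'}\bigl[(F(\bX)-F(\bX'))^2\mid \bu,\bv\bigr]$ (with $X,X' \sim Q(\cdot\mid u_i, v_j)$ i.i.d.) yields $\Var_{\xi_{ij}}(F) \le s_*$, and summing over $mn$ entries gives $V_* \le mn\, s_*$.

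For $B_1$, I would fix all other variables and vary $u_i$; this only perturbs the row-$i$ distribution from $\prod_l Q(\cdot\mid u_i,v_l)$ to $\prod_l Q(\cdot\mid u'_i,v_l)$. Sub-additivity of total variation under product measures bounds this change by $\sum_l \|Q(\cdot\mid u_i,v_l)-Q(\cdot\mid u'_i,v_l)\|_{\sTV}$, and combining with the single-entry bound $B_* \le M_*$ via the coupling characterization of TV gives $|\E_\bxi F|_{u_i} - \E_\bxi F|_{u'_i}| \le n M_1$, so $B_1 \le nM_1$. A column-by-column coupling with the estimate $M_1^2 \le 2s_1$ (which follows directly from the definitions, noting $\|Q-Q'\|_{\sTV}\le 1$) shows that each per-row variance contribution is at most $n^2 s_1$, hence $V_1 \le m n^2 s_1$. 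Swapping rows and columns gives $B_2 \le mM_2$ and $V_2 \le m^2 n\, s_2$ analogously. Substituting into Lemma~\ref{lemma:TableConcentration} and absorbing constants yields the claim.

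The main obstacle is the $B_1, V_1$ analysis (and its $B_2, V_2$ mirror): because changing a single latent $u_i$ perturbs an entire row rather than a single entry, the naive bounded-differences payload is $n M_*$ (not $M_*$), and one must extract the additional $\|Q-Q'\|_{\sTV}$ factor through a careful column-by-column coupling in order for the TV-weighted quantities $M_1, s_1$ to appear instead of $M_*, M_*^2$.
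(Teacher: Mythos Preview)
Your proposal is correct and follows essentially the same route as the paper: apply Lemma~\ref{lemma:TableConcentration} to $F(\bX)=-\log\rP(\bX)$ with $\bsigma=\bu$, $\btau=\bv$, and bound the six moduli by $M_*$, $nM_1$, $mM_2$, $mn\,s_*$, $mn^2 s_1$, $m^2n\,s_2$. The only cosmetic difference is that the paper writes the one- and multi-entry perturbations via the posterior identity $F(\bx)-F(\bx')=-\log\E_{\bu,\bv|\bx}\bigl[\prod Q(x'_{\cdot}|u,v)/Q(x_{\cdot}|u,v)\bigr]$ and then bounds the ratio pointwise, whereas you use the equivalent Hamming-Lipschitz bound $|F(\bx)-F(\bx')|\le M_*\,d_H(\bx,\bx')$ together with a column-by-column optimal coupling; your shortcut $\Var_{\sigma_i}(\E_\bxi F)\le\tfrac12(nM_1)^2\le n^2 s_1$ via $M_1^2\le 2s_1$ is in fact a slightly cleaner way to reach the same $V_1\le mn^2 s_1$ than the paper's direct calculation.
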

\begin{proof}
Let $\bsigma = (\sigma_i)_{i\le m}\sim_{iid}r$, $\btau = (\tau_i)_{i\le n}\sim_{iid}c$,
$\bxi = (\xi_{ij})_{i\le m, j\le n}\sim_{iid}\Unif([0,1])$, and $x:[0,1]\times\cL\times\cL\to \cX$
be such that $x(\xi_{ij},\sigma_i,\tau_j)|_{\sigma_i,\tau_j}\sim Q(\,\cdot\,|\sigma_i,\tau_j)$.
We define $F(\bx) = -\log\rP(\bx)$, and will apply Lemma \ref{lemma:TableConcentration}
to this function. Using the notation from that lemma, we claim that $B_*\le M_*$, $B_1\le M_1n$, $B_2\le M_2m$,
and $V_*\le mn s_*$, $V_1\le mn^2 s_1$, $V_2\le m^2n s_2$. 

Note that, if $(x_{ij})$, $(x'_{ij})$ differ only for entry $i,j$,
then
\begin{align}
F(\bx)-F(\bx') = -\log\rE_{\bu,\bv|\bx}\Big\{\frac{Q(x'_{ij}|u_i,v_j)}{Q(x_{ij}|u_i,v_j)}
\Big\}\, ,
\end{align}
where $\rE_{\bu,\bv|\bx}$ denotes expectation with respect to the posterior measure
$\rP(\bu,\bv|\bX=\bx)$. This immediately implies $B_*\le M_*$.

Next consider the constant $B_1$ defined in Eq.~\eqref{eq:B1Def}. Using the exchangeability of 
the $(\xi_{i,\cdot},\sigma_i)$, we get
\begin{align*}
B_1 &= \max_{\btau}\big|\E_{\bxi}\log
\rE_{\bu,\bv|\bx}\Big\{\prod_{j=1}^n\frac{Q(x(\xi_{1,j},\sigma'_1,\tau_j)|u_1,v_j)}{Q(x(\xi_{1,j},\sigma_1,\tau_j)|u_1,v_j)}\Big\}
\big|\\
&\le \max_{\btau}\E_{\bxi}\max_{\bu,\bv}
\Big|\log\Big\{\prod_{j=1}^n\frac{Q(x(\xi_{1,j},\sigma'_1,\tau_j)|u_1,v_j)}{Q(x(\xi_{1,j},\sigma_1,\tau_j)|u_1,v_j)}\Big\}\Big|\\
&\le \max_{\btau}\sum_{j=1}^n\E_{\bxi}\max_{\bu,\bv}
\Big|\log\Big\{\frac{Q(x(\xi_{1,j},\sigma'_1,\tau_j)|u_1,v_j)}{Q(x(\xi_{1,j},\sigma_1,\tau_j)|u_1,v_j)}\Big\}\Big|\\
&\le n\max_{\tau,\sigma,\sigma'}\E_{\xi}\max_{u,v}
\Big|\log\frac{Q(x(\xi,\sigma'_1,\tau_j)|u,v)}{Q(x(\xi,\sigma_1,\tau_j)|u,v)}\Big|\\
&\le n\max_{\tau,\sigma,\sigma'}\|Q(\,\cdot\,|\sigma,\tau)-Q(\,\cdot\,|\sigma',\tau)\|_{\sTV}
\max_{u,v,x,x'}\Big|\log\frac{Q(x|u,v)}{Q(x'|u,v)}\Big|=M_1\, .
\end{align*}
The bound $B_2\le M_2m$ is proved analogously.

Consider now the quantity $V_*$ of Eq.~\eqref{eq:VstarDef}. 
Denote by $\bxi_{(ij)}(t)$ the array obtained by replacing entry $(i,j)$ in $\bxi$ by $t$,
and by $\bx(t)=\bx(\bxi_{(ij)}(t),\bsigma,\btau)$.
Then we have
\begin{align*}
\Var_{\xi_{ij}}(F(\bx)) &= \frac{1}{2} \E_{\xi',\xi''}
\Big\{\big(F(\bx(\bxi_{(ij)}(\xi'),\bsigma,\btau)) - F(\bx(\bxi_{(ij)}(\xi''),\bsigma,\btau))\big)^2\Big\}\\
&  =\frac{1}{2} \E_{\xi',\xi''}
\Big\{\Big(\log \rE_{\bu,\bv|\bx(\xi')}\Big\{\frac{Q(x(\xi'',\sigma_i,\tau_j)|u_i,v_j)}{Q(x(\xi',\sigma_i,\tau_j)|u_i,v_j)}
\Big\}\Big)^2\Big\}\\
&\le \frac{1}{2} \E_{\xi',\xi''}\max_{u,v}\Big(
\log\Big\{\frac{Q(x(\xi'',\sigma_i,\tau_j)|u,v)}{Q(x(\xi',\sigma_i,\tau_j)|u,v)}
\Big\}\Big)^2\\
&=\frac{1}{2} \sum_{x,x'} Q(x|\sigma,\tau)Q(x'|\sigma,\tau)
\max_{u,v}\Big(
\log\Big\{\frac{Q(x|u,v)}{Q(x'|u,v)}
\Big\}\Big)^2\, .
\end{align*}
We then have, as claimed
\begin{align*}
V_*&\le \max_{\bxi,\bsigma,\btau} \sum_{i\le m,j\le n}\Var_{\xi_{ij}}\big\{F(\bx)\big\}\\
&\le mn \max_{\bxi,\bsigma,\btau}\Var_{\xi_{ij}}\big\{F(\bx)\big\} \le mn s_*\, .
\end{align*}

Finally consider the quantity $V_1$ of Eq.~\eqref{eq:V1Def} (the argument is similar
for $V_2$). Denote by $\bsigma_{(i)}(t)$ the vector obtained by replacing entry $i$ in $\bsigma$ by $t$.
Proceeding as above, we have
\begin{align*}
\Var_{\sigma_i}(\E_{\bxi}F(\bx)) &= \frac{1}{2} \E_{\sigma',\sigma''}
\Big\{\big(\E_{\bxi}F(\bx(\bxi,\bsigma_{(i)}(\sigma'),\btau)) - \E_{\bxi}F(\bx(\bxi,\bsigma_{(i)},\btau))\big)^2\Big\}\\
&  =\frac{1}{2} \E_{\sigma',\sigma''}
\Big\{\Big(\E_{\bxi}\log \rE_{\bu,\bv|\bx(\sigma')}\Big\{
\prod_{j=1}^n\frac{Q(x(\xi_{ij},\sigma'',\tau_j)|u_i,v_j)}{Q(x(\xi_{ij},\sigma',\tau_j)|u_i,v_j)}
\Big\}\Big)^2\Big\}\\
&\le \frac{1}{2} \E_{\sigma',\sigma''}
\Big\{\Big(\E_{\bxi}\log \Big\{
\prod_{j=1}^n\max_{u,v}\frac{Q(x(\xi_{ij},\sigma'',\tau_j)|u,v)}{Q(x(\xi_{ij},\sigma',\tau_j)|u,v)}
\Big\}\Big)^2\Big\}\\
&\le  \frac{1}{2} \E_{\sigma',\sigma''}
\Big\{\Big(\sum_{j=1}^nE_{\xi}\log \Big\{
\max_{u,v}\frac{Q(x(\xi,\sigma'',\tau_j)|u,v)}{Q(x(\xi,\sigma',\tau_j)|u,v)}
\Big\}\Big)^2\Big\}\\
&\le  \frac{n^2}{2} \max_{\tau}\E_{\sigma',\sigma''}
\Big\{\Big(\E_{\xi}\log \Big\{
\max_{u,v}\frac{Q(x(\xi,\sigma'',\tau)|u,v)}{Q(x(\xi,\sigma',\tau)|u,v)}
\Big\}\Big)^2\Big\}\\
&\le \frac{n^2}{2} \max_{\tau,\sigma,\sigma'}\|Q(\,\cdot\,|\sigma,\tau)-Q(\,\cdot\,|\sigma',\tau)\|_{\sTV}
\max_{x,x'}\Big(\E_{\xi}\log \Big\{
\max_{u,v}\frac{Q(x'|u,v)}{Q(x|u,v)}
\Big\}\Big)^2 = n^2 s_1\, .
\end{align*}
Therefore
\begin{align*}
V_1 = \max_{\bsigma,\btau}\sum_{i=1}^m\Var_{\sigma_i}(\E_{\bxi}F(\bx))\le mn^2 s_1\, .
\end{align*}
This finishes the proof.
\end{proof}

\section{Proofs for finite state encoders}
\label{sec:FiniteState}

Recall from Section \ref{sec:Failure} that a finite-state encoder is defined by 
a triple $(\Sigma,f,g)$.
Formally we can define the action of $f$, $g$ on $\bX^n\in\cX^n$ recursively via
(recall that $\oplus$ denotes concatenation)
\begin{align}
f_{m+1}(\bX^{m+1},s_0) &= f_{m}(\bX^{m},s_0) \oplus f(X_{m+1},g(\bX^{m},s_0))\, ,\\
g_{m+1}(\bX^{m+1},s_0) &= g(X_{m+1},g(\bX^{m},s_0))\, ,
\end{align}
and the encoder is thus given by $E(\bX^n) = f_n(\bX^{n},s_{\sinit})$.

We say that the state space $\Sigma$ is non-degenerate if, for each $s_1\in \Sigma$
there exists $m$, $\bX^m\in\cX^m$ such that $g_{m}(\bX^{m},s_{\sinit})=s_1$.
Notice that if state space is degenerate, we could always remove one or more symbols from $\Sigma$
without changing the encoder, and making the state-space non-degenerate.
For this reason, we will hereafter assume non-degeneracy without mentioning it.

We say that the FS encoder is information lossless (IL) if for any $n\in\naturals$,
$\bX^n\mapsto f_n(\bX^{n},s_{\sinit})$ is injective.
\begin{remark}
An information-lossless encoder satisfies a stronger condition:
for any $m\in\naturals$ and any $s_*\in \Sigma$, the map $\bX^m\mapsto f_m(\bX^{m},s_*)$
is injective.

Indeed, assume this were not the case. Then there would exist two distinct inputs
 $\bX^m$, $\tbX_1^m\in\cX^m$ and a state $s_*\in \Sigma$ such that 
$f_m(\bX^{m},s_*)=f_m(\tbX^{m},s_*)$. By non-degeneracy, there exists 
$a_1^\ell\in\cX^\ell$ such that $s_* = g_{\ell}(a_1^{\ell},s_{\init})$,
Defining $n=\ell+m$, $\bY^n = a_1^{\ell}\oplus \bX^m$,
$\tbY^n = a_1^{\ell}\oplus \tbX^m$, it is not hard to check that these inputs
are distinct but $f_n(\bY^{n},s_{\init})=f_n(\tbY_1^{n},s_{\init})$.
\end{remark}

\begin{proposition}\label{propo:FiniteState}
Define the compression rate on input $x_1^n$ as $\Rate(\bX^n) = \len(f_n(\bX^n,s_{\init}))/(n\log_2|\cX|)$.
Then for any $\ell\ge 1$, the following holds (where $n':=n-2\ell$ and we recall
that $M:=|\Sigma|$):
\begin{align}
\Rate(\bX^n)&\ge \frac{n-2\ell}{n\ell\log_2|\cX|}H(\hp^{\ell}_{\bX_1^{n'}})
-\frac{1}{\ell\log_2|\cX|}\big(\log_2 (|\Sigma|\ell)+\log_2\log_2|\cX|\big)\, .\label{eq:RateLB}
\end{align}
\end{proposition}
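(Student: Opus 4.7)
The plan is to leverage the strengthened injectivity statement from the Remark preceding the proposition, namely that for every $s_* \in \Sigma$ and every $m$, the map $\bY \mapsto f_m(\bY, s_*)$ is injective on $\cX^m$. The overall strategy is: partition (a sub-segment of) $\bX^n$ into disjoint length-$\ell$ blocks, group the blocks by the state in which the encoder begins to process them, invoke Lemma \ref{lemma:SourceCodingConverse} separately inside each state-group, and finally aggregate via a chain-rule style inequality.

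First, I would cut the input into $N = \lfloor n'/\ell\rfloor$ consecutive non-overlapping $\ell$-blocks $\bB_1,\ldots,\bB_N \in \cX^\ell$ (covering a segment of length $n' = n-2\ell$, where the discarded prefix and suffix absorb offset/boundary slack). Let $\sigma_i \in \Sigma$ be the encoder state just before it reads $\bB_i$, let $I_s := \{i : \sigma_i = s\}$, $N_s := |I_s|$, and let $\hat p_s$ be the empirical distribution of $\{\bB_i : i \in I_s\}$ on $\cX^\ell$. Since $f_\ell(\,\cdot\,, s)$ is injective, Lemma \ref{lemma:SourceCodingConverse} applied to each state-group yields
\[
\sum_{i \in I_s} \len\bigl(f_\ell(\bB_i, s)\bigr) \;\ge\; N_s\,\bigl[H(\hat p_s) - \log_2\log_2(|\cX|^\ell + 2)\bigr].
\]
Summing over $s$, and using that $\len(f_n(\bX^n, s_{\init}))$ is at least the total length produced over the $N$ blocks, gives a lower bound on $\len(f_n(\bX^n, s_{\init}))$ in terms of the within-state empirical entropies $H(\hat p_s)$.

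Next, I convert $\sum_s N_s H(\hat p_s)$ into the overall empirical block entropy. Writing $\hp^\ell$ for the empirical distribution of $\bB_1,\ldots,\bB_N$ on $\cX^\ell$ and $\hat q$ for the empirical distribution of $\sigma_1,\ldots,\sigma_N$ on $\Sigma$, the chain rule applied to the joint empirical of $(\sigma_i, \bB_i)$ gives
\[
\sum_s \hat q(s)\, H(\hat p_s) \;\ge\; H(\hp^\ell) - H(\hat q) \;\ge\; H(\hp^\ell) - \log_2|\Sigma|,
\]
so $\sum_s N_s H(\hat p_s) \ge N\,[H(\hp^\ell) - \log_2|\Sigma|]$. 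Combining with Step 1, using $\log_2\log_2(|\cX|^\ell + 2) \le \log_2 \ell + \log_2\log_2|\cX| + O(1)$, and dividing by $n \log_2|\cX|$ produces an inequality of the form of \eqref{eq:RateLB}, with the error term arising as $(\log_2|\Sigma| + \log_2\ell + \log_2\log_2|\cX|)/(\ell\log_2|\cX|)$ and the leading coefficient $N/n = (n-2\ell)/(n\ell)$ up to rounding.

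The step I expect to be the main obstacle is the bookkeeping that identifies the block-empirical $\hp^\ell$ obtained from our specific disjoint partition with the $\hp^\ell_{\bX_1^{n'}}$ appearing in the statement. If the latter denotes the sliding-window empirical, one averages the per-offset lower bounds across the $\ell$ possible starting offsets $k \in \{0,\ldots,\ell-1\}$ and uses concavity of $H$ to move from the average of the disjoint-partition entropies up to the sliding-window entropy (choosing $n' = n - 2\ell$ so that every offset still produces at least $(n-2\ell)/\ell$ full blocks). If instead it denotes the disjoint-offset empirical at a canonical offset, the choice $n'=n-2\ell$ simply provides the slack needed to accommodate the encoder's initialization. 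Either interpretation is handled by the same length-vs-entropy inequalities above; the injectivity-plus-Lemma \ref{lemma:SourceCodingConverse} core is the only non-routine content.
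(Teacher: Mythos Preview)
Your per-state application of Lemma~\ref{lemma:SourceCodingConverse} followed by the chain-rule step $\sum_s \hat q(s) H(\hat p_s)\ge H(\hp^\ell)-\log_2|\Sigma|$ is exactly the core of the paper's argument, and that part is fine. The gap is in your offset-aggregation step. You propose to derive, for each offset $k\in\{0,\dots,\ell-1\}$, a lower bound in terms of the disjoint-block empirical entropy $H(\hp^{\ell,k})$, then average over $k$ and invoke concavity of $H$ to pass to the sliding-window entropy $H(\hp^{\ell}_{\bX_1^{n'}})$. But concavity yields
\[
\frac{1}{\ell}\sum_{k=0}^{\ell-1} H(\hp^{\ell,k}) \;\le\; H\Bigl(\tfrac{1}{\ell}\sum_k \hp^{\ell,k}\Bigr)\;=\;H(\hp^{\ell}_{\bX_1^{n'}}),
\]
which is the wrong direction: having established $\Rate\ge c\cdot\frac{1}{\ell}\sum_k H(\hp^{\ell,k})-\text{err}$, you cannot replace the average by the larger quantity $H(\hp^{\ell}_{\bX_1^{n'}})$ and keep the inequality. (A concrete failure: on the periodic input $abab\cdots$ with $\ell=2$, each disjoint-offset empirical is a point mass with entropy $0$, while the sliding-window entropy is $1$.)

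The remedy, which is precisely what the paper does, is to reverse the order of operations: average over offsets \emph{before} applying Lemma~\ref{lemma:SourceCodingConverse}. Averaging the raw length inequality $\len(f_n(\bX^n,s_{\init}))\ge \sum_i L(\bB_i^{(k)};\sigma_i^{(k)})$ over $k\in\{0,\dots,\ell-1\}$ produces a sum $\sum_m L(\bX_m^{m+\ell-1};s_{m-1})$ over \emph{all} sliding-window positions $m$, which rewrites as $(n-2\ell)\sum_{s,u}\hp^{\ell}_{\bX_1^{n'}}(u,s)\,L(u;s)$ with the sliding-window joint empirical of (block, preceding state). Now apply Lemma~\ref{lemma:SourceCodingConverse} for each fixed $s$ to the distribution $\hp^{\ell}_{\bX_1^{n'}}(\,\cdot\mid s)$; the lemma needs only injectivity of $u\mapsto f_\ell(u,s)$ and a probability distribution on $\cX^\ell$, nothing about disjointness of the underlying blocks. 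Your chain-rule step then finishes the proof exactly as you wrote it. The point is that the left-hand side of Lemma~\ref{lemma:SourceCodingConverse} is linear in $p$, so there is no loss in forming the sliding-window empirical first, whereas the right-hand side $H(p)$ is concave, so forming it last costs you the inequality.
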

\begin{proof}
We will denote by $L(\bX^m;s_*)$ the length of the encoding of $\bX^m$ 
when starting in state $s_*$:
\begin{align}
L(\bX^m;s_*) := \len(f_n(\bX^m,s_*))\,.
\end{align}
We then have, for any $b\in\{0,\dots,\ell-1\}$, and setting by convention $s_0=s_{\init}$,
we get
\begin{align}
\Rate(\bX^n) &\ge \frac{1}{n\log_2|\cX|} \sum_{k=0}^{\lfloor n/\ell\rfloor -2}L(\bX_{k\ell+b+1}^{(k+1)\ell+b};s_{k\ell+b})\, .
\end{align}
By averaging over $b$, and introducing the shorthand $n':=n-2\ell$, we get
\begin{align}
\Rate(\bX^n)&\ge \frac{1}{n\ell\log_2|\cX|} \sum_{m=1}^{(\lfloor n/\ell\rfloor -1)\ell}
L(\bX_{m}^{m+\ell-1};s_{m-1})\\
&\ge \frac{n-2\ell}{n\ell\log_2|\cX|} \sum_{s\in \Sigma}\sum_{u_1^\ell\in\cX^{\ell}}
\hp^{\ell}_{\bX_1^{n'}}(u_1^{\ell},s)\, L(u_{1}^{\ell};s)\\
&\stackrel{(a)}{\ge}  \frac{n-2\ell}{n\ell\log_2|\cX|} \sum_{s\in \Sigma} 
\Big\{ \hp^{\ell}_{\bX_1^{n'}}(s)\, H(\hp^{\ell}_{\bX_1^{n'}}(\,\cdot\,|s))
-\log_2\log_2(|\cX|^{\ell})\Big\}\, ,\label{eq:FirstRB}
\end{align}
where $(a)$ holds by Lemma \ref{lemma:SourceCodingConverse}.
By the chain rule of entropy  (recalling
that $M:=|\Sigma|$), we have:
\begin{align*}
  \sum_{s\in \Sigma} \hp^{\ell}_{\bX_1^{n'}}(s)\, H(\hp_{\bX_1^{n'}}^{\ell}(\,\cdot\,|s))
 &= H(\bX_{1}^{\ell}|S)= H(\bX_1^{\ell}) +H(S|\bX_1^{\ell})-H(S)\\
 &\ge H(\bX_1^{\ell})  - \log_2 M = H(\hp^{\ell}_{\bX_1^{n'}})-\log_2 M\, .
 \end{align*}
The claim \eqref{eq:RateLB} follows by using the last inequality in Eq.~\eqref{eq:FirstRB}.
\end{proof}

\begin{theorem}
Let $\bX^{m,n}\sim\cT(Q,\ro,\co;m,n)$ and $(\Sigma,f,g)$ be an information lossless
 finite state  encoder.
With an abuse of notation, denote $f_{mn}(\bX^{m\times n},s_{\init})\in\{0,1\}^*$ the binary sequence
obtained by applying the finite state encoder to the vector
$\vec(\bX^{m\times n})\in\cX^{mn}$ obtained by scanning $\bX^{m\times n}$ in row-first order.
Define the compression rate by
\begin{align}
\Rate(\bX^{m,n})&:=\frac{\len(f_{mn}(\bX^{m\times n},s_{\init}))}{mn\log_2|\cX|}\,.
\end{align}

Assuming $m>10$, $|\Sigma|\ge |\cX|$, and $\log_2|\Sigma|\le n\log_2|\cX|/9$,
the expected compression rate is lower bounded as follows 
\begin{align}
\E\, \Rate(\bX^{m,n})&\ge \frac{H(X|U)}{\log_2|\cX|}-
10 \sqrt{\frac{\log|\Sigma|}{n\log|\cX|}} \cdot\log(n\log|\Sigma|)\, .
\end{align}
\end{theorem}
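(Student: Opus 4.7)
The plan is to invoke the source-coding converse of Lemma~\ref{lemma:SourceCodingConverse} on a row-by-row basis, exploiting the natural row structure of the serialized table together with the strengthened information-lossless property of the encoder. The intuition is that at the start of each row the encoder's internal state carries at most $\log_2|\Sigma|$ bits of information about the data, whereas a single row has entropy $\sim nH(X|U)$, so the state is negligible per row. This bypasses any need to manipulate empirical $\ell$-gram entropies as in Proposition~\ref{propo:FiniteState}.

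Write $\bY^N=\vec(\bX^{m,n})$ (row-first, $N=mn$), and let $s_i\in\Sigma$ denote the state at the beginning of row $i$; by construction $s_i$ is a deterministic function of $(\bX_1,\dots,\bX_{i-1})$, and the segment produced while scanning row $i$ is $\bz_i:=f_n(\bX_i,s_i)$, so the total length is $\sum_{i=1}^m\len(\bz_i)$. The Remark following the definition of the FS encoder (using non-degeneracy of $\Sigma$) guarantees that for every fixed $s\in\Sigma$ the map $F_s:\bX\mapsto f_n(\bX,s)$ is injective on $\cX^n$; this point is what licenses a per-row invocation of the source-coding converse.

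Applying Lemma~\ref{lemma:SourceCodingConverse} with $\cA=\cX^n$ to the conditional law of $\bX_i$ given $s_i=s$, and averaging over $s_i$, yields
\[
\E\len(\bz_i)\;\ge\;H(\bX_i\mid s_i)-\log_2\log_2(|\cX|^n+2)\;\ge\;H(\bX_i)-\log_2|\Sigma|-\log_2(n\log_2|\cX|)-O(1),
\]
where the second inequality uses $I(\bX_i;s_i)\le H(s_i)\le\log_2|\Sigma|$. Conditionally on the row latent $u_i$, marginalizing over the column latents $\bv^n$, the entries $(X_{i,j})_{j\le n}$ are i.i.d.\ with distribution $p(\,\cdot\,\mid u_i)=\sum_v\co(v)Q(\,\cdot\,\mid u_i,v)$, hence $H(\bX_i\mid u_i)=nH(X\mid U)$ and the chain rule gives $H(\bX_i)=H(u_i)+nH(X|U)-H(u_i\mid\bX_i)\ge nH(X|U)$. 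Summing over $i$ and dividing by $mn\log_2|\cX|$:
\[
\E\Rate_{\phi}(\bX)\;\ge\;\frac{H(X|U)}{\log_2|\cX|}\;-\;\frac{\log_2|\Sigma|}{n\log_2|\cX|}\;-\;O\!\left(\frac{\log(n\log|\cX|)}{n\log_2|\cX|}\right).
\]

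To match the claimed form, set $x:=\log_2|\Sigma|/(n\log_2|\cX|)$, which by hypothesis satisfies $x\le 1/9$, so $x\le\sqrt{x}$; since $|\Sigma|\ge|\cX|\ge 2$ gives $\sqrt{x}\ge 1/\sqrt{n\log_2|\cX|}$, the logarithmic correction $\log(n\log|\cX|)/(n\log_2|\cX|)$ is dominated by $\sqrt{x}$, and the overall error is majorized by $10\sqrt{\log|\Sigma|/(n\log|\cX|)}\,\log(n\log|\Sigma|)$. The main obstacle is simply verifying (and relying on) the strengthened IL property from the Remark; once that is in hand, the proof reduces to a per-row application of Lemma~\ref{lemma:SourceCodingConverse} and chain-rule bookkeeping, and in fact yields a somewhat sharper error than the theorem states.
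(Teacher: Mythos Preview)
Your argument is correct and genuinely different from the paper's. The paper proceeds via Proposition~\ref{propo:FiniteState}, a general $\ell$-gram empirical-entropy lower bound that makes no use of the row structure; it then restricts to $\ell$-blocks that stay within a single row, bounds the expected empirical $\ell$-gram entropy from below by $\ell H(X|U)$ via concavity, and finally chooses $\ell=\sqrt{n\log_2|\Sigma|/\log_2|\cX|}$ to balance the $(\log_2|\Sigma|)/\ell$ state-cost term against the $\ell/n$ edge-effect term. This optimization is what produces the $\sqrt{\log|\Sigma|/(n\log|\cX|)}$ error.

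You instead take $\ell=n$ directly, i.e., parse the serialized input at the natural row boundaries. Because the state at the start of each row is a single element of $\Sigma$, the per-row state cost is exactly $\log_2|\Sigma|$ spread over $n$ symbols, and there are no edge effects to control. The resulting error term $\log_2|\Sigma|/(n\log_2|\cX|)+O(\log n/n)$ is strictly sharper than the theorem's statement, and the whole proof is shorter: one invocation of Lemma~\ref{lemma:SourceCodingConverse} per row plus the chain-rule bound $H(\bX_i\mid s_i)\ge H(\bX_i)-\log_2|\Sigma|\ge nH(X|U)-\log_2|\Sigma|$. The key observation you exploit (and the paper's route does not) is that the row/column latent structure makes the within-row marginals i.i.d.\ once $u_i$ is fixed, so the natural block length is exactly $n$. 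The paper's Proposition~\ref{propo:FiniteState} remains of independent interest as a pointwise (non-probabilistic) statement valid for arbitrary inputs, but for this theorem your direct route is both cleaner and quantitatively better.
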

\begin{proof}
We let $N:=mn$, $N'=mn-2\ell$ 
where we $\ell\le n/3$ will be selected later. We write $\bX^N:=\vec(\bX^{m,n})$ for
the vectorization $\bX^{m,n}$, $\bX^{N'}$ for the vector comprising its first $N'$ entries.
Recall the definition of empirical distribution. For any fixed $\bw\in\cX^{\ell}$
\begin{align*}
\hp^{\ell}_{\bX^{N'}}(\bw):=\frac{1}{N'-\ell+1}\sum_{i=1}^{N'-\ell+1}
\bfone_{\bX_i^{i+\ell-1}=\bw}\, .
\end{align*}
Let $S:=\{i\in [N'-\ell+1]: \, [i,i+\ell-2]\cap n\naturals =\emptyset\}$. In
words, these are the subset of blocks of length $\ell$ that do not cross the end of
a line in the table. Since for each line break there are at most $\ell-1$
such blocks, we have $|S|\ge N'-\ell+1-(m-1)(\ell-1)$.
We will consider the following modified empirical distribution
\begin{align*}
\hhp^{\ell}_{\bX^{N'}}(\bw):=\frac{1}{|S|}\sum_{i\in S}
\bfone_{\bX_i^{i+\ell-1}=\bw}\, .
\end{align*}
Then by construction
\begin{align*}
\hp^{\ell}_{\bX^{N'}}(\bw) & =  (1-\eta_{\ell})\hhp^{\ell}_{\bX^{N'}}(\bw)+
\eta_{\ell}q^{\ell}_{\bX^{N'}}(\bw)\, ,\\
\eta_{\ell} & :=1- \frac{|S|}{N'-\ell+1} = \frac{(m-1)(\ell-1)}{N'-\ell+1}\, ,
\end{align*}
where $q^{\ell}_{\bX^{N'}}$ is the empirical distribution of blocks that do cross the line.
By concavity of the entropy, we have
\begin{align}\label{eq:EdgeEntropy}
H(\hp^{\ell}_{\bX^{N'}})\ge (1-\eta_{\ell}) H(\hhp^{\ell}_{\bX^{N'}})+\eta_{\ell}
H(q^{\ell}_{\bX^{N'}}) \ge (1-\eta_{\ell}) H(\hhp^{\ell}_{\bX^{N'}})\, .
\end{align}
Further, since $\ell\le n/3$, 
\begin{align}
\eta_{\ell}&=
\frac{(m-1)(\ell-1)}{mn-3\ell+1}\nonumber\\
&\le \frac{(m-1)\ell}{mn-3\ell}\le  \frac{(m-1)\ell}{(m-1)n}
\le \frac{\ell}{n}\, .\label{eq:EdgeEffect}
\end{align}
Now let the row latents $\bu := (u_i)_{i\le m}$ be fixed, and denote by $\hr^S_{\bu}$ their 
weighted empirical distribution, defined as follows:
\begin{align*}
\hr^{S}_{\bu}(s) := \sum_{i=1}^m\frac{|S\cap[(i-1)n+1,in]|}{|S|}\bfone_{u_i=s}\, .
\end{align*}
In words, $\hr^{S}_{\bu}$ is the empirical distribution of the latents $(u_i)_{i\le m}$
where row $i$ is weighted by its contribution to $S$. Note that all the weights are equal 
to $(n-2(\ell-1))/|S|$ except, potentially, for the last one.

 We have 
\begin{align*}
p^{\ell}_*(\bw):=\E[\hhp^{\ell}_{\bX^{N'}}(\bw)] =
 \sum_{u\in\cL}\hr^{S}_{\bu}(u)\prod_{i=1}^{\ell}Q_{x|u}(w_i|u)\, ,\;\;\;\;
 Q_{x|u}(w|u):=\sum_{v\in\cL} Q(w|u,v) \, \co(v)\, .
\end{align*}
Using Eq.~\eqref{eq:EdgeEntropy}, \eqref{eq:EdgeEffect} and concavity of the entropy, we get
\begin{align}
\E \big[H(\hp^{\ell}_{\bX^{N'}})|\bu\big]\ge \Big(1-\frac{\ell}{n}\Big) H(p^{\ell}_*)\, .
\end{align}

By Proposition \ref{propo:FiniteState}, we get
\begin{align*}
\E \,\big[\Rate(\bX^{m,n})|\bu\big]\ge & \frac{mn-2\ell}{mn\ell\log_2|\cX|}
\Big(1-\frac{\ell}{n}\Big)H(p^{\ell}_*)
-\frac{1}{\ell\log_2|\cX|}\big(\log_2 (|\Sigma|\ell)+\log_2\log_2|\cX|\big) \\
\ge &\frac{1}{\ell\log_2|\cX|}H(p^{\ell}_*)-\frac{2\ell}{n}
-\frac{1}{\ell\log_2|\cX|}\big(\log_2 (|\Sigma|\ell)+\log_2\log_2|\cX|\big)\, ,
\end{align*}
where in the last inequality we used the fact that $H(p^{\ell}_*)\le \ell\log_2|\cX|$.
We choose
\begin{align}
\ell = \sqrt{\frac{n\log_2|\Sigma|}{\log_2|\cX|}}\le \frac{n}{3}\, ,
\end{align}
Substituting and simplifying, we get
\begin{align}
\E \big[\Rate(\bX^{m,n})|\bu\big]&\ge \frac{H(p^{\ell}_*)}{\ell\log_2|\cX|}-
\frac{10}{\sqrt{n}}\cdot \sqrt{\frac{\log|\Sigma|}{\log|\cX|}} \cdot\log(n\log|\Sigma|)\, .
\end{align}
Finally, letting $(W_1,\dots, W_{\ell},U)\in\cX^{\ell}\times \cL$ be random
variables with joint distribution $\hr^{S}_{\bu}(u)\prod_{i=1}^{\ell}Q_{x|u}(w_i|u)$.
Then
\begin{align}
H(p^{\ell}_*) &\ge \sum_{u\in \cL} \hr^{S}_{\bu}(u)H\big(Q^{\otimes \ell}_{x|u}(\,\cdot\,|u)\big)\\
& \ge \ell \sum_{u\in \cL} \hr^{S}_{\bu}(u) H(X|U=u)\, ,
\end{align}
and therefore $\E H(p^{\ell}_*)\ge H(X|U)$, finishing the proof.
\end{proof}

\section{Proofs for Lempel-Ziv coding}
\label{sec:ProofLZ}

The pseudocode of the Lempel-Ziv algorithm that we will analyze is given here.
For ease of presentation, we identify $\cX$ with a set of integers.

\begin{algorithm}[tb]
\caption{Lempel-Ziv}
\label{alg:lempelziv}
\begin{algorithmic}
\STATE {\bfseries Input:} {Data $\bX^N\in\cX^{N} = \{0,\cdots,|\cX|-1\}^N$}
\STATE {\bfseries Output:} {Binary string $\bZ\in \{0,1\}^*$}
\FOR{$k=1$ {\bfseries to} $N$}
\IF{$\exists j<k: X_j=X_k$}
\STATE $L_{k}\leftarrow \max\{\ell\ge 1:\; \exists j\in\{1,\dots,k-1\} \mbox{ s.t. } \bX_{j}^{j+\ell-1}
=\bX_{k}^{k+\ell-1}\}$\\
$T_{k}\leftarrow \max\{ j\in\{1,\dots,k-1\} \mbox{ s.t. } \bX_{j}^{j+L_k-1}
=\bX_{k}^{k+L_k-1}\}$\;
\ELSE
\STATE $L_k\leftarrow 1$\\
$T_{k}\leftarrow (-X_k)$\;
 
\ENDIF
\STATE $\bZ\leftarrow \bZ \concat \plain(T_k) \concat \elias(L_k)$\;
\STATE $k\leftarrow k+L_k$
\IF{$\len(\bZ) \le \len(\plain(\bX^N)) $}
\STATE {\bfseries return} $\bZ$
\ELSE
\STATE {\bfseries return} $\plain(\bX^N)$
\ENDIF

\ENDFOR

\end{algorithmic}
\end{algorithm}

Note that if a simbol $X_k$ never appeared in the past, 
we point to $T_k=-X_k$ and set $L_k=1$. This is essentially equivalent to prepending 
a sequence of distinct $|\cX|$ symbols to $\bX^N$. 

It is useful to define for each $k\le N$,  
\begin{align}
L_{k}(\bX^N) &:=\max\big\{\ell\ge 1:\; \exists j\in\{1,\dots,k-1\} \mbox{ s.t. } \bX_{j}^{j+\ell-1}
=\bX_{k}^{k+\ell-1}\big\}\, ,\label{eq:LK-def}\\
T_{k}(\bX^N) &:=\min\big\{ j\in\{1,\dots,k-1\} \mbox{ s.t. } \bX_{j}^{j+L_k-1}
=\bX_{k}^{k+L_k-1}\big\}\, .\label{eq:TK-def}
\end{align}

\subsection{Proof of Theorem \ref{thm:LempelZiv}}

\begin{lemma}\label{lemma:CoarseBoundL}
Under Assumption \ref{ass:NonDet}, there exists a constant $C$ such that
the following holds with probability at least $1-N^{-10}$:
\begin{align}
\max_{k\le N}L_k(\bX^N)\le C\, \log N\, .
\end{align}
\end{lemma}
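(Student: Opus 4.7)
The plan is to bound $\prob(L_k(\bX^N) \ge \ell)$ for a single $k$ and then union bound. Choose $\ell = C\log N$ for a constant $C$ depending only on $c_0$ (the constant from Assumption \ref{ass:NonDet}). By definition \eqref{eq:LK-def}, the event $\{L_k \ge \ell\}$ is contained in $\bigcup_{1\le j<k} \{\bX_j^{j+\ell-1} = \bX_k^{k+\ell-1}\}$, so a union bound reduces the problem to controlling, for each pair $(j,k)$ with $j<k$, the probability $p_{j,k} := \prob(\bX_j^{j+\ell-1} = \bX_k^{k+\ell-1})$.

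The key elementary fact, used throughout, is the following. For any two distinct serialized positions $p \neq p'$ (corresponding to distinct matrix entries), conditional on the latents $(\bu^m,\bv^n)$ the symbols $X_p, X_{p'}$ are independent, and
\begin{align*}
\prob(X_p = X_{p'}\mid \bu^m,\bv^n) = \sum_{x\in\cX} Q(x\mid u_{r(p)},v_{c(p)})\, Q(x\mid u_{r(p')},v_{c(p')}) \le 1-c_0,
\end{align*}
since Assumption \ref{ass:NonDet} gives $\max_{x,u,v}Q(x\mid u,v)\le 1-c_0$. More generally, for any set of $s+1$ pairwise distinct positions $p_0,\dots,p_s$, the probability that $X_{p_0}=\cdots=X_{p_s}$ is at most $(1-c_0)^s$, by the same computation: fix $X_{p_0}=x$ and sum.

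Now split into two cases. If $k-j \ge \ell$, the windows $[j,j+\ell-1]$ and $[k,k+\ell-1]$ are disjoint, so the $\ell$ pairwise equalities $X_{j+t}=X_{k+t}$ involve $2\ell$ distinct positions and are conditionally independent, yielding $p_{j,k}\le (1-c_0)^\ell$. If $0<d:=k-j<\ell$, the equality $\bX_j^{j+\ell-1}=\bX_k^{k+\ell-1}$ forces $X_{j+t}=X_{j+t+d}$ for $t=0,\dots,\ell-1$, i.e.\ periodicity with period $d$ on the positions $\{j,\dots,j+\ell+d-1\}$. Partition these $\ell+d$ positions into $d$ residue classes modulo $d$; the periodicity says each class (of size $s_i+1$) is constant, and by the fact above contributes a factor of $(1-c_0)^{s_i}$. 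Summing the exponents gives $\sum_i s_i = (\ell+d)-d = \ell$, so again $p_{j,k}\le (1-c_0)^\ell$.

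A union bound over the at most $N^2$ pairs $(j,k)$ and over the (at most $N$) starting indices $k$ (noting that $L_k\le N-k+1$ trivially) gives
\begin{align*}
\prob\!\Big(\max_{k\le N} L_k(\bX^N)\ge \ell\Big) \le N^2(1-c_0)^\ell.
\end{align*}
Choosing $C = 12/\log(1/(1-c_0))$ makes this at most $N^{-10}$, proving the claim. There is no real obstacle here; the only point requiring a moment of care is the overlapping case, which is handled by the periodicity/residue-class decomposition above, and the verification that the bound $\max_x Q(x\mid u,v)\le 1-c_0$ passes through the sum $\sum_x Q(x\mid\cdot)Q(x\mid\cdot)$ cleanly because different matrix entries remain conditionally independent given the latents.
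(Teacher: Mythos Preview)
Your proof is correct and follows essentially the same approach as the paper's: union bound over pairs $(j,k)$, condition on the latents to make entries independent, and exploit periodicity with period $d=k-j$ in the overlapping case to extract $\ell$ independent constraints, each satisfied with probability at most $1-c_0$. The only cosmetic difference is that the paper conditions on the first $d$ symbols and then observes that the remaining $\ell$ symbols are each pinned to a specific value, whereas you partition the $\ell+d$ positions into $d$ residue classes and multiply the per-class bounds; both arguments yield $(1-c_0)^\ell$ and the final choice $\ell = 12\log N/\log(1/(1-c_0))$ is identical.
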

\begin{proof}
We begin by considering a slightly different setting, and will then show that 
our question reduces to this setting. Let $(Z_i)_{i\ge 1}$ be independent random variables with 
$Z_i\sim q_i$ a probability distribution over $\cX$. Further assume $\max_{x\in\cX}q_i(x)\le 1-c$
for all $i\ge 1$. Then we claim that, for any $t,\ell\ge 1$, we have
\begin{align}
\prob\big(Z_1^{\ell} = Z_{t+1}^{t+\ell}\big) \le (1-c)^{\ell}\, .\label{eq:ClaimZ}
\end{align}
Indeed, condition on the event $Z_1^{t}=x_{1}^t$ for some $x_1,\dots,x_t\in\cX$.
Then the event $Z_1^{\ell} = Z_{t+1}^{t+\ell}$ implies that, for $i\in \{t+1,\dots,t+\ell\}$,
$Z_i = x_{\pi(i)}$ where $\pi(i) = i \mod t$, $\pi(i)\in [1,t]$. Then 
\begin{align*}
\prob\big(Z_1^{\ell} = Z_{t+1}^{t+\ell}\big)&\le \max_{x_1^t\in\cX^t}
\prob\big(Z_1^{\ell} = Z_{t+1}^{t+\ell}| Z_{1}^t=x_1^t\big)\\
&\le \max_{x_1^t\in\cX^t}
\prob\big(Z_i = x_{\pi(i)}\forall i\in\{t+1,\dots,t+\ell\}| Z_{1}^t=x_1^t\big)\\
&\le \max_{x_1^t\in\cX^t}\prod_{i=t+1}^{t+\ell} \prob\big(Z_i = x_{\pi(i)}\big)\le
(1-c)^{\ell}\, .
\end{align*}
This proves claim \eqref{eq:ClaimZ}.

Let us now reconsider our original setting:
\begin{align*}
\prob\big(\max_{k\le N}L_k(\bX^N)\ge \ell\big) &=  \prob\big(\exists i<j\le N:\;
X_{i}^{i+\ell-1}=X_{j}^{j+\ell-1}\big) \\
&\le N^2 \max_{i<j\le N}\prob\big(X_{i}^{i+\ell-1}=X_{j}^{j+\ell-1}\big)\\
&\le N^2 \max_{\bu^m\in\cL^m,\bv^n\in \cL^n}\max_{i<j\le N}
\prob\big(X_{i}^{i+\ell-1}=X_{j}^{j+\ell-1}\big|\bu^m,\bv^n\big)\\
&\le N^2 (1-c)^{\ell}\, ,
\end{align*}
where the last inequality follows from claim \eqref{eq:ClaimZ}, since the $(X_i)_{i\le N}$ are
conditionally independent given the latents $\bu^m, \bv^n$, with probability mass function
upper bounded by $1-c$. The thesis follows by taking $\ell = 12\log N/\log(1/(1-c))$.
\end{proof}

For $i\in [m]$, $j\in [n]$, we define $\<ij\>:=(i-1)n+j$. In words, $k=\<ij\>$
is the of entry at row $i$ column $j$ when the table $\bX^{m,n}$ is scanned in row first order.
For $\ell\ge 1$, define the events
\begin{align} 
\cE_{i,j}(\ell) &:= \Big\{ \exists i'\in[m],j'\in[n]\; :\; \<i'j'\>< \<ij\>, \;
|j'-j|\ge \ell ,\; \bX_{\<i'j'\>}^{\<i'j'\>+\ell-1}=\bX_{\<ij\>}^{\<ij\>+\ell-1}\}\, ,\\
\cF_{i,j}(\ell) &:= \Big\{ \exists i'\in[m],j'\in[n]\; :\; \<i'j'\>< \<ij\>, \;
|j'-j|<\ell ,\; \bX_{\<i'j'\>}^{\<i'j'\>+\ell-1}=\bX_{\<ij\>}^{\<ij\>+\ell-1}\}\, .
\end{align}
Then we have
\begin{align}
\prob\big(L_{\<ij\>}(\bX^N)\ge \ell\big)\le \prob\big(\cE_{i,j}(\ell)\big)+
\prob\big(\cF_{i,j}(\ell)\big)\, .
\end{align}
The next two lemmas control the probabilities of these events.
%
\begin{lemma}\label{lemma:NoCol}
Let $\ell(\delta,u) := \lceil (1+\delta)(\log N)/H(X|U=u)\rceil$,
 $n'=n-\max_{u\in\cL}\ell(\delta,u)$, and $m_0= m^{1-o_n(1)}$.
Under Assumption \ref{ass:NonDet}, for any $\delta>0$, there exist constants 
$C,\eps>0$ independent of $\bu\in\cL^m$, such that
the following hold
\begin{align}
&\max_{i\le m,j\le n'}\prob\big(\cE_{i,j}(\ell(\delta,u_i))\big)\le C\, N^{-\eps}\, ,\label{eq:UB-ell}\\
&\min_{m_0\le i\le m,j\le n'}\prob\big(\cE_{i,j}(\ell(-\delta,u_i))\big)\ge 1-C\, N^{-\eps}\, .\label{eq:LB-ell}
\end{align}
\end{lemma}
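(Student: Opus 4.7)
Condition throughout on $u_i=u$ and set $\bx:=(X_{i,j},\ldots,X_{i,j+\ell-1})$. Write $Q_{X|U}(\,\cdot\,|u'):=\sum_{v\in\cL}\co(v)\,Q(\,\cdot\,|u',v)$. For any past candidate $(i',j')$ with $\<i'j'\><\<ij\>$ and $|j'-j|\ge \ell$, because $\ell=O(\log N)\ll n$ the set of columns touched by $\bX_{\<i'j'\>}^{\<i'j'\>+\ell-1}$ is disjoint from $\{j,\ldots,j+\ell-1\}$ (also when the candidate match crosses a row boundary), so the column latents at those positions are independent of $\bx$ given $\bu$; averaging them out gives
$$\prob\!\big(\bX_{\<i'j'\>}^{\<i'j'\>+\ell-1}=\bx\,\big|\,\bx,\bu\big)=\prod_{k=0}^{\ell-1}Q_{X|U}\!\big(x_k\,\big|\,u_{\mathrm{row}(\<i'j'\>+k)}\big).$$
The marginal of $\bx$ given $u_i=u$ is i.i.d.\ $Q_{X|U}(\,\cdot\,|u)$ and, by Assumption~\ref{ass:NonDet}, $\log Q_{X|U}(x|u')$ is uniformly bounded. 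A Hoeffding bound then yields a typical event $T$ with $\prob(T^c)\le CN^{-\eps_0}$ on which, for every $u'\in\cL$,
$$\Big|\!-\!\sum_{k=0}^{\ell-1}\!\log Q_{X|U}(x_k|u')-\ell\,H\!\big(Q_{X|U}(\,\cdot\,|u),Q_{X|U}(\,\cdot\,|u')\big)\Big|\le \eta\,\ell,$$
with $\eta=\eta_n\to 0$ polynomially.

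\textbf{Upper bound \eqref{eq:UB-ell}.} By Gibbs' inequality, $H(Q_{X|U}(\,\cdot\,|u),Q_{X|U}(\,\cdot\,|u'))\ge H(X|U=u)$ for every $u'$, so on $T$ the displayed match probability is $\le 2^{-\ell H(X|U=u)+\eta\ell}$ uniformly over candidates. Union-bounding over the $\le N=mn$ valid candidate pairs and taking $\ell=\ell(\delta,u)\ge (1+\delta)(\log N)/H(X|U=u)$ gives $\prob(\cE_{i,j}(\ell(\delta,u))\mid T)\le N^{-\delta+o(1)}$, and together with $\prob(T^c)\le CN^{-\eps_0}$ this proves the claim.

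\textbf{Lower bound \eqref{eq:LB-ell}.} Restrict to $S:=\{(i',j'):\,1\le i'<i,\ u_{i'}=u,\ |j'-j|\ge \ell,\ j'+\ell-1\le n\}$. Hoeffding applied to $\sum_{i'<i}\bfone_{u_{i'}=u}$, combined with $i\ge m_0=m^{1-o_n(1)}$, gives $|S|\ge c\,\ro(u)\,m_0\,n=\ro(u)\,N^{1-o_n(1)}$ with high probability. On the typical event each term in $Y:=\sum_{(i',j')\in S}\bfone_{\mathrm{match\ at\ }(i',j')}$ has conditional probability $\ge 2^{-\ell H(X|U=u)-\eta\ell}$ (all candidates in $S$ have $u_{i'}=u$ and stay within their row), so for $\ell=\ell(-\delta,u)$,
$$\mu:=\E[Y\mid\bx,\bu,T]\ge \ro(u)\,N^{1-o(1)}\cdot 2^{-(1-\delta)\log N-o(\log N)}=\ro(u)\,N^{\delta-o(1)}.$$
For the second moment, $X$-entries in distinct rows are conditionally independent given $(\bu,\bv)$; after integrating out $\bv$, the covariance of matches at $(i',j'),(i'',j'')\in S$ with $i'\ne i''$ vanishes whenever $|j'-j''|\ge \ell$. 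Non-zero covariances arise only for pairs with $i'=i''$ (at most $n$ values of $j''$ per $i'$) or $|j'-j''|<\ell$ (at most $2\ell$ values of $j''$ per $j'$); using that each shared column contributes only a $Q_{X|U}(x_s|u)$ factor (rather than $Q_{X|U}(x_s|u)^2$) to the joint probability, the aggregate covariance is $o(\mu^2)$. Paley--Zygmund then gives $\prob(Y\ge 1\mid T)\ge 1-\mu^{-1}-o(1)\ge 1-CN^{-\eps}$.

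\textbf{Main obstacle.} The delicate step is the second-moment bound: pairs of candidate matches that share a row, or whose column supports overlap, create correlations through shared row and column latents. The bookkeeping must split pairs by (a) row coincidence and (b) column overlap, and exploit the fact that each shared column costs only a single $Q_{X|U}(x_s|u)$ factor in the joint match probability; showing that the sum of such correlated contributions stays below $\mu^2$ throughout the regime $\ell=O(\log N)$, $m_0=m^{1-o_n(1)}$ is the core technical task.
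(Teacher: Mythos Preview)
Your upper bound is correct and follows the same union-bound plus large-deviations route as the paper (which packages it as the hitting-time Lemma~\ref{lemma:HittingTime}). The lower bound, however, has a real gap in the second-moment step. In your count ``at most $2\ell$ values of $j''$ per $j'$'' for the column-overlap case you omit the $\Theta(m_0)$ choices of $i''$ for each such $j''$, and more importantly the resulting covariances do not sum to $o(\mu^2)$. Take the symmetric binary model with $Q(1|u,v)=0.9$ if $u=v$ and $0.1$ otherwise, $\co$ uniform on two points: then $Q_{X|U}(\,\cdot\,|u)\equiv\tfrac12$, so on the typical event $p(\bx)=2^{-\ell}$ and $\mu\asymp m_0n\,2^{-\ell}$, but for $j'=j''$, $i'\ne i''$ the joint match probability is $\prod_k\E_v[Q(x_k|u,v)^2]=(0.41)^\ell$. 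The $\asymp m_0^2n$ such pairs already contribute $m_0^2n\,(0.41)^\ell$ to $\E[Y^2]$, and the ratio to $\mu^2\asymp m_0^2n^2(0.25)^\ell$ is $(1.64)^\ell/n=N^{0.71(1-\delta)}/n$, which diverges as soon as $\alpha\gtrsim 0.4$; Paley--Zygmund then yields nothing. Your rule ``each shared column contributes $Q_{X|U}(x_s|u)$ rather than $Q_{X|U}(x_s|u)^2$'' only gives $\prob(\text{both})\lesssim p(\bx)$, which is too weak by a full factor of $p(\bx)$.

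The paper avoids second moments altogether: it restricts to a single residue class $\<i'j'\>\equiv t\pmod\ell$ and invokes a hitting-time bound (Lemma~\ref{lemma:HittingTime-UB}) that relies on \emph{independence} of the trials rather than a variance estimate. The structural fact you are missing is that, once one conditions on \emph{both} latent vectors $\bu$ and $\bv$, all match indicators are genuinely independent Bernoullis, so $\prob(\text{no match}\mid\bu,\bv,\bx)\le\exp\bigl(-\sum_{(i',j')}\prod_kQ(x_k|u,v_{j'+k})\bigr)$; the column-latent correlations that wreck your second-moment argument simply never enter.
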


\begin{lemma}\label{lemma:Col} 
Let $\ell_c(\delta,u) := \lceil (1+\delta)(\log m)/H(X|U=u,V)\rceil$,
$n'_c=n-\max_{u\in\cL}\ell_c(\delta,u)$, and $m_0= m^{1-o_n(1)}$.
Under Assumption \ref{ass:NonDet}, for any $\delta>0$, there exist constants $C,\eps>0$ 
independent of $\bu\in\cL^m$,  such that
the following hold
\begin{align}
&\max_{i\le m,j\le n_c'}\prob\big(\cF_{i,j}(\ell_c(\delta,u_i))\big)\le C\, m^{-\eps}\, ,\label{eq:UB-ell-c}\\
&\min_{m_0\le i\le m,j\le n_c'}\prob\big(\cF_{i,j}(\ell_c(-\delta,u_i))\big)\ge 1-C\, m^{-\eps}\, .\label{eq:LB-ell-c}
\end{align}
\end{lemma}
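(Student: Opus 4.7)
My plan is to prove Lemma \ref{lemma:Col} by closely paralleling the structure of Lemma \ref{lemma:NoCol}, but now restricting attention to matches in the set of $(i',j')$ with the starting column nearly aligned to $j$. The key intuition is that the dominant contribution to such matches comes from \emph{exactly aligned} columns ($j'=j$) with \emph{the same row latent} ($u_{i'}=u_i$): conditional on $u_{i'}=u_i=u$ and the relevant column latents, the two rows are i.i.d.\ draws with marginal entropy rate $H(X|U=u,V)$, and for a typical target string $x_1^\ell$ the match probability is $\approx 2^{-\ell H(X|U=u,V)}$. Since there are $\sim m$ candidate past rows at column $j$, the critical length is $\ell_c(u)\sim (\log m)/H(X|U=u,V)$. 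Shifted columns ($j'\ne j$) and mismatched latents ($u_{i'}\ne u_i$) contribute strictly less, by an extra factor $2^{-\ell \bar D}$ with $\bar D$ a positive KL divergence, so they can be absorbed in the error term.

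For the upper bound \eqref{eq:UB-ell-c}, I would (i)~union-bound over the $\le 2\ell m$ candidate pairs $(i',j')$ with $|j'-j|<\ell$; (ii)~condition on $u_i=u$ and apply an AEP-type concentration inequality (in the spirit of Lemma~\ref{lemma:AEP}, applied at the row level) to restrict to typical $\bX_{i,j}^{j+\ell-1}$ with respect to $\prod_{s} Q(\cdot|u,v_{j+s})$, paying only a polynomially-small atypical mass; (iii)~for each typical target $x_1^\ell$ and each candidate $(i',j')$, marginalize over $u_{i'}$ and $v_{j':j'+\ell-1}$ to bound $\prob(\bX_{i',j'}^{j'+\ell-1}=x_1^\ell)\le 2^{-\ell H(X|U=u,V)(1-o(1))}$, with strict improvement when $(j',u_{i'})\ne(j,u)$. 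Summing gives an aggregate bound $\le 2\ell m\cdot 2^{-\ell H(X|U=u,V)(1-o(1))}$, which at $\ell=\ell_c(\delta,u)$ is $O(m^{-\eps})$ for some $\eps=\eps(\delta)>0$.

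For the lower bound \eqref{eq:LB-ell-c}, I would use a second-moment / Paley--Zygmund argument restricted to the dominant event: let $M$ be the number of $i'<i$ with $u_{i'}=u_i$ and $\bX_{i',j}^{j+\ell-1}=\bX_{i,j}^{j+\ell-1}$. Conditional on $(\bX_{i,j}^{j+\ell-1},u_i,\bv)$, the indicators summed in $M$ are \emph{independent} across $i'$ (conditional independence of rows), so $\Var(M|\cdot)\le \E[M|\cdot]$. On the typical event (which has probability $\ge 1-O(m^{-\eps})$ by AEP and a Chernoff bound on $|\{i'<i:u_{i'}=u_i\}|\ge c_0 m/2$ using $i\ge m_0$), the conditional mean satisfies $\E[M|\cdot]\ge c_0 m\cdot 2^{-\ell H(X|U=u_i,V)(1+o(1))}$, which at $\ell=\ell_c(-\delta,u_i)$ is $\gtrsim m^{\delta(1-o(1))}\to\infty$. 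Paley--Zygmund then gives $\prob(M\ge 1|\cdot)\ge \E[M|\cdot]/(\E[M|\cdot]+1)\ge 1-m^{-\delta(1-o(1))}$, and removing the conditioning yields \eqref{eq:LB-ell-c}.

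The main obstacle is extracting genuinely \emph{polynomial} (not $o(1)$) error rates from AEP at scale $\ell=O(\log m)$: the relevant subgaussian concentration of $-\log P(\bX_{i,j}^{j+\ell-1})$ around $\ell H(X|U=u_i,V)$ on $\ell$-deviations of size $\eps\ell$ yields failure probability $\exp(-c\eps^2\ell)=m^{-c\eps^2/H}$, which is polynomial in $m$ but with a $\delta$-dependent exponent. A careful bookkeeping of these constants, together with verifying that the KL factor $\bar D$ controlling shifted-column contributions is uniformly positive (using Assumption~\ref{ass:NonDet} and the requirement that $Q(\cdot|u,v)$ depend genuinely on $v$ --- otherwise $V$ contributes nothing and $H(X|U)=H(X|U,V)$, degenerating the lemma), is the delicate part.
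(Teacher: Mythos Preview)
Your approach is essentially the same as the paper's: for the upper bound, union over the $O(\ell m)$ near-aligned candidates and use large-deviations/AEP on the target string; for the lower bound, restrict to $j'=j$ and $u_{i'}=u_i$, where the candidates are i.i.d.\ copies of the target. The paper packages the core estimates into two hitting-time lemmas (Lemmas~\ref{lemma:HittingTime} and~\ref{lemma:HittingTime-UB}) and a Chernoff step on $\ell^{-1}\sum_k H(X|U=u_i,V=v_{j+k})$; you replace these by a typical-set argument plus Paley--Zygmund, which is equivalent in strength.

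One point to correct: your worry about the ``strict positivity of the KL factor $\bar D$'' for shifted columns or mismatched row latents is unnecessary, and the lemma does \emph{not} degenerate when $Q(\cdot|u,v)$ is independent of $v$. Your own aggregate bound $2\ell m\cdot 2^{-\ell H(X|U=u_i,V)(1-o(1))}$ already yields $O(m^{-\delta+o(1)}\log m)$ at $\ell=\ell_c(\delta,u_i)$, with no extra gain required from mismatched candidates. The reason this works uniformly is that for a target $\bX$ drawn from $\prod_s p_s$, the cross-entropy bound $\E_{\bX}\prod_s q_s(X_s)^{\lambda}\le \prod_s\big(\sum_x p_s(x)^{1/(1-\lambda)}\big)^{1-\lambda}$ (H\"older) depends only on the $p_s$, never on the $q_s$. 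This is exactly how the paper's Lemma~\ref{lemma:HittingTime} dispenses with any assumption on the candidate distributions. So you can drop the entire discussion of $\bar D>0$ and of $V$ ``genuinely'' mattering.

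A small gap shared with the paper's write-up: the case $i'=i$, $j-\ell<j'<j$ (same-row self-overlap) is not covered by your independence argument in step~(iii). It is handled separately by the periodicity bound $\prob(\bX_{i,j'}^{j'+\ell-1}=\bX_{i,j}^{j+\ell-1})\le(1-c_0)^{\ell}$ from Assumption~\ref{ass:NonDet}, exactly as in the term~$A$ of Lemma~\ref{lemma:NoCol}.
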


We are now in position to prove Theorem  \ref{thm:LempelZiv}.
\begin{proof}[Proof of Theorem \ref{thm:LempelZiv}]
We denote by $(k(1),\dots, k(M))$ the values taken by $k$ 
in the while loop of the Lempel-Ziv pseudocode. In particular 
\begin{align}
k(1) & = 1\, ,\\
k(\ell+1) & = k(\ell)+L_{k(\ell)}(\bX^N)\, ,\\
k(M)  & = N\, .
\end{align}
Therefore the total length of the code is 
\begin{align}
\len(\LZ(\bX^{m,n})) = M\lceil\log_2 (N+|\cX|)\rceil +\sum_{\ell=1}^M \len(\elias(L_{k(\ell)}))
\end{align}
By Lemma \ref{lemma:CoarseBoundL} (and recalling that $\len(\elias(L))\le 2\log_2 L+1$)
we have, with high probability,
 $\max_{\ell\le m}\len(\elias(L_{k(\ell)}))\le 2\log_2(C\log N)$. 
 Letting $\cG$ denote the `good' event that this bound holds, we have, on $\cG$
\begin{align}
 M\log_2 N  \le \len(\LZ(\bX^{m,n})) \le  M\lceil\log_2 (N+|\cX|)\rceil +2M \log_2(C\log N)
\end{align}
Since $|\cX|$ is a constant, this means that for any $\eta>0$, there exists 
$N_0(\eta)$ such that, for all $N\ge N_0(\eta)$, with probability at least $1-\eta$:
\begin{align}
 M\cdot \bfone_{\cG}\log_2 N  \le \len(\LZ(\bX^{m,n})) \le  (1+\eta)M\cdot\bfone_{\cG}\log_2 N + 
 N\cdot \bfone_{\cG^c}  \log_2 |\cX|\, ,
\end{align}
where on the right $\len(\LZ\bX^{m,n}))\le N  \log_2 |\cX|$ by construction.
We thus have
\begin{align}
 \E\big\{ M\cdot \bfone_{\cG}\big\} \frac{\log_2 N}{N\log_2|\cX|}  \le \E\,\Rate_{\LZ}(\bX^{m,n})
  \le  (1+\eta)\E\big\{ M\cdot \bfone_{\cG}\big\} 
  \frac{\log_2 N}{N\log_2|\cX|} +\eta\,,
\end{align}
that is
\begin{align}
\lim\inf_{m,n\to\infty}\E\,\Rate_{\LZ}(\bX^{m,n})& \ge\lim\inf_{m,n\to\infty}
\E\big\{ M\cdot \bfone_{\cG}\big\} \cdot \frac{\log_2 N}{N\log_2|\cX|}\, ,\label{eq:LB_LZ}\\
\lim\sup_{m,n\to\infty}\E\,\Rate_{\LZ}(\bX^{m,n})& \le\lim\sup_{m,n\to\infty}
\E\big\{ M\cdot \bfone_{\cG}\big\}  \cdot \frac{\log_2 N}{N\log_2|\cX|}\, .\label{eq:UB_LZ}
\end{align}
We are therefore left with the task of bounding $\E\big\{ M\cdot \bfone_{\cG}\big\}$

We begin by the lower bound. Define the set of `bad indices' $B(\bX^{m,n},\delta)\subseteq[m]\times[n]$, 
\begin{align}
B(\bX^{m,n},\delta):=\Big\{(i,j)\in [m]\times[n]:\; \cE_{i,j}(\ell(\delta,u_i)) \mbox{ or }
\cF_{i,j}(\ell_c(\delta,u_i)) 
\Big\}
\end{align} 
We will drop the arguments $\bX^{m,n},\delta$ for economy of notation, and write
$B:=B(\bX^{m,n},\delta)$. We further define
\begin{align}
S(u)=S(u;\bX^{m,n}):=\{(i,j)\in [m]\times [n]:\; u_i=u
\mbox{ and }
\exists \ell\le M:\; \<ij\>= k(\ell)\}\, .
\end{align}
In words, $S(u)$ is the set of positions $(i,j)$  of the table $\bX^{m,n}$
where words in the LZ parsing begin.

 We also write $N(u)= n\cdot|\{i\in[m]:\ \, u_i=u\}|$ 
for the total number of rows in $\bX^{m,n}$ with row latent equal to $u_i$
and $L^-_i$ for the length of the first segment in row $i$ initiated in row $i-1$:
\begin{align*}
N(u) &\le \sum_{(i,j)\in S(u)}L_{\<ij\>}+\sum_{i\le m: u_i=u} L^-_i\\
&\le \sum_{(i,j)\in S(u)\cap B^c}L_{\<ij\>} +  \sum_{(i,j)\in B}L_{\<ij\>}
+\sum_{i\le m: u_i=u} L^-_i\\
&\le \sum_{(i,j)\in S(u)}\ell(u;\delta)\vee \ell_c(u;\delta)+(|B|+m)\cdot C\log N\\
& \le |S(u)|\ell(u;\delta)\vee \ell_c(u;\delta)+(|B|+m)\cdot C\log N\, ,
\end{align*}
where the last inequality holds on event $\cG$. By taking expectation on this event,
we get
\begin{align*}
\E\{N(u)\cdot\bfone_{\cG}\} \le \E\{|S(u)|\cdot\bfone_{\cG}\}\}
\cdot \ell(u;\delta)\vee \ell_c(u;\delta)+ (\E|B|+m)\cdot C\log N\, .
\end{align*}
By Lemmas \ref{lemma:NoCol} and \ref{lemma:NoCol}, 
\begin{align*}
\E(|B|)&\le m_0n+\sum_{m_0\le i\le m,j\le n'}
\prob\Big(\cE_{i,j}(\ell(\delta,u_i))\cup
\cF_{i,j}(\ell_c(\delta,u_i))\Big) + Cm\log N\\
&\le m_0n+Cm^{1-\eps}n+ Cm\log N\\
& \le \frac{CN}{(\log N)^2}\, .
\end{align*}
$\E(|B|)\le Cm^{1-\eps}n + Cm\log n$.
Further $\E\{N(u)\} = N\ro(u)$ and $\E\{N(u)\cdot\bfone_{\cG}\}\ge \E\{N(u)\}-N\,\prob(\cG^c)$,
whence
\begin{align}
\lim\inf_{m,n\to\infty} \frac{1}{N}\E\{|S(u)|\cdot\bfone_{\cG}\}\}
\cdot \ell(u;\delta)\vee \ell_c(u;\delta)\ge \ro(u)\, .
\end{align}
Recalling the definition of $\ell(u;\delta)$, $\ell_c(u;\delta)$ and the fact
that $\delta$ is arbitrary,n the last inequality yields
\begin{align}
\lim\inf_{m,n\to\infty} \E\{|S(u)|\cdot\bfone_{\cG}\}\frac{\log_2 N}{N} 
\ge \ro(u) \Big[H(X|U=u) \wedge \Big(\frac{1+\alpha}{\alpha}\Big)H(X|U=u,V)\Big] \, .
\end{align}
Summing over $u$, noting that $\sum_{u\in\cL}|S(u)|=M$, and substituting in 
Eq.~\eqref{eq:LB_LZ} yields the  lower bound on the rate in Eq.~\eqref{eq:Asymp-LZ}.

Finally, the upper bound is proved by a similar strategy as for the lower bound.  
Define the set of `bad indices' $B_- = B_-(\bX^{m,n},\delta)\subseteq[m]\times[n]$, 
\begin{align}
B_-(\bX^{m,n},\delta):=\Big\{(i,j)\in [m]\times[n]:\; \cE^c_{i,j}(\ell(-\delta,u_i)) \mbox{ or }
\cF^c_{i,j}(\ell_c(-\delta,u_i)) 
\Big\}
\end{align} 
We also denote by $L^+_i$ the length of the last segment in row $i$.
We then have
\begin{align*}
N(u) &\ge \sum_{(i,j)\in S(u)}L_{\<ij\>}-\sum_{i\le m: u_i=u} L^+_i\\
&\ge \sum_{(i,j)\in S(u)\cap B_-^c}L_{\<ij\>} -\sum_{i\le m: u_i=u} L^+_i\\\
&\ge \sum_{(i,j)\in S(u)\cap B_-^c}\ell(u;-\delta)\vee \ell_c(u;-\delta)-\sum_{i\le m: u_i=u} L^+_i\\\
& \ge |S(u)|\ell(u;\delta)\vee \ell_c(u;\delta)-(|B_-|+m)\cdot C\log N\, ,
\end{align*}
where the last inequality holds on event $\cG$. By taking expectation on this event,
we get
\begin{align*}
\E\{N(u)\cdot\bfone_{\cG}\} \ge \E\{|S(u)|\cdot\bfone_{\cG}\}\}
\cdot \ell(-\delta,u)\vee \ell_c(-\delta,u)- (\E|B_-|+m)\cdot C\log N\, .
\end{align*}
By Lemmas \ref{lemma:NoCol} and \ref{lemma:NoCol}, 
\begin{align*}
\E(|B_-|)&\le m_0n+\sum_{m_0\le i\le m,j\le n'}
\prob\Big(\cE_{i,j}^c(\ell(-\delta,u_i))\cup
\cF^c_{i,j}(\ell_c(-\delta,u_i))\Big) + Cm\log N\\
&\le m_0n+Cm^{1-\eps}n+ Cm\log N\\
& \le \frac{CN}{(\log N)^2}\, .
\end{align*}
The proof is completed exactly as for the lower bound.
\end{proof}

\subsection{Proof of Lemma \ref{lemma:NoCol}}

We will use the following standard lemmas.
\begin{lemma}\label{lemma:QuadraticBound}
Let $X$ be a centered random variable with $\prob(X\le x_0) = 1$, $x_0>0$. Then, letting 
$c(x_0) = (e^{x_0}-1-x_0)/x_0^2$, we have
\begin{align}
\E(e^X)\le 1+c(x_0)\E(X^2)\, .
\end{align}
\end{lemma}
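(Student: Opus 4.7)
The plan is to use the classical trick of showing that the function $f(x) := (e^x - 1 - x)/x^2$ (extended continuously at $0$ by $f(0) = 1/2$) is non-decreasing on $\reals$. Once this monotonicity is established, the almost sure bound $X \le x_0$ immediately yields the pointwise inequality
\begin{equation*}
e^X \le 1 + X + c(x_0)\, X^2 \qquad \text{a.s.,}
\end{equation*}
and then taking expectation and using $\E(X) = 0$ produces exactly the claimed bound.

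First, I would verify the monotonicity of $f$. One clean way is to write $f(x) = \int_0^1 \int_0^1 s\, e^{stx}\, \de s\, \de t$ (by expanding $e^x - 1 - x = \int_0^x (x-u) e^u \de u$ after a change of variables, or directly from the power series $f(x) = \sum_{k \ge 0} x^k/(k+2)!$, which has nonnegative coefficients when $x \ge 0$ and whose derivative can be analyzed on $x \le 0$ as well). Either representation makes it manifest that $f$ is non-decreasing on all of $\reals$. Thus for any $x \le x_0$ we have $(e^x - 1 - x) \le c(x_0) x^2$, i.e.\ $e^x \le 1 + x + c(x_0) x^2$.

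Second, applying this inequality pointwise to $X$ (valid since $X \le x_0$ almost surely) and taking expectation gives
\begin{equation*}
\E(e^X) \le 1 + \E(X) + c(x_0)\, \E(X^2) = 1 + c(x_0)\, \E(X^2),
\end{equation*}
using the hypothesis $\E(X) = 0$. This completes the argument.

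The only step requiring care is the monotonicity of $f$; I expect no real obstacle there, since the power series representation $f(x) = \sum_{k=0}^\infty x^k/(k+2)!$ shows $f$ is increasing on $[0, \infty)$, and for $x < 0$ one can check $f'(x) \ge 0$ by noting $x^3 f'(x) = (x-2)e^x + (x+2) + x^2 \ge 0$ via a standard Taylor-remainder argument. The rest is a one-line consequence of the centering hypothesis.
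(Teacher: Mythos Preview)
Your proposal is correct and follows the same approach as the paper, whose entire proof is the single line ``This simply follows from $\exp(x)\le 1+x+c(x_0)x^2$ for $x\le x_0$''; you simply supply the justification for this pointwise inequality via monotonicity of $(e^x-1-x)/x^2$ and then take expectations. (Minor slip in your sketch: the correct expression is $x^3 f'(x) = (x-2)e^x + x + 2$, without the extra $+x^2$, and for $x<0$ one needs this quantity to be $\le 0$ rather than $\ge 0$ since $x^3<0$; the corrected computation is routine.)
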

\begin{proof}
This simply follows from $\exp(x)\le 1+x+c(x_0)x^2$ for $x\le x_0$.
\end{proof}

\begin{lemma}\label{lemma:HittingTime}
Let  $(p_i)_{i\ge 1}$, $(q_i)_{i\ge 1}$, be probability distributions on 
$\cX$, with 
$\sup_{i\ge 1}\max_{x\in\cX}p_i(x)\le 1-c$, and 
$\sup_{i\ge 1}\sum_{x\in\cX} p_i(x)(\log p_i(x))^2\le C$ for constants $c, C$.

Let $(X_i)_{i\le \ell}$ be independent random variables with $X_i\sim p_i$,
and set $\bX=(X_1,\dots, X_{\ell})$.
 Let $\bY(j)\in\cX^{\ell}$, $j\ge 1$ be a sequence of 
i.i.d. random vectors,
with $(Y_i(j))_{i\le \ell}$ independent and $Y_{i}(j)\sim q_i$.  
Finally, let $T:= \min\{t\ge 1: \; \bY(t) = \bX\}$.

Then, for any $\eps>0$, there exists $\delta=\delta(\eps,c,C)>0$ such that
(letting $\oH(p) := \ell^{-1}\sum_{i=1}^\ell H(p_i)$)
\begin{align}
\prob(T\le e^{\ell [\oH(p)-\eps]})\le e^{-\delta\ell}\, .
\end{align}
Further, the same bound holds (with a different $\delta(\eps,c,C)$)
$(\bY(j))_{j\ge 1}$ are independent not identically distributed, if
there exist a finite set $(q^{a}_i)_{i\ge 1,a\in[K]}$, $K\le \ell^{C_0}$, and a map $b:\naturals\to[K]$
such that $\bY(j)\sim q_1^{b(j)}\otimes \cdots\otimes q_\ell^{b(j)}$.
\end{lemma}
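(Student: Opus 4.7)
The plan is to combine an AEP-style concentration bound for $-\log \prob(\bX)$ with a simple union bound over the $t$ trials.

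First, I would establish concentration of $\sum_{i=1}^\ell Z_i$ around $\ell \oH(p)$, where $Z_i := -\log p_i(X_i)$. Note that $\E Z_i = H(p_i)$, $\E Z_i^2 \le C$ by hypothesis, and $Z_i \ge |\log(1-c)|$ since $p_i(x) \le 1-c$. Setting $\tilde Z_i := H(p_i) - Z_i$, these are independent, zero-mean, satisfy $\Var(\tilde Z_i) \le C$, and are bounded from above by $M := |\log(1-c)| + \log|\cX|$ (since $-Z_i \le \log(1-c)$ and $H(p_i) \le \log|\cX|$). A Chernoff argument using Lemma \ref{lemma:QuadraticBound} on the MGF of $\lambda \tilde Z_i$ (or equivalently a Bennett/Bernstein bound) then yields
\begin{equation*}
\prob\left( \sum_{i=1}^\ell Z_i < \ell \oH(p) - \ell \eps/2 \right) \le e^{-\delta_1 \ell}
\end{equation*}
for some $\delta_1 = \delta_1(\eps,c,C) > 0$.

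Next, define the typical set $\cE := \{\bx \in \cX^\ell : \prod_{i=1}^\ell p_i(x_i) \le e^{-\ell(\oH(p) - \eps/2)}\}$. The concentration bound yields $\prob(\bX \notin \cE) \le e^{-\delta_1 \ell}$. For any $t \ge 1$, using the independence of each $\bY(j)$ from $\bX$ and union-bounding over $j$:
\begin{align*}
\prob(T \le t,\; \bX \in \cE)
&\le \sum_{j=1}^t \sum_{\bx \in \cE} \prob(\bY(j) = \bx)\, \prob(\bX = \bx)\\
&\le t \cdot \max_{\bx \in \cE} \prob(\bX = \bx)\; \le\; t \cdot e^{-\ell (\oH(p) - \eps/2)}.
\end{align*}
Setting $t = e^{\ell(\oH(p) - \eps)}$ and combining, we get $\prob(T \le t) \le e^{-\delta_1 \ell} + e^{-\ell \eps/2}$, which gives the claim with $\delta = \tfrac12 \min(\delta_1, \eps/2)$.

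The extension to independent but non-identically-distributed $\bY(j)$ is essentially free: the union-bound step above uses only that each $\bY(j)$ is independent of $\bX$, not anything about the joint or marginal law across $j$. The finite-family hypothesis $K \le \ell^{C_0}$ is therefore not strictly required for the stated inequality; I expect it appears because the cited application demands the constant $\delta$ to be uniform over the family of distributions indexed by $b(\cdot)$. The main technical obstacle is the concentration step: since the $Z_i$ are non-identically distributed and have possibly unbounded upper tails, one must exploit the one-sided bound $p_i \le 1-c$ to control the MGF of $-Z_i$ rather than of $Z_i$, lest one be left with only a Chebyshev (polynomial-tail) bound that would be too weak for the required exponential decay.
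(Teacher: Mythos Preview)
Your proof is correct and follows the same broad strategy as the paper: split according to whether the ``hit probability'' is atypically large, use a one-sided Chernoff bound (via Lemma~\ref{lemma:QuadraticBound}) for the atypical part, and a union bound over the $t$ trials for the typical part. The difference is in \emph{which} probability you threshold on. The paper conditions on $\bX=\bx$, bounds $\prob(T\le t\mid\bX=\bx)\le t\,\prob(\bY=\bx)=t\prod_i q_i(x_i)$, and must then control $\sum_i\log(1/q_i(X_i))$; this requires a H\"older step to eliminate the $q_i$'s and reduce to Chernoff on the $p_i$'s. You instead bound $\sum_{\bx\in\cE}\prob(\bY(j)=\bx)\prob(\bX=\bx)\le \max_{\bx\in\cE}\prob(\bX=\bx)$, so the threshold is on $\prod_i p_i(x_i)$ from the start and no H\"older is needed. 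This pays off in the non-i.i.d.\ extension: because your bound never touches the law of $\bY(j)$ beyond $\sum_{\bx}\prob(\bY(j)=\bx)\le 1$, the inequality holds for \emph{arbitrary} distributions of the $\bY(j)$'s (independent of $\bX$), and the finite-family hypothesis $K\le\ell^{C_0}$ is indeed superfluous for the stated conclusion. In the paper's argument that hypothesis is genuinely used---a union bound over the $K$ types contributes a factor $\ell^{C_0}$ that must be absorbed into the exponential---so your route is a small but real simplification. (Your speculation about why the hypothesis appears is slightly off: it is an artifact of the paper's H\"older-plus-type-union approach, not a uniformity requirement.) One cosmetic point: your upper bound $\tilde Z_i\le |\log(1-c)|+\log|\cX|$ is looser than necessary (the sharp bound is $\log|\cX|+\log(1-c)$), but this does not affect the argument.
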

\begin{proof}
We denote by $\bY$ a vector distributed as $\bY(i)$. 
Conditional on $\bX=\bx$, $T$ is a geometric random variables with mean 
$1/(1-\prob(\bY=\bx))$. Hence, for $t_{\ell}(\eps):=e^{\ell [\oH(p)-\eps]}$,
\begin{align*}
\prob(T\le t_{\ell}(\eps)|\bX=\bx)&=1-(1-\prob(\bY=\bx))^{t_\ell(\eps)}\\
& \le t_{\ell(\eps)}\prob(\bY=\bx)\, .
\end{align*}
Hence
\begin{align}
\prob\big(T\le t_{\ell}(\eps)\big)&\le e^{-\ell\eps/2} + 
\prob\Big(\prob(\bY=\bX|\bX)\ge t_{\ell}(\eps/2)^{-1}\Big)\\
&= e^{-\ell\eps/2}+ P_{\ell}(\eps/2)\, ,\\
P_{\ell}(u)&:=\prob\Big(\sum_{i=1}^{\ell}\log \frac{1}{q_i(X_i)}< \sum_{i=1}^{\ell}H(p_i)-\ell\, u\Big)\, .
\end{align}
By Chernoff bound, for any $\lambda\ge 0$, $P_{\ell}(u)\le \exp\{-\ell\phi(\lambda,u)\}$, where
\begin{align}
\phi(\lambda,u):=\lambda u-\frac{1}{\ell}\sum_{i=1}^{\ell}\big[\lambda H(p_i)
+\log \E\big[q_i(X_i)^{\lambda}\big]\, .
\end{align}
By H\"older inequality, for 
$\lambda\in[0,1]$ we have $\E\big[q_i(X_i)^{\lambda}\big]\le (\sum_x p(x)^{\beta})^{1/\beta}$
where $\beta =1/(1-\lambda)$. Therefore
\begin{align*}
\psi(\lambda;p) &:= \lambda H(p)+(1-\lambda) \log\Big(\sum_{x\in\cX}p(x)^{1/(1-\lambda)}\Big) \\
& = (1-\lambda) \log \E_{X\sim p}\exp\Big(\frac{\lambda}{1-\lambda}\big(\log p(X)+H(p)\big)\Big)\, .
\end{align*}
Consider the random variable $Z_i := \frac{\lambda}{1-\lambda}\big(\log p_i(X_i)+H(p)\big)$
where $X_i\sim p_i$. Under the assumptions of the lemma, for $\lambda\in [0,1/2]$
we have $\E(Z_i)=0$ and 
\begin{align}
Z_i & \le \log(1-c) +H(p) \le \log\big[|\cX|(1-c)\big]\, ,\\
\E[Z_i^2] & \le \left(\frac{\lambda}{1-\lambda}\right)^2 
\sum_{x\in\cX}p_i(x) (\log p_i(x))^2\le 4C\lambda^2\, ,
\end{align}
Using Lemma \ref{lemma:QuadraticBound}, we get
\begin{align}
\psi(\lambda;p_i) & = (1-\lambda)\log \E e^{Z_i}\\
& \le  (1-\lambda)\log\big(1+c_0 \E(Z_i^2)\big)\\
& \le \log (1+c_*\lambda^2)\, ,
\end{align}
 whence
\begin{align*}
\phi(\lambda,u)& \ge \lambda u- \log (1+c_*\lambda^2)\, .
\end{align*}
By maximizing this expression over $\lambda$, we find that 
$P_{\ell}(\eps/2)\le \exp(-\delta_0(\eps)\ell)$ which completes the proof for the case
of i.i.d. vectors $\bY(j)$.

The case of non-identically distributed vectors follows by union bound over $a\in [K]$.
\end{proof}

\begin{lemma}\label{lemma:HittingTime-UB}
Let  $(p_i)_{i\ge 1}$, be probability distributions on 
$\cX$, with 
$\sup_{i\ge 1}\max_{x\in\cX}p_i(x)\le 1-c$, and 
$\sup_{i\ge 1}\sum_{x\in\cX} p_i(x)(\log p_i(x))^2\le C$ for constants $c, C$.

Let $(X_i)_{i\le \ell}$ be independent random variables with $X_i\sim p_i$,
$\bX= (X_1,\dots,X_{\ell})$.
 Let $\bY(j)\in\cX^{\ell}$, $j\ge 1$ be a sequence of 
i.i.d. copies of $\bX$.  
Finally, let $T:= \min\{t\ge 1: \; \bY(t) = \bX\}$.

Then, for any $\eps>0$, there exists $\delta=\delta(\eps,c,C)>0$, such that
(letting $\oH(p) := \ell^{-1}\sum_{i=1}^\ell H(p_i)$)
\begin{align}
\prob(T\ge e^{\ell [\oH(p)+\eps]})\le e^{-\delta\ell}\, .
\end{align}
\end{lemma}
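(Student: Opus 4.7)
The plan is to condition on $\bX$ and reduce to a Chernoff-type tail bound for $-\log p(\bX)$, where $p(\bx) := \prod_i p_i(x_i)$. Given $\bX$, the waiting time $T$ is geometric with success probability $p(\bX)$, so $\prob(T \ge t \mid \bX) = (1 - p(\bX))^{t-1} \le \exp(-(t-1)\,p(\bX))$, whence
\begin{align*}
\prob\bigl(T \ge t_\ell(\eps)\bigr) \le \E\bigl[\exp(-(t_\ell(\eps)-1)\,p(\bX))\bigr],\qquad t_\ell(\eps) := e^{\ell[\oH(p)+\eps]}.
\end{align*}
The natural split is according to the typicality event $\cE := \{-\log p(\bX) \le \ell[\oH(p) + \eps/2]\}$ and its complement. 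On $\cE$ we have $p(\bX) \ge e^{-\ell[\oH(p)+\eps/2]}$, so $(t_\ell(\eps)-1)\,p(\bX) \gtrsim e^{\ell\eps/2}$ and the integrand is doubly-exponentially small; on $\cE^c$ the integrand is bounded by $1$. Thus it remains to show $\prob(\cE^c) \le e^{-\delta\ell}$ for some $\delta = \delta(\eps, c, C) > 0$.

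Decomposing $-\log p(\bX) - \ell\,\oH(p) = \sum_{i=1}^{\ell} Z_i$ with $Z_i := -\log p_i(X_i) - H(p_i)$ independent and mean zero, the problem reduces to an upper-tail Chernoff bound on $\sum Z_i$. The key technical step is a sub-Gaussian-type control of the log-MGF,
\begin{align*}
\log \E[e^{\lambda Z_i}] \le c'\lambda^2,\qquad \lambda \in [0, 1/2],
\end{align*}
for a constant $c' = c'(|\cX|)$. I would obtain this by applying $e^y - 1 - y \le (y^2/2)\,e^y$ to $y = \lambda W_i$ with $W_i := -\log p_i(X_i) \ge 0$, producing
\begin{align*}
\E[e^{\lambda W_i}] - 1 - \lambda H(p_i) \le \frac{\lambda^2}{2} \sum_{x\in\cX} p_i(x)^{1-\lambda} (\log p_i(x))^2,
\end{align*}
and then noting that for $\lambda \le 1/2$ the summand is bounded pointwise by $p_i(x)^{1/2} (\log p_i(x))^2 \le M$, where $M := \sup_{p \in (0,1]} p^{1/2}(\log p)^2 < \infty$. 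Since $\log(1+u) \le u$, this gives the claimed quadratic bound after subtracting $\lambda H(p_i)$. A standard Chernoff step with $\lambda = \min(\eps/(4c'), 1/2)$ then yields $\prob(\cE^c) \le \exp(-\delta\ell)$, completing the proof.

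The main obstacle, by contrast with Lemma \ref{lemma:HittingTime}, is precisely this MGF estimate. In that lemma, the analogous bound for the lower tail was straightforward because $q_i(X_i)^\lambda \le 1$ for $\lambda \ge 0$ makes the relevant exponential moment a bounded random variable, so the quadratic envelope $e^x \le 1 + x + c_0 x^2$ applied immediately. Here, $p_i(X_i)^{-\lambda}$ is unbounded above, and the fix is to absorb the blow-up into the factor $(\log p_i)^2$, exploiting the fact that $p^{1/2}(\log p)^2$ remains bounded on $(0, 1]$.
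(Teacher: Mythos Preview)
Your proof is correct and follows the same overall structure as the paper's: condition on $\bX$ to reduce to a geometric waiting time, split on the typicality event $\{-\log p(\bX)\le \ell[\oH(p)+\eps/2]\}$, and control the atypical event via a Chernoff bound on $\sum_i Z_i$ with $Z_i=-\log p_i(X_i)-H(p_i)$.

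The one point of divergence is the MGF estimate. The paper asserts that $W_i:=-\log p_i(X_i)-H(p_i)\le C$ almost surely and then invokes Lemma~\ref{lemma:QuadraticBound}; this upper bound on $W_i$ is not actually implied by the stated hypotheses (it would require a uniform lower bound on $p_i(x)$, which is true in the applications since $Q$ is fixed but is not listed among the assumptions). You instead use $e^y-1-y\le (y^2/2)e^y$ for $y\ge 0$ together with $\sup_{p\in(0,1]}p^{1/2}(\log p)^2<\infty$, which avoids any almost-sure boundedness at the price of a constant depending on $|\cX|$. Since $\cX$ is fixed throughout the paper this is harmless, and your treatment is arguably more self-contained with respect to the lemma as stated. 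Neither the assumption $\max_x p_i(x)\le 1-c$ nor the second-moment bound $\sum_x p_i(x)(\log p_i(x))^2\le C$ is actually needed in your argument for this direction; they are used only for the companion lower-tail Lemma~\ref{lemma:HittingTime}.
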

\begin{proof}
The proof follows the same argument as for Lemma \ref{lemma:HittingTime}.
Denote by $\bY$ a vector distributed as $\bY(i)$. 
and define  $t_{\ell}(\eps):=e^{\ell [\oH(p)+\eps]}$,
\begin{align*}
\prob(T\ge t_{\ell}(\eps)|\bX=\bx)&=(1-\prob(\bY=\bx))^{t_\ell(\eps)}\\
& \le \exp\Big(-t_{\ell(\eps)}\prob(\bY=\bx)\Big)\, .
\end{align*}
Hence
\begin{align}
\prob\big(T\ge t_{\ell}(\eps)\big)&\le \exp\Big\{-e^{\ell\eps/2}\Big\} + 
\prob\Big(\prob(\bY=\bX|\bX)\le t_{\ell}(\eps/2)^{-1}\Big)\\
&\le   e^{-\ell\eps/2} +\tP_{\ell}(\eps/2)\, ,\\
\tP_{\ell}(u)&:=\prob\Big(\sum_{i=1}^{\ell}\log \frac{1}{p_i(X_i)}\ge  \sum_{i=1}^{\ell}H(p_i)+\ell\, u\Big)\, .
\end{align}
We  claim that, for each $u>0$, $\tP_{\ell}(u)\le e^{-\delta_0(u)\ell}$ for some $\delta_0(u)>0$.
Indeed, using again Chernoff's bound, we get, for any $\lambda\ge 0$,
$\tP_{\ell}(u) \le e^{-\ell \tphi(\lambda,u)}$, where
\begin{align}
\tphi(\lambda,u)& :=\lambda u-\frac{1}{\ell}\sum_{i=1}^{\ell}\tpsi(\lambda;p_i)\, ,\\
\tpsi(\lambda;p_i)& := \log\E \exp(\lambda W_i)\, , \;\;\;\; W_i := \log\frac{1}{p_i(X_i)}-H(p_i)\, .
\end{align}
where in the last line $X_i\sim p_i$. Under the assumptions of the lemma,
 $W_i\le   C$  almost surely and applying again Lemma \ref{lemma:QuadraticBound}, we get
$\tpsi(\lambda;p_i)\le \log(1+c_*\lambda^2)$ for $\lambda\le 1$. The proof is completed by selecting for each 
$u>0$, $\lambda>0$ so that $\lambda u-\log(1+c_*\lambda^2)>0$.
\end{proof}

We are now in position to prove Lemma \ref{lemma:NoCol}.
\begin{proof}[Proof of Lemma \ref{lemma:NoCol}]
We begin by proving the bound \eqref{eq:UB-ell}.

Fix $i\le m$, $j\le n'$, $\bu\in \cL^m$, $\delta>0$, and write $\ell = \ell(\delta,u_i)$.
Define $R_{ij}:=\{(i,j'): \max(1,j-\ell)\le j'\le j-1\}$ and 
$S_{ij}:= \{(i',j'): i'<i \mbox{ or } i'=i , j'<j-\ell\}$.
Finally, for $t\in\{0,\dots,\ell-1\}$, let
 $S_{ij}(t):= S_{ij}\cap\{(i',j'): \<i'j'\> = t \mod \ell \}$.

By union bound
\begin{align*}
\prob\big(\cE_{i,j}(\ell)\big|\bu\big)& \le A+\sum_{t=0}^{\ell-1}B(t)\, ,\\
A& :=\sum_{(rs)\in R_{ij}}\prob\big(\bX_{\<rs\>}^{\<rs\>+\ell-1} = \bX_{\<ij\>}^{\<ij\>+\ell-1}\big|\bu\big)\, ,\\
B(t)&:=\prob\Big(\exists(r,s)\in S_{ij}(t) :\;  \bX_{\<rs\>}^{\<rs\>+\ell-1} = \bX_{\<ij\>}^{\<ij\>+\ell-1}
\Big|\bu\Big)\, .
\end{align*}
Now, by the bound of Eq.~\eqref{eq:ClaimZ}, 
\begin{align}
A\le \ell\cdot(1-c)^{\ell} \le C\, N^{-\eps}\, ,
\end{align}
for suitable constants $C$, $\eps$.

Next, for any $t\in \{0,\dots,\ell-1\}$,  the vectors 
$\{\bX_{\<rs\>}^{\<rs\>+\ell-1}\}$ are mutually independent and independent
of $\{\bX_{\<ij\>}^{\<ij\>+\ell-1}\}$. Conditional on $\bu$, the coordinates of
$\bX_{\<rs\>}^{\<rs\>+\ell-1} = (X_{\<rs\>}, \cdots, X_{\<rs\>+\ell-1} )$
are independent with marginal distributions $X_{\<r's'\>}\sim Q_{x|u}(\,\cdot\,|u_{r'})$
(note that independence of the coordinates holds because $\ell<m/2$
and therefore $\bX_{\<rs\>}^{\<rs\>+\ell-1}$ does not include two entries in the same column).
Note that the collection of marginal distributions 
$Q_{x|u}(\,\cdot\,|u)$, $u\in\cL$ satisfies the conditions of Lemma 
\ref{lemma:HittingTime} by assumption. Further, the vector 
$\bX_{\<rs\>}^{\<rs\>+\ell-1}$ can have at most one of $K=|\cL|^2(\ell+1)$ distributions
(depending on the latents value and the occurrence of a line break in the block.)

Applying  Lemma 
\ref{lemma:HittingTime}, we obtain:
\begin{align}
B(t)\le e^{-\eps_0\ell}\le C\, N^{-\eps}
\end{align}
Summing over $t\in\{0,\dots,\ell-1\}$ and adjusting the constants yields the claim \eqref{eq:UB-ell}.

Next consider the bound \eqref{eq:LB-ell}. Fix $\bu\in\cL^m$, $i\le m$,$j\le n'$,
and write $\ell = \ell(-\delta,u_i)$ for brevity below
\begin{align}
\prob\big(\cE_{i,j}^c(\ell)|\bu\big)
\le \prob\Big(\forall (i',j')\in S_{ij}(t) \mbox{ s.t. }  u_{i'} = u_i, j'<n':
\bX_{\<i'j'\>}^{\<i'j'\>+\ell-1}\neq \bX_{\<ij\>}^{\<ij\>+\ell-1}\Big|\bu\Big)\, .
\label{eq:LowerBoundLemma}
\end{align}
Here $t\in\{0,\dots,\ell-1\}$ can be chosen arbitrarily. 
Let $S_{ij}(t;\bu):= \{(i',j')\in S_{ij}(t) \mbox{ s.t. }  u_{i'} = u_i, j'<n'\}$.
 Conditional 
on $\bu$, the vectors $(\bX_{\<i'j'\>}^{\<i'j'\>+\ell-1})_{(i',j')\in S_{ij}(t;\bu)}$
are i.i.d.  and independent of 
$\bX_{\<ij\>}^{\<ij\>+\ell-1}$. Further, they are distributed as
$\bX_{\<ij\>}^{\<ij\>+\ell-1}$. Finally,
\begin{align*}
N_{ij}(\bu):= \big| S_{ij}(t;\bu)
\big|
&\ge \frac{n\big(m_i(u)-C\log N\big)}{\ell}\\
&\ge \frac{m_i(u)n}{C\log N}-C'n\, .
\end{align*}
where $m_i(u)$ is the number rows $i'<i$ such that $u_{i'}=u$.
Since $i\ge i$ and $\prob(u_{i'}=u)\ge\min_{u'}\ro(u')>0$, 
by Chernoff bound there exist constants $C,c_0$ such that,
 for all $m,n$ large enough (since $i\ge m_0$)
\begin{align}
\prob\Big(N_{ij}(\bu)\ge \frac{c_0m_0n}{\log N}\Big)\ge 1-Ce^{-m_0/C}\, .
\end{align}
Further, for any $\delta>0$ we can choose positive constants $\eps_0,\eps_1>0$
such that the following holds for all $m,n$, large enough
\begin{align}
 \frac{c_0m_0n}{\log N}\ge N^{1-\eps_1}\ge e^{\ell[H(X|U=u_i)+\eps_0]} 
\end{align}
Let $T_{ij}$ be the rank of the first $(i',j')$ (moving backward)
in the set defined above such that 
$\bX_{\<i'j'\>}^{\<i'j'\>+\ell-1}=\bX_{\<ij\>}^{\<ij\>+\ell-1}$,
and $T_{ij}=\infty$ if no such vector exists.
We can continue from Eq.~\eqref{eq:LowerBoundLemma} to
get
\begin{align*}
\prob\big(\cE_{i,j}^c(\ell)\big)&\le 
\prob(T_{ij}\ge N_{ij}(\bu))\\
&\le \prob\big(T_{ij}\ge N_{ij}(\bu);\; N_{ij}(\bu) \ge e^{\ell[H(X|U=u_i)+\eps_0]} \big)+ Ce^{-m_0/C}\\
&\stackrel{(a)}{\le}  \exp\big\{-\delta_0\min_{u\in\cL}\big(\ell(-\delta;u)\big)\big\} +Ce^{-m_0/C} 
\le C N^{-\eps}\, ,
\end{align*}
where in $(a)$ we used Lemma \ref{lemma:HittingTime-UB}. This completes the proof
of Eq.~\eqref{eq:LB-ell}.
\end{proof}

\subsection{Proof of Lemma \ref{lemma:Col}}

We begin by considering the bound \eqref{eq:UB-ell-c}.
 
Fix $i\le m$, $j\le n'_c$, $\bu\in \cL^m$,
$\bv\in \cL^n$, $\delta>0$, and write $\ell = \ell_c(\delta,u_i)$,
$n'= n'_c$. By union bound:
\begin{align*}
\prob\big(\cF_{i,j}(\ell)\big|\bu,\bv\big)& = \prob\Big(
\cup_{s\in [n], |j-s|<\ell}\bB(s)\Big|\bu,\bv\Big)\, ,\\
\cB(s) &:= \Big\{\exists r<i: \;\; \bX_{\<rs\>}^{\<rs\>+\ell-1} =\bX_{\<ij\>}^{\<ij\>+\ell-1}\Big\}\, .
\end{align*}
Note that for a fixed $s$, and conditional on $\bu$, $\bv$,
the vectors $(\bX_{\<rs\>}^{\<rs\>+\ell-1})_{1\le s\le i-1}$ are mutually independent and independent of 
$\bX_{<ij>}^{<ij>+\ell-1}$. Further, $\bX_{\<rs\>}^{\<rs\>+\ell-1}$ has independent
coordinates with marginals $X_{\<r's'\>}\sim Q_{x|u}(\,\cdot\,|u_{r'},v_{s'})$
(recall that we are conditioning both on $\bu$ and $\bv$).
In particular, the marginal distributions satisfy the assumption of Lemma
 \ref{lemma:HittingTime} and the law of $\bX_{\<rs\>}^{\<rs\>+\ell-1}$ can take one of
  $K=|\cL|^2(\ell+1)$ possible values.
Letting $i-T(s)$ the last row at which  
$\bX_{\<rs\>}^{\<rs\>+\ell-1} = \bX_{\<ij\>}^{\<ij\>+\ell-1}$ (with $T(s)\ge i$ if no such row
exists), we have, for some constants $C,c_0>0$,
  \begin{align*}
  \prob\big(\cF_{i,j}(\ell)\big|\bu,\bv\big)& \le \prob\Big(
\cup_{s\in [n], |j-s|<\ell}\{T(s)\le  i-1\}\cap \{i-1\le e^{\ell [\oH-\eps]}\}\Big|\bu,\bv\Big)
+\bfone(i-1>e^{\ell [\oH-\eps]})\\
& \stackrel{(a)}{\le} 2\ell \, e^{-\ell\eps}+\bfone(m>e^{\ell [\oH-\eps]})\\
& \le Cm^{-c_0\eps}+\bfone(m>e^{\ell [\oH-\eps]})\, ,
  \end{align*}
  where in $(a)$ we used Lemma
 \ref{lemma:HittingTime}, and we defined $\oH: = \ell^{-1}\sum_{k=j}^{j+\ell-1}H(X|U=u_i,V=v_k)$.
 
 Taking expectation with respect to $\bv$, we get
  \begin{align*}
  \prob\big(\cF_{i,j}(\ell)\big|\bu\big)&\le Cm^{-c_0\eps}+
  \prob\Big(\frac{1}{\ell}\sum_{k=j}^{j+\ell-1}H(X|U=u_i,V=v_k) <
  \frac{1}{1+\delta}(H(X|U=u_i,V)+\eps)\Big)\\
  &\stackrel{(a)}{\le} Cm^{-c_0\eps}+ e^{-\ell\eps}\le  C' m^{-c_0\eps}\, ,
   \end{align*}
where in $(a)$ we used Chernoff bound. This completes the proof of Eq.~\eqref{eq:UB-ell-c}.

Finally, the proof Eq.~\eqref{eq:LB-ell-c} is similar to the one of  Eq.~\eqref{eq:LB-ell}.
We fix $\bu\in\cL^m$, $i\le m$, $j\le n'_c$,
and write $\ell = \ell_c(-\delta,u_i)$.
\begin{align}
\prob\big(\cF_{i,j}^c(\ell)|\bu\big)
\le \prob\Big(\forall i'<i u_{i'} = u_i:
\bX_{\<i'j\>}^{\<i'j\>+\ell-1}\neq \bX_{\<ij\>}^{\<ij\>+\ell-1}\Big|\bu\Big)\, .
\label{eq:LowerBoundLemma-c}
\end{align}
Let $S^c_{ij}(\bu):= \{(i',j)\; \mbox{ s.t. }  u_{i'} = u_i, i'<i\}$.
 Conditional 
on $\bu,\bv$, the vectors $(\bX_{\<i'j'\>}^{\<i'j'\>+\ell-1})_{(i',j')\in S^c_{ij}(u)}$
are i.i.d.  and independent copies of 
$\bX_{\<ij\>}^{\<ij\>+\ell-1}$. Finally,
$N^c_{i}(\bu):= \big| S_{ij}^c(\bu) \big|$
is the number rows $i'<i$ such that $u_{i'}=u$.
By Chernoff bound there exist constants $C,c_0$ such that,
 for all $m,n$ large enough (recalling that we need only to consider $i\ge m_0$)
\begin{align}
\prob\Big(N^c_{i}(\bu)\ge c_0m_0\Big)\ge 1-Ce^{-m_0/C}\, .
\end{align}
Since $m_0 \ge m^{1-o_n(1)}$, for any $\delta>0$ we can choose constants
$\eps_0,\eps_1>0$ so that
\begin{align}
c_0m_0 \ge m^{1-\eps_1} \ge e^{\ell[H(X|U=u_i,V)+2\eps_0]}\, .
\end{align}
Recall the definition $\oH: = \ell^{-1}\sum_{k=j}^{j+\ell-1}H(X|U=u_i,V=v_k)$.
By an an application of Chernoff bound

\begin{align}
\prob\Big(N^c_{i}(\bu)\ge e^{\ell[H(X|U=u_i,V)+\eps_0]}\Big)\ge 1-Cm^{-\eps}-Ce^{-m_0/C}\, .
\end{align}
Let $T_{i}$ be the rank of the first $i'$ (moving backward)
in the set defined above such that 
$\bX_{\<i'j\>}^{\<i'j\>+\ell-1}=\bX_{\<ij\>}^{\<ij\>+\ell-1}$,
and $T_{i}=\infty$ if no such vector exists.
From Eq.~\eqref{eq:LowerBoundLemma-c} we get
\begin{align*}
\prob\big(\cF_{i,j}^c(\ell)\big)&\le 
\prob(T_{i}\ge N_{i}(\bu))\\
&\le \prob\big(T_{i}\ge N_{i}(\bu);\; N_{i}(\bu) \ge e^{\ell[H(X|U=u_i,V)+\eps_0]} \big)+ 
Cm^{-\eps}\\
&\stackrel{(a)}{\le}  
\exp\big\{-\delta_0\min_{u\in\cL}\big(\ell_c(-\delta;u)\big)\big\} +C m^{-\eps}   
\le 2C m^{-\eps}\, ,
\end{align*}
where in $(a)$ we used Lemma \ref{lemma:HittingTime-UB}. 

\section{Proofs for latent-based encoders}
\label{sec:ProofLatent}

\subsection{Proof of Lemma \ref{lemma:Consistency}}
\label{sec:ProofConsistency}

\subsubsection{General bound \eqref{eq:GeneralUB-Consistency}}
\label{sec:ProofGeneralUBConsistency}

From Eq.~\eqref{eq:LatentGen}, we get
\begin{align}
\Rate_{\slat}(\bX) = 
\frac{1}{mn\log_2|\cX|}\Big\{ \len({\sf header} ) +
\len(\zip_{\cL}(\hbu))+\len(\zip_{\cL}(\hbv))
+\sum_{u,v\in\cL}\len(\zip_{\cX}(\hbX(u,v)))\Big\} \wedge 1\, ,
\end{align}
where $\hbX(u,v) := \vec\big(X_{ij}:\; \hu_i(\bX)=u,\hv_j(\bX)=v\big)$
are the estimated blocks of $\bX$. 
Note that this rate depends on the base compressors $\zip_{\cL}$,
$\zip_{\cX}$ but we will omit these from our notations.

Define the `ideal' expected compression rate (i.e. the rate achieved by a compressor 
that is given the latents):
\begin{align*}
\Rate_{\#} & :=\frac{1}{mn\log_2|\cX|}\Big\{
\E[\len({\sf header} )] +
\E[\len(\zip(\bu))]+\E[\len(\zip(\bv))]
+\sum_{u,v\in\cL}\E[\len(\zip(\bX(u,v)))]
\Big\}\, .
\end{align*}
Since $\Rate_{\slat}(\bX)\le 1$ by construction, we have
\begin{align*}
\E\,\Rate_{\slat}(\bX) &\le 
\E\,
\big\{\Rate_{\slat}(\bX) \, 
\bfone_{\Acc_U(\bX;\hbu)=1}\bfone_{\Acc_V(\bX;\hbv)=1}\big\}
+\prob\big( \Acc_U(\bX;\hbu)<1\big) +
\prob\big( \Acc_V(\bX;\hbv)<1\big)\\
&\stackrel{(*)}{\le} \Rate_{\#}+\prob\big( \Acc_U(\bX;\hbu)<1\big) +
\prob\big( \Acc_V(\bX;\hbv)<1\big)\, ,
\end{align*}
where in step $(*)$ we bounded
$\E[\len(\zip(\hbu))\bfone_{\Acc_U(\bX;\hbu)=1}]=
\E[\len(\zip(\bu))\bfone_{\Acc_U(\bX;\hbu)=1}]\le\E[\len(\zip(\bu))]$,
because, on the event $\{\Acc_U(\bX;\hbu)=1\}$, $\hbu$ coincides with
$\bu$ up to relabelings, and the compressed length is invariant under relabelings.
Similar arguments were applied to $\len(\zip(\bv))$ and $\len(\zip(\bX(u,v)))$.

We now have, by the definition of $\Delta_{\zip}(N;k)$ in Eq.~\eqref{eq:DefOverhead}, 
\begin{align}
\frac{\E[\len(\zip(\bu))]}{mn\log_2|\cX|}&\le \frac{H(U)}{n\log_2|\cX|}+
+\frac{1}{n}\Delta_{\zip}(m\wedge n;\{r,c\})\, ,\\
\frac{\E[\len(\zip(\bv))]}{mn\log_2|\cX|}&\le \frac{H(V)}{m\log_2|\cX|}+
+\frac{1}{m}\Delta_{\zip}(m\wedge n;\{r,c\})\, ,\\
\frac{\E[\len(\zip(\bX(u,v)))|\bu,\bv]}{mn\log_2|\cX|}
&\le \hro(u)\hco(v)\frac{H(X|U=u,V=v)}{\log_2|\cX|}
+\Delta_{\zip}(c\cdot mn;\{Q(\,\cdot\,|u,v)\}_{i,v\in\cL})\, ,
\end{align}
where in the last line $\hr$ is the empirical distribution of the row latents and 
$\hc$ is the empirical distribution of the column latents.
By taking expectation in the last expression, we get
\begin{align}
\sum_{u,v\in\cL}\frac{\E[\len(\zip(\bX(u,v)))|\bu,\bv]}{mn\log_2|\cX|}
\le \frac{H(X|U,V)}{\log_2|\cX|}
+|\cL|^2\Delta_{\zip}(c\cdot mn;\{Q(\,\cdot\,|u,v)\}_{u,v\in\cL})\, .
\end{align}
Finally, the header contains $|\cL|^2+2$ integers of maximum size $mn$,
whence $\len({\sf header} )\le 4\log_2(mn)$. We conclude that
\begin{align*}
\Rate_{\#}\le &\frac{1}{\log_2|\cX|}\Big\{H(X|U,V)+\frac{1}{n}H(U)+\frac{1}{n}H(V)\Big\}
 +\frac{2\log_2(mn)}{mn}\\
&+
 |\cL|^2\Delta_{\zip}(c\cdot mn;\{Q(\,\cdot\,|u,v)\}_{u,v\in\cL})+2\Delta_{\zip}(m\wedge n;\{r,c\})
 \, .
\end{align*}
The claim \eqref{eq:GeneralUB-Consistency} follows from the first bound in
Eq.~\eqref{eq:Fano} noticing that, under the stated assumptions on $m,n$,
\begin{align}
\frac{1}{n}
 \big[\bentro(\eps_U)+\eps_u\log(|\cL|-1)\big]\le \eps_U\le \prob\big( \Acc_U(\bX^{m,n};\hbu)<1\big)\, .
 \end{align}

\subsubsection{Redundancy bounds for  specific encoders: 
Eqs.~\eqref{eq:LZ-overhead}--\eqref{eq:ANS-overhead}}
\label{sec:Redundancy}

\noindent{\bf LZ coding.} 
 Let $\bX^N = (X_1,\dots,X_N)$ be a vector with i.i.d. symbols $X_i\sim q$ 
 with $q$ a probability distribution over $\cX$. 
 The analysis is similar to the one in Appendix \ref{sec:ProofLZ}, and we will adopt the same 
 notations here. There are two important differences:
 data are i.i.d. (not matrix-structured) and  we want to derive a sharper estimate 
 (not just the entropy term, but bounding the overhead as well).
 
We define $L_k(\bX^N)$, $T_k(\bX^N)$ as per Eqs.~\eqref{eq:LK-def}, \eqref{eq:TK-def}. 
We let $(k(1),\dots, k(M_N))$ be the values taken by $k$ 
in the while loop of the Lempel-Ziv pseudocode of Section \ref{sec:LZ}. In particular 
\begin{align}
k(1) & = 1\, ,\\
k(\ell+1) & = k(\ell)+L_{k(\ell)}(\bX^N)\, ,\\
k(M_N)  & = N\, .
\end{align}
(We set $k(0) = 0$ by convention.) 
Therefore the total length of the code is 
\begin{align*}
\len(\LZ(\bX^{N})) & = M_N\lceil\log_2 (N+|\cX|)\rceil +\sum_{\ell=1}^{M_N} \len(\elias(L_{k(\ell)}))\\
&\le M_N\lceil\log_2 (N+|\cX|)\rceil + 2 \sum_{\ell=1}^{M_N} \log_2(L_{k(\ell)})\\
& \le  M_N\lceil\log_2 (N+|\cX|)\rceil + 2 M_N \log_2(N/M_N)\, ,
\end{align*}
where the last step follows by Jensen's inequality. By one more application of Jensen,
we obtain
\begin{align}
\E\, \Rate_{\LZ}(\bX^{N})\le
\frac{1}{\log_2|\cX|} \cdot \frac{\E M_N}{N}\cdot \big\{\lceil\log_2 (N+|\cX|)\rceil
 + 2 \log_2(N/\E M_N)\big\}\, .\label{RateLZBound}
\end{align}

Define the set of break points and bad positions as 
\begin{align}
S_N &:=\big\{k(1),k(2),\dots,k(M_N)\big\}\, ,\\
B_N(\ell)& := \big\{k\in [N/2,N]:\; L_k(\bX^N)<\ell\big\}\, .
\end{align}
Note that $S_N =S^{\le }_N\cup S^{> }_N$ where:
\begin{align}
S^{\le}_N &:=\Big\{k(j):\; j\le M_N, k(j-1)\le \lfloor N/2\rfloor \Big\}\, ,\;\;\;\;\;
S^{>}_N &:=\Big\{k(j):\; j\le M_N, k(j-1)> \lfloor N/2\rfloor\Big\}\, .
\end{align}
Further $|S^{\le}_N|\ed M_{\lfloor N/2\rfloor}$ and, for any $\ell\in\naturals$,
\begin{align}
\frac{N}{2} \ge \sum_{k\in S_N^{>}} L_{k} \ge \big(|S^>_N|-|B_N(\ell)|\big)\ell\, .
\end{align}
Therefore, 
\begin{align}
\E\, |S^{>}_N|&\le \frac{N}{2\ell}+ \E\, |B_N(\ell)|\nonumber\\
& \le \frac{N}{2\ell}+\sum_{k = \lceil N/2\rceil}^N \prob\big( L_k(\bX^N)< \ell\big)\, .
\label{eq:ESN}
\end{align}
We claim that this implies, for $C_0=20 c_*\log|\cX|$  and $\log N \ge (2\log(2/H(q)))^2$,
\begin{align}
\frac{1}{N}\E\, |S^{>}_N|&\le \frac{ H(q)}{2\log_2 N}+C_0
 \frac{(\log\log_2 N)^{1/2}}{(\log_2 N)^{3/2}}=: \psi(\log_2 N)\, .\label{eq:Technical-LZ-Claim}
\end{align}
Before proving this claim, let us show that it implies the thesis. 
Recall that $M_N=|S_N|$ and $S_N =S^{\le }_N\cup S^{> }_N$ where
 $|S^{\le}_N|\ed M_{\lfloor N/2\rfloor}$. Therefore, we have proven
 \begin{align}
 \E M_N &\le N\psi(\log_2 N) +  \E M_{\lfloor N/2\rfloor} \nonumber\\
&  \le \sum_{\ell=0}^{K-1}N_\ell\psi(\log_2 N_\ell) + \E M_{N_K}\, , \label{eq:RecMN}
 \end{align}
where we defined recursively $N_{0}=N$, $N_{\ell+1}=\lfloor N/2\rfloor$, and
$K:= \min\{\ell: \; \log_2 N_{\ell}< (2\log(2/H(q)))^2 \}$. Of course, $M_{N_K}\le N_K\le
\exp((2\log 2/H(q))^2)$. Further $\uN_{\ell}\le N_{\ell} \le \oN_{\ell}$, 
where $\uN_{0}=\oN_0=N$ and $\oN_{\ell+1}=\oN_{\ell}/2$, $\uN_{\ell+1}=
(\uN_{\ell}-1)/2$ for  $\ell\ge 0$. We thus get  $\uN_{\ell}= (N+1)2^{-\ell}-1$,
$\oN_{\ell} = N\, 2^{-\ell}$ and therefore 
 \begin{align*}
 \frac{1}{N}\sum_{\ell=0}^{K-1}N_\ell\psi(\log_2 N_\ell)  &\le  \frac{1}{N}\sum_{\ell=0}^{\infty}\oN_{\ell}
 \psi(\log_2 \uN_{\ell})\\
 &\le \frac{H(q)}{2}\sum_{\ell=0}^{\infty}2^{-\ell}\frac{1}{\log_2 (N 2^{-\ell}-1)} +
 C_0\sum_{\ell=0}^{\infty}2^{-\ell}\frac{(\log\log_2 N)^{1/2}}{(\log_2 (N 2^{-\ell}-1))^{3/2}}\\
 &\le \frac{H(q)}{\log_2 N} + 2 C_0 \frac{(\log\log_2 N)^{1/2}}{(\log_2 N )^{3/2}}\, .
 \end{align*}
 Substituting in Eq.~\eqref{eq:RecMN}, we get
 \begin{align*}
\frac{1}{N} \E M_N &\le  \frac{H(q)}{\log_2 N} + 2 C_0 \frac{(\log\log_2 N)^{1/2}}{(\log_2 N )^{3/2}}
+\frac{1}{N}\exp\big\{\big(2\log (2/H(q))\big)^2\big\}\\
 &\le \frac{H(q)}{\log_2 N} + 3 C_0 \frac{(\log\log_2 N)^{1/2}}{(\log_2 N )^{3/2}}\, ,
\end{align*}
where the last inequality follows for $N\ge \exp\big\{\big(4\log (2/H(q))\big)^2\big\}$
(noting that $C_0>1$). Finally, the desired bound \eqref{eq:LZ-overhead}
follows by substituting the last estimate in Eq.~\eqref{RateLZBound}.

We are left with the task of proving claim \eqref{eq:Technical-LZ-Claim}.
Fix any $k$, $\lceil N/2\rceil \le k\le N$ and write $q^{\ell}$ for
the product distribution $q\times \cdots\times q$ ($\ell$ times). 
Setting $H=H_{\snats}(q)$ (measuring here entropy in nats), for any $\delta>0$,
\begin{align}
 \prob\big( L_k(\bX^N)< \ell\big) &=\prob\Big(\bX_{i}^{i+\ell-1}\neq \bX_{k}^{k+\ell-1}
 \; \forall i<k\Big)\\
 &\le   
 \sum_{\bx^{\ell}\in\cX^{\ell}}\prob\big( Z_{k}(\bx^{\ell}) = 0\big)\cdot
 \prob\big( \bX_{k}^{k+\ell-1} = \bx^{\ell}\big)\nonumber\\
 & \le \sum_{\bx^{\ell}\in\cX^{\ell}}q^{\ell}(\bx^{\ell})\big(1-q^{\ell}(\bx^{\ell})\big)^{N/2\ell}
 \nonumber\\
 &\le \sum_{\bx^{\ell}\in\cX^{\ell}}q^{\ell}(\bx^{\ell})\; 
 \bfone\big(q^{\ell}(\bx^{\ell})\le e^{-\ell[H+\delta]}\big)
 + \exp\Big\{-\frac{N}{2\ell}\cdot e^{-\ell[H+\delta]}\Big\}\nonumber\\
 &=:P_{\le}(\ell;\delta)+P_{>}(\ell,N;\delta)\label{eq:Psum}
 \,.
 \end{align}
 By Chernoff bound
\begin{align*}
 P_{\le}(\ell;\delta) &\le e^{-\ell\max_{\lambda>0}\psi_{\delta}(\lambda)}\, ,\label{eq:Chernoff}\\
 \psi_{\delta}(\lambda) & := \lambda[H+\delta]-\log\Big\{\sum_{x\in\cX}q(x)^{1-\lambda}\Big\}\, .
 \end{align*}
 Note that $\lambda \mapsto\psi_{\delta}(\lambda)$ is continuous, concave,
 with $\psi_{\delta}'(0)=\delta$, $\psi_{\delta}(0)=0$,  $\psi_{\delta}(1)=\delta+H-\log|\cX|$. 
 Hence (assuming $H<\log|\cX|$ because otherwise there is nothing to prove) for all $\delta$
  small enough $\psi$ is maximized for $\lambda\in (0,1)$. 
 Further, defining the random variable $Q=q(x)$ for
 $x\sim\Unif(\cX)$, 
 \begin{align}
 \psi''_{\delta}(\lambda) &= -\frac{\E[Q^{1-\lambda} (\log Q)]}{\E[Q^{1-\lambda}]}
 +\Big[\frac{\E[Q^{1-\lambda} (\log Q)^2]}{\E[Q^{1-\lambda}]}\Big]^2 \\
 &\ge  -\frac{\E[Q^{1-\lambda} (\log Q)^2]}{\E[Q^{1-\lambda}]}\\
 &\ge  -\E[(\log Q)^2] =: -\underline{c}_*(q)\, .
 \end{align}
 Here the last inequality holds because $Q\mapsto Q^{1-\lambda}$ is monotone increasing
 (for $\lambda\in [0,1]$) and $Q\mapsto (\log Q)^2$ is monotone decreasing over $Q\in [0,1]$,
 and therefore $\E[Q^{1-\lambda} (\log Q)^2]\le \E[Q^{1-\lambda}]\,\E[(\log Q)^2]$.
 In what follows, we set $c_* :=  \underline{c}_*(q)\wedge 1$.

 Hence $\psi_{\delta}(\lambda) \ge \delta\lambda-c_*\lambda^2/2$ for $\lambda\in [0,1]$ and
 therefore using Eq.~\eqref{eq:Chernoff},
\begin{align*}
 P_{\le}(\ell;\delta) &\le \exp\Big\{-\ell \min\Big(\frac{\delta^2}{2c_*};\delta-\frac{c_*}{2}\Big)\Big\}
 \, .
 \end{align*}
 
 Substituting in Eq.~\eqref{eq:Psum}, and using this in Eq.~\eqref{eq:ESN},
 we get, for $\delta\in [0,c_*]$:
\begin{align*}
 \frac{1}{N}\E|S^{>}_N|&\le \frac{1}{2\ell}+ \exp\Big\{-\frac{\ell\delta^2}{2c_*}\Big\}+
 \exp\Big\{-\frac{N}{2\ell}\cdot e^{-\ell[H+\delta]}\Big\}
 \end{align*}
We set 
\begin{align*}
\ell = \frac{\log N}{H}(1-\eps)\, ,\;\;\;\;\; \delta = \frac{1}{2} H\eps\, ,
\end{align*}
for $\eps\le (2c_*/H)\wedge (1/2)$. Substituting in the previous bound, we get
\begin{align*}
 \frac{1}{N}\E|S^{>}_N|&\le \frac{H}{2\log N}(1+2\eps)+
 \exp\Big\{-\frac{H\eps^2\log N}{16c_*}\Big\}+
 \exp\Big\{-\frac{HN^{(\eps+\eps^2)/2} }{2\log N}\Big\}\, .
\end{align*}
We finally select $\eps = c_0(c_*\log|\cX|/H)(\log\log N/\log N)^{1/2}$, with $c_0$ a
sufficiently small absolute constant. Substituting above, 
\begin{align*}
 \frac{1}{N}\E|S^{>}_N|- \frac{H}{2\log N} &\le c_0\, c_*\, \log|\cX|
 \frac{(\log\log N)^{1/2}}{(\log N)^{3/2}}+
 \exp\Big\{-\frac{c_0^2(\log|\cX|)^2c_*}{16 H}\log\log N\Big\}\\
 &\phantom{AAAAAA}+ \exp\Big\{-\frac{H}{2\log N}e^{(c_0c_*\log|\cX|/2H)\sqrt{\log N}}\Big\}\\
 &\le c_0\, c_*\, \log|\cX|
 \frac{(\log\log N)^{1/2}}{(\log N)^{3/2}} +(\log N)^{-c_0^2/16}
 +\exp\Big\{-\frac{H}{2\log N}e^{(c_0/2)\sqrt{\log N}}\Big\}\, .
\end{align*}
Setting $c_0=5$, we get
\begin{align*}
 \frac{1}{N}\E|S^{>}_N|- \frac{H}{2\log N} &\le 6\, c_*\, \log|\cX|
 \frac{(\log\log N)^{1/2}}{(\log N)^{3/2}}
  +\exp\Big\{-\frac{H}{2\log N}e^{2\sqrt{\log N}}\Big\}\, ,
\end{align*}
whence, the claim \eqref{eq:Technical-LZ-Claim} follows for $\log N\ge (\log(2/H))^2$. 

\noindent{\bf Arithmetic Coding.} 
In Arithmetic Coding (AC) we encode the empirical distribution of $\bX^N$
$\hq_N(x): =  N^{-1}\sum_{i\le N}\bfone_{X_i=x}$, and then encode $\bX^N$
in at most $-\log_2\hq^{N}(\bX^N)+1$ bits.
The  encoding of $\hq_N$ amounts to encoding the $|\cX|-1$ integers $N\hq_N(x)$,
$x\in\cX\setminus\{0\}$ (assuming that $0\in\cX$, one of the counts can be obtained by difference). 
We thus have
\begin{align*}
\len(\AC(\bX^{N})) & \le  -\log_2\hq_N^{\otimes N}(\bX^N) +1 +
\sum_{x\in\cX}\len(\elias(N\hq_N(x)))\\
& \le  -\log_2\hq^{\otimes N}_N(\bX^N) + 2|\cX|\log_2 N\\
& = \sum_{i=1}^N-\log_2\hq_N(X_i) + 2|\cX|\log_2 N\\
& = N\, H(\hq_N)  + 2|\cX|\log_2 N\, .
\end{align*}
Taking expectations
\begin{align*}
\E\Rate_{\AC}(\bX^{N}) & \le \frac{\E \, H(\hq_N)}{\log_2 |\cX|} + \frac{2|\cX|\log_2 N}{N\log_2|\cX|}\\
& \le \frac{H(q)}{\log_2 |\cX|} + \frac{2|\cX|\log_2 N}{N\log_2|\cX|}\, .
\end{align*}

\noindent{\bf ANS Coding.}  The bound \eqref{eq:ANS-overhead} follows for range ANS coding 
from the analysis of \cite{duda2009asymmetric,duda2013asymmetric,kosolobov2022efficiency},
where encoding of empirical distributions are analyzed as for AC coding. 

\subsection{Proof of Theorem \ref{thm:MainLatent}}

The proof consists in applying Lemma \ref{lemma:Consistency}
and showing that $\prob\Big( \Acc_U(\bX^{m,n};\hbu)>0\Big)\le\log(mn)/mn$,
$\prob\Big( \Acc_V(\bX^{m,n};\hbv)>0\Big)\le\log(mn)/mn$.

In what follows we will assume without loss of generality $m\le n$, and 
recall that $|\cL|=k$,  identifying $\cL= \{1,\dots, k\}$. We will assume
$k$ fixed. We will use $C, c, c',\dots$ for constants that might depend on $k$, 
as well as the constant $c_0$ in the statement
in ways that we do ot track.

 We will show that these bounds hold conditional on $\bu$, $\bv$,
on the events $\min_u \hro(u), \min_v\hro(v)\ge c/2$ which holds with probability
at least $1-\exp(-c'm)\ge 1- \log(mn)/mn$. Hence, hereafter we will treat $\bu,\bv$ as deterministic.
Recall that $\bM\in \reals^{m\times n}$ is the matrix entries
with 
\begin{align*}
M_{ij} = 
\psi(X_{i,j})\, ,
\end{align*}
and let $\bM_* = \E\{\bM\}$. We collect a few facts about $\bM$ and its expectation.

\vspace{0.15cm}

\noindent{\bf Singular values.} Note that $\bM_*$ takes the form 
\begin{align}
\bM_* = \bL\bPsi\bR^{\sT}\, ,
\end{align} 
where $\bPsi\in \reals^{r\times r}$ is a matrix with entries $\bPsi_{u,v} = \opsi(u,v)$,
$\bL\in\{0,1\}^{m\times r}$, with $L_{ij} = 1 \Leftrightarrow u_i=j$,
and $\bR\in\{0,1\}^{n\times r}$, with $R_{ij} = 1 \Leftrightarrow v_i=j$.
Define $\bL = \bL_0\bD_L^{1/2}$ where $\bD_L$ is a diagonal matrix with 
$(\bD_{L})_{ii} = m\hro(i)$, and analogously  $\bR = \bR_0\bD_L^{1/2}$,
and introduce the singular value decomposition $\bD_L^{1/2} \bPsi\bD_R^{1/2}=
\obA\bSigma\obB^{\sT}$. We then have the singular value decomposition
\begin{align}
\bM_* = \bA_*\bSigma\bB_*^{\sT}\, , \;\;\;\; \bA_* = \bL_0\obA\,,\;\;  \bB_* = \bR_0\obB\, .
\end{align} 
Therefore $\sigma_{k}(\bM_*) \ge
 \sigma_{\min}(\bD_L)^{1/2}\sigma_{\min}(\bPsi) \sigma_{\min}(\bD_R)^{1/2}$
 (here and below $\sigma_{k}$ denotes the $k$-th largest singular value)
 and using the assumptions on $\hro,\hco$, 
\begin{align}
\sigma_{k}(\bM_*)& \ge c\mu\sqrt{mn}\, .
\end{align}

\vspace{0.15cm}

\noindent{\bf Concentration.} $\bM-\bM_*$ is a centered matrix with independent entries with variance 
bounded by $\sigma^2$ and entries bounded by $1$ (by the assumption $|\psi(x)|\le 1$).
By matrix Bernstein inequality there exists a universal constant $C$
such that the following holds with probability at least $1-(100n)^{-2}$:
\begin{align}
\|\bM-\bM_*\|_{\op}\le C\max\Big(\sigma\sqrt{n\log n}; \log n\Big)\, .
\end{align}

\vspace{0.15cm}

\noindent{\bf Incoherence.} Since all the entries of $\bM_*$ are bounded by $1$, we get
\begin{align}
\|\bM_*\|_{2\to\infty}\vee \|\bM^{\sT}_*\|_{2\to\infty} \le \nu\sqrt{n}\, .
\end{align}

\vspace{0.15cm}

\noindent{\bf Row concentration.} For any $i\le m$, and any $\bW\in\reals^{n\times k}$ fixed,
 with probability at least $1-(100n)^{-5}$:
\begin{align*} 
\|(\bM-bM_*)_{i,\cdot}\bW\|_2\le C \max\Big(\sigma\|\bW\|_F\sqrt{\log n} ; 
\|\bW\|_{2\to\infty}\log n\Big)\, .
\end{align*}
Defining $\Delta_*:=\sigma_{k}(\bM_*) \ge c\mu\sqrt{mn}$, this implies
\begin{align*} 
\|(\bM-bM_*)_{i,\cdot}\bW\|_2&\le \Delta_*\|\bW\|_{2\to\infty}\varphi\left(\frac{\|\bW\|_{F}}{\sqrt{n}\|\bW\|_{2\to\infty}}\right)\, ,\\
\varphi(x) &:= \frac{C}{\mu\sqrt{mn}}\max\big( x\sigma\sqrt{n\log n} ;\log n \big)\, .
\end{align*}

\vspace{0.15cm}

Given these, we apply \cite[Corollary 2.1]{abbe2020entrywise}, with the following
estimates of various parameters (see \cite{abbe2020entrywise} for definitions):
\begin{align*}
\Delta_* & \asymp \mu\sqrt{mn}\, ,\\
\gamma & \lesssim \max\left(
\frac{\sigma}{\mu}\sqrt{\frac{\log n}{m}} ;
\frac{\log n}{\mu\sqrt{mn}} \right)\lesssim \frac{\sigma}{\mu}\sqrt{\frac{\log n}{m}}\, ,\\
\|\bM_*\|_{2\to\infty}\vee \|\bM^{\sT}_*\|_{2\to\infty} & \le \nu\sqrt{n} \lesssim \gamma\Delta_*\, ,\\
\varphi(1) &\lesssim \frac{\sigma}{\mu}\sqrt{\frac{\log n}{m}} \, ,\\
\varphi(\gamma) &\lesssim  \max\left(\frac{\sigma^2}{\mu^2}\frac{\log n}{m}; 
\frac{\log n}{\mu\sqrt{mn}}\right) \, ,\\
\kappa & \asymp 1\, .
\end{align*}
Then  \cite[Corollary 2.1]{abbe2020entrywise} implies that there exists a $k\times k$
orthogonal matrix $\tbQ$ such that
\begin{align}
\frac{\|\bA- \bA_*\tbQ\|_{2\to \infty}}{\|\bA_*\|_{2\to \infty}}
&\lesssim (1+\varphi(1))(\gamma+\varphi(\gamma))+\varphi(1)\\
&\lesssim\frac{\sigma}{\mu}\sqrt{\frac{\log m}{n}}\, .
\end{align}
Recall that $\bA_* = \bL_0\obA$ and $\obA$ is an othogonal matrix. 
Therefore, there exists an orthogonal matrix $\bQ$ such that (with the desired probability):
\begin{align}
\frac{\|\bA- \bL_0\bQ\|_{2\to \infty}}{\|\bL_0\|_{2\to \infty}}
\lesssim\frac{\sigma}{\mu}\sqrt{\frac{\log m}{n}}\, .
\end{align}
Further, the $i$-th row of $\bL_0$ is 
\begin{align}
(\bL_0)_{i,\cdot} = \frac{1}{\sqrt{m\hro(u_i)}}\bfe^{\sT}_{u_i}=: z(u_i)\bfe^{\sT}_{u_i}\, .
\end{align}
Hence, for any $i$,  $\sqrt{c_0}\|\bL_0\|_{2\to \infty}\le \|(\bL_0)_{i,\cdot}\|\le
 \|\bL_0\|_{2\to \infty}$. Denoting by $\bq_{j}$ the $j$-th row of $\bQ$, we thus get
 for all $i$, 
\begin{align}
\frac{\|\ba_i- z(u_i)\bq_{u_i}\|_{2}}{\|\ba_i\|_{2}}
\lesssim\frac{\sigma}{\mu}\sqrt{\frac{\log m}{n}}\, .
\end{align}
The claim follows immediately using the fact that $\sqrt{c_0}\max_uz(u)\le \min_u z(u)\le \max_u z(u)$.
%
%

\end{document}